\newif\ifSC
\SCtrue
\ifSC
\documentclass[onecolumn,draftclsnofoot,12pt]{IEEEtran}
\else
\documentclass[10pt,twocolumn]{IEEEtran}
\fi
\usepackage[normalem]{ulem}
\newcommand{\vehicles}{\mathrm{v}}
\newcommand{\xsec}{\mathrm{x}}
\newcommand{\drisopt}{r_{\ris,\mathrm{o}}}

\newcommand{\drisoptb}{r'_{\ris,\mathrm{o}}}
\newcommand{\distHorzBU}{r_{\ue\bs}}
\newcommand{\distHorzBR}{r_{\ris\bs}}
\renewcommand{\distHorzBR}{r_{\ris}}
\newcommand{\distHorzNN}{r_{\bs\nbr}}

\newcommand{\distHorzUR}{r_{\ue\ris}}

\newcommand{\distBU}{d_{\ue\bs}}
\newcommand{\distBR}{d_{\ris\bs}}
\newcommand{\distUR}{d_{\ue\ris}}
\newcommand{\gBU}{g_{\ue\bs}}

\newcommand{\gUR}{g_{\ue\ris}}

\newcommand{\ellBU}{\ell_{\ue\bs}}
\newcommand{\ellBR}{\ell_{\ris\bs}}
\newcommand{\ellUR}{\ell_{\ue\ris}}
\renewcommand{\ellBU}{\ell_{\bs}}
\renewcommand{\ellBR}{\ell_{\mathrm{d}}}
\renewcommand{\ellUR}{\ell_{\ris}}

\newcommand{\iue}{\mathsf{i}}
\newcommand{\distHorzBIU}{r_{\iue\bs}}

\newcommand{\distHorzIUR}{r_{\iue\ris}}

\newcommand{\E}{\mathsf{E}}
\newcommand{\ptx}{p_\mathrm{t}}
\newcommand{\snrD}{S}
\newcommand{\sinrD}{S_\mathrm{I}}
\newcommand{\snrV}{S'}
\newcommand{\sinrV}{S'_\mathrm{I}}
\newcommand{\intfD}{I}
\newcommand{\intfV}{I'}

\newcommand{\noisen}{\sigma^2}
\newcommand{\noiseP}{N_0}

\newcommand{\pcu}{\mathcal{P}_\UE(\gamma)}
\renewcommand{\pcu}{\mathsf{p}_\UE(\gamma)}
\newcommand{\pcuI}{\mathsf{p}^{'}_{\UE}(\gamma)}

\newcommand{\PhiR}{\Phi_1}
\newcommand{\dR}{d_1}
\newcommand{\PhiL}{\Phi_2}
\newcommand{\dL}{d_2}

\newcommand{\subIIIsection}[1]{\textit{#1}}

\usepackage{amsmath}
\usepackage{graphicx,wrapfig}
\usepackage{amsmath, amssymb, amsthm}
\usepackage{verbatim}
\usepackage{url}
\usepackage[font=footnotesize]{caption}
\usepackage{subcaption}
\usepackage{algcompatible,lipsum}
\usepackage{bm}
\usepackage{bbm}
\usepackage{amssymb}
\usepackage{amsmath,amsthm}
\usepackage{color}
\usepackage{graphicx}
\usepackage{verbatim}
\usepackage{epstopdf}

\newtheorem{theorem}{Theorem}[]
\newtheorem{corollary}{Corollary}[]
\newtheorem{proposition}{Proposition}[]
\newtheorem{lemma}[]{Lemma}
\newtheorem{remark}{Remark}[]

\usepackage[nolist,nohyperlinks]{acronym}
\setlength{\belowcaptionskip}{-10pt}
\usepackage{cite}
\usepackage{bbm}
\usepackage{color}

\usepackage{amsmath}
\usepackage{enumitem}


\newcommand{\figspaceadjust}{\vspace{-.21in}}

\newcommand{\jp}[1]{q_{\mathrm{B,S}}(#1)}

\newcommand{\bs}{\mathrm{b}}
\newcommand{\ue}{\mathrm{u}}
\newcommand{\ris}{\mathrm{s}}
\newcommand{\nbr}{\mathrm{n}}
\newcommand{\snrthreshold}{\gamma}
\newcommand{\coefone}{\beta_1}

\renewcommand{\coefone}{\beta}




\newcommand{\sinr}{\SINR}

\newcommand{\A}{ \set{A} }

\newcommand{\dd}{\mathrm{d}}

\newcommand{\fracS}[2]{#1/#2}
\newcommand\expect[1]{\mathbb{E}\left[#1\right]}
\newcommand\prob[1]{\mathbb{P}\left[#1\right]}

\newcommand\indside[1]{\mathbbm{1}\left({#1}\right)}

\newcommand{\SINR}{\text{SINR}}

\newcommand{\expects}[2]{\mathbb{E}_{#1}\left[#2\right] }

\newcommand{\laplace}[1]{\mathcal{L}_{#1} }
\newcommand{\laplaces}[2]{\mathcal{L}_{#1} \left(#2\right)}
\newcommand{\ie}{{\em i.e.}~}

\newcommand{\rx}{\mathrm{R}}



\newcommand{\beam}{\mathrm{b}}

\newcommand{\expU}[1]{e^{#1}}

\newcommand{\expS}[1]{\exp{\left(#1\right)}}

\def\home{\hbox{\kern3pt \vbox to13pt{}%
   \pdfliteral{q 0 0 m 0 5 l 5 10 l 10 5 l 10 0 l 7 0 l 7 5 l 3 5 l 3 0 l f
               1 j 1 J -2 5 m 5 12 l 12 5 l S Q }%
   \kern 13pt}}

\newcommand{\figwidth}{0.470\textwidth}

\renewcommand{\SINR}{\mathtt{SINR}}

\newcommand{\set}[1]{\mathsf{#1}}

 \newcommand{\UE}{\mathrm{U}}
 \renewcommand{\UE}{\ue}

\ifSC
\renewcommand{\figwidth}{0.5\linewidth}
\else
\renewcommand{\figwidth}{0.8\linewidth}
\fi
\newcommand{\figwidthSbS}{0.45\linewidth}

\renewcommand{\figspaceadjust}{}

\renewcommand{\jp}{\mathcal{B}_\ue}
\renewcommand{\jp}{\mathsf{q}_\ue}
\newcommand{\jpD}[1]{\mathsf{q}_\bs(#1) }
\newcommand{\jpV}[1]{\mathsf{q}_{\bs\ris}(#1)}

 \graphicspath{{}{figs/}}
 
\title{On the Deployment of Reconfigurable Intelligent Surfaces in the Presence of  Blockages}

\author{ Vikrant Malik, Gourab Ghatak, Abhishek K. Gupta, Sanket S. Kalamkar 
\thanks{ V. Malik is with the Department of EE, California Institute of Technology, Pasadena, USA. G. Ghatak is with 
the Department of EE, IIT Delhi, India. A. Gupta is with the Department of EE, IIT Kanpur, India. S. S. Kalamkar is with Qualcomm Inc., San Diego, USA.
(E-mail: vmalik@caltech.edu, gghatak@ee.iitd.ac.in, gkrabhi@iitk.ac.in, sanket.kalamkar.work@gmail.com). Part of the paper was presented in IEEE PIMRC 2021 \cite{ghatak2021deploy}.
}}

\begin{document}
\maketitle

\begin{abstract}
    Wireless communications aided by  reconfigurable intelligent surfaces (RISs) is a promising way to improve the coverage for cellular users. The controlled reflection of signals from RISs is especially useful in mm-wave/THz networks when the direct link between a cellular user and its serving base station (BS) is weak or unavailable due to blockages. However, the joint blockage of the user-RIS and the user-BS links may significantly degrade the performance of RIS-aided transmissions. This paper aims to study the impact of joint blockages on downlink performance. When RIS locations are coupled with BS locations, using tools from stochastic geometry, we obtain an optimal placement of RISs to either minimize the joint blockage probability of the user-RIS and the user-BS links or maximize the downlink coverage probability. The results show that installing RISs near the cell edge of BSs usually provides optimal coverage. Moreover, deploying RISs on street intersections improves the coverage probability. For users associated with BSs that are deployed sufficiently close to intersections, the intersection-mounted RISs offer a better coverage performance as compared to BS-coupled RISs.
\end{abstract}

\section{Introduction}
Seamless connectivity, ubiquitous coverage, and high-speed data rates drive the search for new physical layer technologies \cite{saad2019vision,thzchap2021}. A promising step towards achieving these goals is the use of reconfigurable intelligent surfaces (RISs),
realized either by composite meta-materials or by close packing of passive antenna elements~\cite{basar2019wireless}. An RIS can be programmed to function as 
 an anomalous reflector \cite{basar2019wireless,taha2019enabling,bjornson2019intelligent}, where the reflected wave from the RIS can be steered to either enhance signal coverage or localize and track cellular users. In the presence of blockages, RISs can be used to provide an alternate path in case the direct link between a transmitter and a receiver is blocked. However, it is important to optimize the RIS deployment to achieve its full potential. 

\subsection{Related Work}
Since the use of RISs for wireless communications is relatively new, 
the research focus has been restricted to the understanding of its key features and properties in simple networks consisting of a few 
\acp{BS}, users, and RISs~\cite{renzo_tut, nemati2020ris}.  In a large-scale network, the works~\cite{zhang2021multi,zhang2020reconfigurable} have developed  stochastic geometry-based models to evaluate RIS-aided multi-cell networks where multiple RISs are deployed around each user. Note that \ac{mm-wave} communications are significantly impacted by blockages that are an integral part of a dense urban environment~\cite{GupAndHea2017}. Thus it is important to include their impact in the performance analysis \cite{AndrewsmmWaveTut2016, Bai2014}.     
There has been a limited amount of work that includes the blockage effect in the analysis of RISs for large systems. For example, the authors in \cite{zhu2020stochastic} considered the impact of blockages on the user performance with an independent deployment of RISs 
under severe blocking that 
renders all BS-UE links NLOS. 
The authors in \cite{liu2020ris} studied the optimal configurations of RISs for a deterministic channel. 

The deployment of RISs with respect to the BS and the user dictates the end-to-end (i.e., BS to the user via the RIS) path loss, which in turn, decides the communication performance~\cite{zhang_IRS}. In particular, the end-to-end path loss varies as the product of the path losses of the individual (i.e., BS to the RIS and the RIS to the user) links~\cite{ozdogan20pathloss}. Accordingly, it is minimized when an RIS is installed in close proximity to either the BS or the user. However, such a deployment strategy increases the probability that the BS and the RIS are simultaneously blocked from the perspective of the user. Consequently, the RIS cannot be of much use in case the BS is blocked leading to the increased chances of outage. This is due to the fact that blocking of wireless links (the obstruction in the visibility state of the  links) in a network are not independent and there exists a correlation among them. However, past works have not studied the impact of link blockages on the performance of systems with RISs, including the scenarios when there is a correlation between the blockage of the links from a BS and an RIS to the user. This inhibits the derivation of realistic system design insights for \ac{mm-wave} cellular networks. A recent work has characterized the joint blockage probability of two co-existing links in a 2D \ac{mm-wave} cellular network using random shape theory~\cite{GupAndHea2017} and studied its impact on the coverage probability~\cite{GupGup2020GC}. The work in \cite{9369268} has analyzed the joint blockages in RIS-user and BS-user links from the perspective of user localization. However, it does not consider the impact of interference on the  network performance.

Due to the correlation between the blocking of user-BS and user-RIS links, it is crucial to place an RIS in such a way that simultaneous blocking of the two links can be avoided. To achieve this goal, it is important to include blockage correlation in the analysis, which is the main focus of this paper.

\subsection{Contributions}
In this paper, we characterize the performance of a downlink mm-wave cellular network consisting of road-side BSs, aided by RISs.  We focus on deriving the optimal deployment of RISs to improve the network performance in terms of connection reliability and coverage probability. 
This paper makes the following contributions:
\begin{enumerate}
\item We develop an analytical framework for a downlink mm-wave network consisting of RISs-aided-BSs in an urban scenario with multiple streets oriented randomly {in a 2D space} with  
 BSs   deployed along streets. We consider the presence of obstacles on the streets that can block the connections. For each BS, two RISs are installed on each side of the BS 
 to augment downlink transmissions in such cases. 


\item  We focus on addressing a key issue in RIS-aided mm-wave networks: the joint blockages of the user-RIS and the user-BS links which can result in the connection failure. Utilizing the developed framework, we study the impact of blockages on the connection failure probability.

\item Using tools from stochastic geometry, we obtain an analytical expression of the SNR and SINR coverage probabilities while considering the blocking correlation among serving and interfering links. 

\item To minimize the impact of the joint blocking of wireless links on the connection failure probability and SNR/SINR coverage probability, we optimize the deployment strategy for RISs. 

\item 

To further enhance the link reliability, we  extend our analysis to include an additional set of RISs deployed at all intersections of streets. 
\item We provide numerical results highlighting the impact of various system parameters on the performance that offers several insights helpful in the efficient deployment of RISs in the network.
\end{enumerate}

{\textbf{Notation:}
 $\Bar{\E}$ denotes the complementary event of an event $\E$. Unless stated otherwise, $h$ and $r$ are used to denote vertical and horizontal distances, while the Euclidean distances are denoted by $d$. The parameters corresponding to the typical user, its serving BS, and the associated RIS are generally differentiated by subscripts $\ue$, $\bs$, and $\ris$, respectively. Similarly, the intersection user and RIS are denoted by subscripts $\iue$ and $\xsec$, respectively. The notation
 $F_{n}(x)=e^{-x }\sum\nolimits_{q=0}^{n-1}x^{q}/q!$ denotes the complimentary CDF (CCDF) of the Gamma \ac{RV} with parameter $n$.
 }

\section{System Model}
\label{sec:SM}
We consider a downlink mm-wave cellular network consisting of BSs aided by RISs deployed along side streets in a 2D space. The BSs are deployed along streets (e.g., on the lamp posts). 
For each BS, two RISs are installed on each side of the BS to augment downlink transmissions. We consider an urban wireless propagation environment that consists of densely located blockages. 
The system model is as follows.

\textbf{Network geometry.} The street system is modeled by a \ac{PLP} $\Phi_\mathrm{r}$ with intensity $\lambda_\mathrm{r}$ defined on a half-cylinder [$0,2\pi$) $\times$ $\mathbb{R}^+$. The BSs, each of height $h_\mathrm{b}$, are deployed alongside 
the streets. The locations of BSs on the $k$th street  $l_k \in \Phi_\mathrm{r}$ are distributed as points of a one-dimensional \ac{PPP} $\Phi_{\mathrm{b}k}$ with intensity $\lambda_\mathrm{b}$. 
Let $\Phi_\mathrm{b} \triangleq \cup_{k} \Phi_{\mathrm{b}k}$ denote the overall BS point process, which by construction, is a \ac{PLCP}. The direct link between any two points located on different streets $l_k$ and $l_j$, $k \neq j$, is assumed to be blocked by buildings and hence, \ac{NLOS}.

The RIS placement is coupled with BS locations. Specifically, for each BS, two RISs with $N$ elements (or meta-atoms) 
 are deployed at a distance $r_\mathrm{s}$ from the BS on each side of it, and at a height of $h_\mathrm{s}$. We assume that $h_\mathrm{s}$ is always greater than $h_\mathrm{b}$. We consider the typical (general) user  randomly located at the typical street. 
The distance $r_\ris$ between BS and its RISs can be constant or dependent on the cell size of the BS. We analyze these two cases separately.

\textbf{Blockages.} On each street $l_i \in \Phi_\mathrm{r}$, the locations of blockages follow a one-dimensional \ac{PPP} $\Phi_{{\rm v}i}$ with density $\lambda_\mathrm{v}$,  independent of $\Phi_{\mathrm{b}k}$ $\forall k$. Let $h_\mathrm{v}$ denote the height of each blockage. The link between a user and a BS (and similarly the link between a user and an RIS) may be blocked and hence, \ac{NLOS} if a blockage falls on the link. 
A link between a user and  a BS  at distance $r$ is blocked if a blockage is located within a distance of $r\frac{h_\mathrm{v}}{h_\mathrm{b}}$ from the user towards the BS (see Fig. \ref{fig:system}). The probability of this blocking event is given by 
\begin{align}
   p_\mathrm{NLOS}(d)= 1 - \exp\left(- \lambda_\mathrm{v} \frac{h_\mathrm{v}}{h_\mathrm{b}} r\right) \label{eq:blockprob}
\end{align}
due to the void probability of \ac{PPP} \cite{chiu2013stochastic}.
We assume that the received power via a \ac{NLOS} link is negligible~\cite{ghatak2019small}.\footnote{Note that our model can be extended to incorporate the impact of non-negligible \ac{NLOS} signals.}

\begin{figure}[t!]
    \centering
    \includegraphics[width = \figwidth,trim=40 10 40 10, clip]{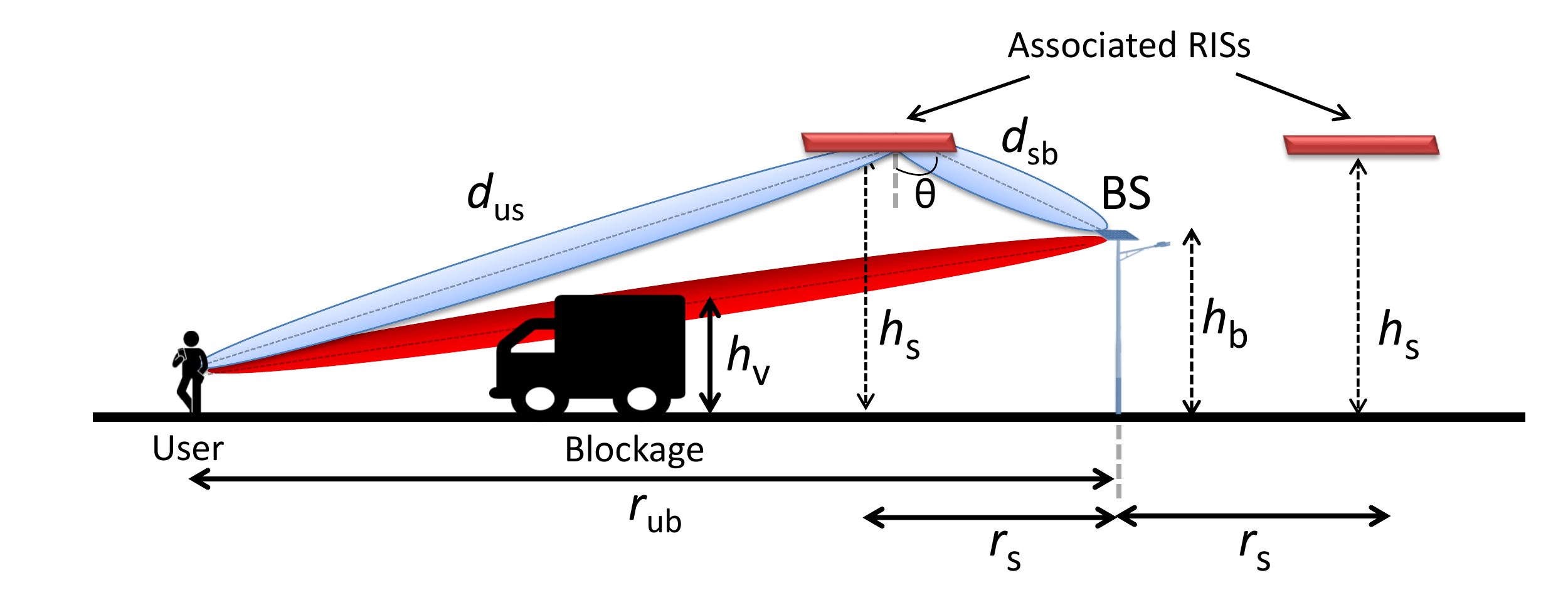}\vspace{-2mm}
    \caption{An illustration of the system model showing one BS and its two RISs serving its user.}
    \label{fig:system}\figspaceadjust
\end{figure}

\textbf{User association.} 
We consider that each user connects to the nearest BS on the same street it is located. The \ac{PDF}  of the horizontal distance $\distHorzBU$ (read as `r user to BS') between the typical  user and the serving/associated BS (which is located at $X_0$) is given as~\cite{chiu2013stochastic}
\begin{align}
    f_{\distHorzBU}(x) = 2\lambda_\mathrm{b} \exp\left(- 2 \lambda_\mathrm{b} x\right), \quad x \ge 0.
    \label{eq:dist_lane}
\end{align}
Without loss of generality, we assume that the associated BS is on the right side of the user.
A BS serves the user via the direct link if it is in the \ac{LOS} state. If the direct link is blocked (\ie is  in the \ac{NLOS} state), the BS uses the reflected link from its RIS located on the same side to serve the user. 
The transmission from the BS to the user experiences an outage if both these links (one from the BS to the user and second from the  RIS to the user) are blocked. 

Let $d_1$ represent the distance of the nearest blockage from the user on the same side as the associated BS. 
On the other hand, the distance of the nearest blockage that is on the opposite side of the associated BS, from the user, is represented by $d_2$. The \ac{PDF}s  of $d_1$ and $d_2$ are the same, and are given by
\begin{align}
    f_{d_i}(x) = \lambda_\mathrm{v} \exp\left(- \lambda_\mathrm{v} x\right), \quad x \ge 0 \quad i \in \{1, 2\}.
    \label{eq:dist_obs}
\end{align}

\textbf{Antenna model.}
The antenna gain patterns of BS and UE are given by $u(\phi)$ and $v(\phi)$, respectively, where $\phi$ is the angle between the antenna orientation and the desired direction. The transmit antenna gain of the $i$th BS is denoted by an \ac{RV} 
$U_i$. Similarly, the receive antenna gain for the same BS is denoted by an \ac{RV} $V_i$. Since each street can be represented by a 1D line, 
 the antenna can point at either side of itself, \ie, 
\begin{align}
u(\phi) \!= \!
    \begin{cases}
    u_m &\!\!\!\! \text{towards the antenna direction}\\
    u_s &\!\!\!\! \text{otherwise}    \end{cases},
    &&
    v(\phi)\!=\!
    \begin{cases}
     v_m &\!\!\! \text{towards the antenna direction.}\\
     v_s &\!\!\! \text{otherwise}.
    \end{cases} \label{eq:excl}
\end{align}

\textbf{Path loss model.}
The instantaneous received power $P_{{\rm r},d}$ at the receiver from an \ac{LOS} transmitter located at distance $d$, is $P_{{\rm r},x} = \ptx\, g_{\rm tr}\,u(\phi) v(\phi) \ell(d)$, where $\ptx$ is the transmit power of a BS and $\ell(d)$ is the path-loss.  $u(\phi)$ and $v(\phi)$ are the transmit and receive beam-forming gains. 
We consider Nakagami fading, \ie, the fading power $g_{\rm tr}$ is Gamma distributed with parameter $n_0$. 
 One example is the standard power-law path-loss model given as
 $\ell(d)= K(d)^{-\alpha}$, where 
 $K = (\frac{\lambda}{4 \pi})^2$ is a constant and $\alpha$ is  the path-loss exponent.
Let $\coefone \triangleq K u_m v_m$. 
Let us also define the following specific path-loss functions
\begin{align*}
\ellBU(r)=\ell(\sqrt{r^2+h^2_\bs})&&
\ellUR(r)=\ell(\sqrt{r^2+h^2_\ris})&&
\ellBR(r)=\ell(\sqrt{r^2+(h_\ris-h_\bs)^2}).
\end{align*}

\textbf{SNR/SINR model for direct link.} 
If the link between the user and the serving BS is \ac{LOS}, the  serving received power $P_{\rm r}$ at this user is $P_{\rm r} = \ptx\, \gBU\,u_m v_m \ell (\distBU)$, where $\distBU=\sqrt {\distHorzBU^2 + h_\mathrm{b}^2}$ is the  distance of the BS to the user and $g_{\rm{bu}}$ denotes the fading power gain on the link between the user and the BS. 
Hence, the user downlink \ac{SNR}  is
\begin{align}
    \snrD = \frac{P_{\rm r}}{\noisen}= \frac{\gBU u_m v_m\ellBU(\distHorzBU) }{\noisen}.
\end{align}
Here, $\noisen=\noiseP/\ptx$ is the noise power. 
Similarly, the SINR, in this case, can be given by
\begin{align}
    \sinrD 
    = 
    \frac{\ellBU(\distHorzBU) g_{\rm{bu}}u_m v_m}{\noisen + \intfD},
    \label{eq:SI}
\end{align}
where $\intfD$ is the interference. 

\textbf{SNR/SINR model for via-RIS link.} 
As mentioned before, if the link from the  user to the BS is blocked and the RIS is in \ac{LOS}, the user is served by the BS via RIS link. The signal power at the user due to this link is given as \cite{kokkoniemi2021stochastic}
\begin{align}
    P_r'  = 
    \gUR  {\ptx u_m  \ellBR(\distHorzBR) \pi (N - 1)^2 } 
    { \ellUR(\distHorzUR) v_m},
    \label{eq:viaRIS1}
\end{align}
where $\distHorzUR $ is the horizontal distance of the user from the RIS, $\distHorzBR $ is the horizontal distance of the RIS from the BS, $\gUR$ is the fading power gain (which is gamma distributed with parameter $n_0$) on the link between the RIS and the user, and $\theta$ is the incidence angle at RIS. 
If we define \begin{align}
\ell_\mathrm{r}(\distHorzBU, r_s)=  \pi  (N - 1)^2   \ellBR(\distHorzBR) \ellUR(\distHorzUR),
\label{eq:lr}
\end{align}
 \eqref{eq:viaRIS1} can be written as
\begin{align}
    P_r'  
    = { \ell_\mathrm{r}(\distHorzBU, r_s) \gUR u_m v_m} .
\end{align}
%
In \eqref{eq:lr}, 
$\distHorzUR=|\distHorzBU-\distHorzBR|$. 
Note that for standard path-loss with $\alpha=2$, 
    $P_r'  = 
     \ptx \coefone \pi K (N - 1)^2  \left({\distUR \distBR }\right)^{-2} \gUR $ with $\distBR  = \sqrt{r_\mathrm{s}^2 + (h_\mathrm{s} - h_\mathrm{b})^2}$ and $\distUR  = \sqrt{(\distHorzBU- r_\mathrm{s})^2 + h_\mathrm{s}^2}$.
Now, the \ac{SNR} and \ac{SINR} at the user are respectively given as 
\begin{align}
    \snrV
    &= \frac{ \ell_\mathrm{r}(\distHorzBU, r_s) g_{\rm{su}}u_m v_m}{\noisen},&
    \sinrV 
    &= \frac{ \ell_{\mathrm{r}}(\distHorzBU, r_s) g_{\rm{su}}u_m v_m}{\noisen + \intfV}, 
    \label{eq:SIp}
\end{align}
where $\intfV$ is the interference. 

\textbf{Performance metrics}. We consider the following  metrics to evaluate the performance of the considered system. 

\textit{Connection failure probability} $\jp{}$ is the probability that the typical user does not have an \ac{LOS} connection from the serving BS and its RIS.
Let $\E_{\rm b}$ and $\E_\mathrm{s}$ denote the events that the links from the user to the BS and the RIS are blocked, respectively. Then, $\jp{}=
    \mathbb{P}(\E_\mathrm{b} \cap \E_\mathrm{s}) $.
 
 \textit{SNR/SINR coverage probability} is defined as the probability that the SNR/SINR of the user is greater than the threshold $\gamma$, \ie, $\mathsf{p}_\mathrm{c}(\gamma)=\prob{\mathtt{SNR}>\gamma}$ and 
$\mathsf{p}'_\mathrm{c}(\gamma)=\prob{\sinr>\gamma}$.

In the next sections, we analyze the performance of the considered system in terms of the aforementioned performance metrics for various RIS deployments including constant and adaptive $r_\ris$. 


\section{Analysis for Fixed Distance Deployment of RISs}
In this section, we consider the case where the distance $r_\ris$ between BS and its RISs is constant. 

\subsection{Characterization of Serving Links Blockings and Connection Failure Probability}

\label{sec:LB}
We now compute the connection failure probability. From \eqref{eq:blockprob}, 
we have the following result.
\begin{lemma}\label{lemma:BSBPGivenrud}
The probability that the link between the typical user and the serving BS is blocked conditioned on horizontal distance $\distHorzBU$ between them is given by~\cite{chiu2013stochastic}
\begin{align}
    \jpD{\distHorzBU}\stackrel{\Delta}{=}
    \mathbb{P}(\E_\bs\mid \distHorzBU) = 1 - \exp\left(- \lambda_\mathrm{v} \distHorzBU{h_\mathrm{v}}/{h_\mathrm{b}}\right). \label{eq:baseblock}
\end{align}
\end{lemma}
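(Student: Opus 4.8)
The plan is to treat this as a direct application of the void probability of a one-dimensional Poisson point process, after translating the geometric blocking rule into the length of the relevant blocking interval. First I would fix the conditioning: given the horizontal separation $\distHorzBU$ between the typical user and its serving BS, both of which lie on the typical street, the only candidate blockers are the points of the blockage process on that street, which form a 1D PPP of intensity $\lambda_\mathrm{v}$, independent of the BS process $\Phi_{\mathrm{b}k}$ by assumption.

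Next I would identify the blocking region by a similar-triangles argument. A blocker of height $h_\mathrm{v}$ at horizontal offset $t$ from the user (toward the BS) obstructs the ray connecting the user at ground level to the BS top at height $h_\mathrm{b}$ precisely when its top rises above that ray, i.e. when $h_\mathrm{v} \ge h_\mathrm{b}\, t / \distHorzBU$, equivalently $t \le \distHorzBU\, h_\mathrm{v}/h_\mathrm{b}$. Hence the direct link is LOS if and only if no point of the blockage process falls in the interval of length $\distHorzBU\, h_\mathrm{v}/h_\mathrm{b}$ measured from the user toward the BS, which is exactly the blocking rule already stated in the system model and underlying \eqref{eq:blockprob}.

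I would then invoke the void probability: for a 1D PPP of intensity $\lambda_\mathrm{v}$, the probability of having no point in a set of Lebesgue measure $L$ is $\exp(-\lambda_\mathrm{v} L)$. Taking $L = \distHorzBU\, h_\mathrm{v}/h_\mathrm{b}$ gives the conditional LOS probability $\exp(-\lambda_\mathrm{v}\distHorzBU h_\mathrm{v}/h_\mathrm{b})$, and passing to the complementary event $\E_\bs$ yields $\mathbb{P}(\E_\bs \mid \distHorzBU) = 1 - \exp(-\lambda_\mathrm{v}\distHorzBU h_\mathrm{v}/h_\mathrm{b})$, as claimed. There is no genuine obstacle here, since the statement is essentially \eqref{eq:blockprob} with the conditioning on $\distHorzBU$ made explicit; the only points worth checking carefully are that the blocking region is a single interval of the stated length (so the void probability applies verbatim) and that the relevant blockers are independent of the serving-BS location. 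The one implicit subtlety is the regime $h_\mathrm{v} < h_\mathrm{b}$, which guarantees the blocking interval $[0,\distHorzBU h_\mathrm{v}/h_\mathrm{b}]$ lies within the user–BS segment; this is the operative regime assumed by the model's geometry.
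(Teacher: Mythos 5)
Your proposal is correct and follows exactly the route the paper takes: it identifies the blocking interval of length $\distHorzBU h_\mathrm{v}/h_\mathrm{b}$ from the geometric rule stated in the system model and then applies the void probability of the 1D blockage PPP, which is all the paper itself invokes (via \eqref{eq:blockprob} and the citation to the void-probability argument). The extra care you take in spelling out the similar-triangles derivation and the $h_\mathrm{v}<h_\mathrm{b}$ regime is consistent with, and slightly more explicit than, the paper's one-line justification.
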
%
Recall that the BS switches its beam towards the RIS if its direct link to the user is blocked. Thus, the user is in a complete signal outage when the direct and the RIS links are simultaneously blocked. 
\begin{lemma}\label{lemma:JBPGivenrud}
Conditioned on the horizontal distance $\distHorzBU$ between the typical  user and the BS, the joint probability of the events $\E_\mathrm{s}$ and $\E_\mathrm{b}$ is given by
\begin{align*}
 &\jpV{\distHorzBU}=\mathbb{P}(\E_\mathrm{b}\cap \E_\mathrm{s}\mid \distHorzBU) = 
 \begin{cases}
    \left(1 - 
    \expU{-\lambda_\mathrm{v}\frac{h_\mathrm{v}}{h_\mathrm{b}}\distHorzBU}\right)  \cdot 
    \left(1 - 
    \expU{- 
    	\lambda_\mathrm{v} 
	(r_\mathrm{s} - \distHorzBU)\frac{h_\mathrm{v}}{h_\mathrm{s}}}
    \right)& \text{\rm if }\   0 \leq \distHorzBU \leq r_\mathrm{s}, \\
    1 - \exp\left(- \lambda_\mathrm{v} (\distHorzBU - r_\mathrm{s})\frac{h_\mathrm{v}}{h_\mathrm{s}}\right)& \text{\rm if }\  \distHorzBU > r_\mathrm{s}.
    \end{cases}
\end{align*}
\end{lemma}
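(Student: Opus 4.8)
The plan is to place the typical user at the origin of the typical street, with its serving BS at horizontal coordinate $\distHorzBU$ to the right and the relevant RIS (the one on the BS side, facing the user) at horizontal coordinate $\distHorzBU - r_\mathrm{s}$, so that its horizontal distance from the user is $\distHorzUR = |\distHorzBU - r_\mathrm{s}|$. By the blockage model of \eqref{eq:blockprob}, the user--BS link is in outage exactly when the blockage process $\Phi_\mathrm{v}$ places a point in the interval $\mathcal{I}_\bs$ of length $\distHorzBU\, h_\mathrm{v}/h_\bs$ extending from the user toward the BS, and the user--RIS link is in outage exactly when $\Phi_\mathrm{v}$ places a point in the interval $\mathcal{I}_\ris$ of length $\distHorzUR\, h_\mathrm{v}/h_\ris$ extending from the user toward the RIS. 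The entire argument then reduces to determining the relative position of $\mathcal{I}_\bs$ and $\mathcal{I}_\ris$ on the line, which splits into two cases according to the sign of $\distHorzBU - r_\mathrm{s}$.

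First I would treat $0 \le \distHorzBU \le r_\mathrm{s}$. Here the RIS sits at $\distHorzBU - r_\mathrm{s} \le 0$, i.e., on the opposite side of the user from the BS, so $\mathcal{I}_\bs$ and $\mathcal{I}_\ris$ lie on opposite half-lines and intersect only at the user point, a set of Lebesgue measure zero. Since $\Phi_\mathrm{v}$ is a PPP, its restrictions to these disjoint intervals are independent, and hence $\E_\bs$ and $\E_\ris$ are conditionally independent given $\distHorzBU$. The joint probability then factorizes into $\jpD{\distHorzBU}$ from Lemma~\ref{lemma:BSBPGivenrud} and the analogous RIS-blocking probability $1 - \exp(-\lambda_\mathrm{v}(r_\mathrm{s} - \distHorzBU)\,h_\mathrm{v}/h_\ris)$, obtained by the same void-probability computation with distance $\distHorzUR = r_\mathrm{s} - \distHorzBU$ and height $h_\ris$. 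This gives the first branch.

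Next I would treat $\distHorzBU > r_\mathrm{s}$, where the RIS lies at $\distHorzBU - r_\mathrm{s} > 0$, on the same side as the BS and between the user and the BS. Now both $\mathcal{I}_\bs$ and $\mathcal{I}_\ris$ are anchored at the user and extend in the same direction, so one is nested in the other. The key observation is that $\mathcal{I}_\ris \subseteq \mathcal{I}_\bs$: because $\distHorzUR = \distHorzBU - r_\mathrm{s} < \distHorzBU$ and $h_\ris > h_\bs$, the length $(\distHorzBU - r_\mathrm{s})\,h_\mathrm{v}/h_\ris$ of $\mathcal{I}_\ris$ is strictly smaller than the length $\distHorzBU\, h_\mathrm{v}/h_\bs$ of $\mathcal{I}_\bs$. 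Consequently every blockage that obstructs the RIS link also obstructs the BS link, so $\E_\ris \subseteq \E_\bs$, the joint event collapses to $\E_\bs \cap \E_\ris = \E_\ris$, and the probability reduces to $\mathbb{P}(\E_\ris \mid \distHorzBU) = 1 - \exp(-\lambda_\mathrm{v}(\distHorzBU - r_\mathrm{s})\,h_\mathrm{v}/h_\ris)$, the second branch.

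I expect the main obstacle to be geometric bookkeeping rather than any probabilistic subtlety: one must argue carefully that in the same-side case the two blockage intervals genuinely share the endpoint at the user and that the RIS interval is contained in the BS interval, a containment that hinges on both $\distHorzUR < \distHorzBU$ and $h_\ris > h_\bs$, since it is precisely this nesting that converts the joint event into the single event $\E_\ris$. Once the disjointness of the intervals in the first case and their nesting in the second are established, the independence and monotonicity properties of the PPP deliver both branches immediately.
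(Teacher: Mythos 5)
Your proof is correct and takes essentially the same route as the paper's: for $0 \le \distHorzBU \le r_\ris$ the two blockage intervals lie on opposite sides of the user, so PPP independence on disjoint sets gives the product form, and for $\distHorzBU > r_\ris$ the RIS interval is nested inside the BS interval (using $\distHorzBU - r_\ris < \distHorzBU$ and $h_\ris > h_\bs$), so $\E_\mathrm{s} \subseteq \E_\mathrm{b}$ and the joint event collapses to $\E_\mathrm{s}$. Your version merely spells out the containment argument more explicitly than the paper's terse proof.
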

\begin{IEEEproof}
For $0 \leq \distHorzBU \leq r_\mathrm{s}$, the typical  user is located between the serving BS and the RIS. Accordingly, the joint blockage probability is the product of the blockage probabilities of individual links from the user to the BS and the RIS. On the contrary, for $\distHorzBU \geq r_\mathrm{s}$, the user-BS and the user-RIS links are jointly blocked when a blockage is located at a distance of $\distHorzBU - r_\mathrm{s}$ from the user. The result follows from the void probability of the PPP \cite{AndGupDhi16}.
\end{IEEEproof}

\begin{remark}
Lemma \ref{lemma:JBPGivenrud} shows that if an RIS is placed very close or very far from the BS, the joint blockage probability is high. For close deployments of the RIS to the BS, the user-BS link is blocked whenever the user-RIS link is blocked. On the contrary, when an RIS is deployed far from the BS, the individual links from the user to the BS, and the RIS have a high probability of blockage. In particular, for a given $\distHorzBU$, the joint blockage probability is minimum when $r_\ris=\distHorzBU$. This means that from a particular user's perspective, the RIS should be located directly above the user. Since there are many users, one should select a location of the RIS that results in the smallest blockage probability on average.
\end{remark}

By unconditioning over $r_\mathrm{bu}$, we obtain the {average} joint blockage probability as
$
 \jp{}=\mathbb{P}(\E_\mathrm{b}, \E_\mathrm{s}) = \int_0^\infty \mathbb{P}(\E_\mathrm{b}, \E_\mathrm{s}\mid \distHorzBU=x) f_{\distHorzBU}(x)  \mathrm{d}x \nonumber.
$
Using $f_{\distHorzBU}(x) $  given in \eqref{eq:dist_lane}, we get
 the following Proposition. 

\newcommand{\rsN}{f_\ris}
\newcommand{\bseverity}{\mathrm{R_b} }
\newcommand{\rseverity}{\mathrm{R_s} }
\newcommand{\bseverityinv}{1/\mathrm{R_1} }
\newcommand{\rseverityinv}{1/\mathrm{R_2} }
\renewcommand{\bseverityinv}{\rho_\bs }
\renewcommand{\rseverityinv}{\rho_\ris}

\begin{proposition}
The connection failure probability 
is given by
\begin{align}
    \jp{}=
    &1 - 2
    \left[
    \frac{
                	\exp\left(-\rsN
                \left(
                \bseverityinv
                + 2\right)\right) - 
                \exp\left(-
                \rsN \rseverityinv
                 \right)
            }
     	  {
	    	2 + 
     		\bseverityinv
      		-\rseverityinv
      	   } 
      + \frac{\exp(-2
      \rsN
      )}{
      \rseverityinv
       + 2}  \nonumber \right.\\
    &\left. + 
    \frac{1 - \exp\left(-\rsN 
    \left(
  	\bseverityinv
    + 2\right)\right) }{
    \bseverityinv
    + 
    2} + 
    \frac{\exp\left(
    	-2 \rsN 
	\right) - \exp\left(-
	\rsN \rseverityinv 
	 \right)}{
	 \rseverityinv
	 - 2} \right], \label{eq:joint_block}
\end{align}
where $\bseverityinv =1/\bseverity= \frac{\lambda_\mathrm{v} h_\mathrm{v}}{\lambda_\mathrm{b} h_\mathrm{b}}$ and $\rseverityinv=1/\rseverity= \frac{\lambda_\mathrm{v} h_\mathrm{v}}{\lambda_\mathrm{b} h_\mathrm{s}}$. Here, $\bseverityinv$ and $\rseverityinv$ denote the relative severity of blockages with respect to BS and RIS deployment.
$\rsN=\lambda_\bs r_\ris$ denotes the RIS distance relative to average BS cell size.
\end{proposition}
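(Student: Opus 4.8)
The plan is to evaluate the unconditioning integral $\jp{} = \int_0^\infty \jpV{x}\, f_{\distHorzBU}(x)\,\mathrm{d}x$ directly, substituting the two-piece conditional expression from Lemma~\ref{lemma:JBPGivenrud} together with the exponential density $f_{\distHorzBU}$ from \eqref{eq:dist_lane}. Since $\jpV{\cdot}$ changes form at $x=r_\mathrm{s}$, the first step is to split the integral as $\int_0^{r_\mathrm{s}} + \int_{r_\mathrm{s}}^\infty$. To keep the bookkeeping compact I would abbreviate $T_1 = \lambda_\mathrm{v}h_\mathrm{v}/h_\mathrm{b}$ and $T_2 = \lambda_\mathrm{v}h_\mathrm{v}/h_\mathrm{s}$, so that the integrand on $[0,r_\mathrm{s}]$ is $2\lambda_\mathrm{b}(1-e^{-T_1 x})(1-e^{-T_2(r_\mathrm{s}-x)})e^{-2\lambda_\mathrm{b}x}$ and on $[r_\mathrm{s},\infty)$ is $2\lambda_\mathrm{b}(1-e^{-T_2(x-r_\mathrm{s})})e^{-2\lambda_\mathrm{b}x}$.

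Next I would expand the product on the first interval into four pure-exponential terms: a constant-rate term $e^{-2\lambda_\mathrm{b}x}$, a term $e^{-(T_1+2\lambda_\mathrm{b})x}$, a growing term $e^{-T_2 r_\mathrm{s}}e^{(T_2-2\lambda_\mathrm{b})x}$, and a cross term $e^{-T_2 r_\mathrm{s}}e^{-(T_1-T_2+2\lambda_\mathrm{b})x}$. Each integrates elementarily over $[0,r_\mathrm{s}]$, producing the denominators $2\lambda_\mathrm{b}$, $T_1+2\lambda_\mathrm{b}$, $T_2-2\lambda_\mathrm{b}$, and $2\lambda_\mathrm{b}+T_1-T_2$, respectively. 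For the tail integral I would substitute $y=x-r_\mathrm{s}$, pull out $e^{-2\lambda_\mathrm{b}r_\mathrm{s}}$, and evaluate $\int_0^\infty(1-e^{-T_2 y})e^{-2\lambda_\mathrm{b}y}\,\mathrm{d}y$, which contributes a $1/(T_2+2\lambda_\mathrm{b})$ factor.

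The slightly delicate step is the recombination: the constant-rate piece of the first integral evaluates to $1-e^{-2\lambda_\mathrm{b}r_\mathrm{s}}$, and I would merge its $-e^{-2\lambda_\mathrm{b}r_\mathrm{s}}$ part with the tail contribution $e^{-2\lambda_\mathrm{b}r_\mathrm{s}}T_2/(T_2+2\lambda_\mathrm{b})$ to produce the leading $1$ together with the $e^{-2\lambda_\mathrm{b}r_\mathrm{s}}/(T_2+2\lambda_\mathrm{b})$ term; the remaining three pieces then line up one-to-one with the other three bracketed terms, each carrying the common $-2\lambda_\mathrm{b}$ prefactor. This yields the intermediate (unnormalized) form in $T_1$, $T_2$, $\lambda_\mathrm{b}$, $r_\mathrm{s}$. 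Finally I would pass to the stated dimensionless parameters via $T_1=\lambda_\mathrm{b}\bseverityinv$, $T_2=\lambda_\mathrm{b}\rseverityinv$, and $r_\mathrm{s}=\rsN/\lambda_\mathrm{b}$, which turns every exponent into one of $-\rsN(\bseverityinv+2)$, $-\rsN\rseverityinv$, or $-2\rsN$, divides each numerator and denominator by $\lambda_\mathrm{b}$, and absorbs the $2\lambda_\mathrm{b}$ prefactor into the leading factor $2$, giving \eqref{eq:joint_block}.

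The main obstacle is purely the algebraic bookkeeping of the four-term expansion and its recombination. I expect the only conceptual care to be at the apparent singularities where $T_2=2\lambda_\mathrm{b}$ (i.e.\ $\rseverityinv=2$) or $2\lambda_\mathrm{b}+T_1=T_2$: both are removable, since the bracketed differences of exponentials vanish at the same rate as their denominators, so the expression extends continuously and no separate case analysis is required.
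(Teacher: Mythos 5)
Your proposal is correct and is exactly the route the paper takes: the proposition is obtained by unconditioning $\jpV{\distHorzBU}$ from Lemma~\ref{lemma:JBPGivenrud} against the density in \eqref{eq:dist_lane}, splitting at $r_\mathrm{s}$, expanding into elementary exponential integrals, and then passing to the dimensionless parameters $\rsN$, $\bseverityinv$, $\rseverityinv$. Your four-term expansion, the recombination of the constant-rate piece with the tail to produce the leading $1$ and the $e^{-2\rsN}/(\rseverityinv+2)$ term, and the remark that the apparent singularities at $\rseverityinv=2$ and $2+\bseverityinv=\rseverityinv$ are removable are all accurate and reproduce \eqref{eq:joint_block}.
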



\begin{proposition}\label{pro:block_bound}
 $\jp{}$ can be bounded as
 $\displaystyle
    \frac{
    		e^{-2 \rsN
    	}}
    	{1+ 2 \bseverity
	} 
	\le \jp{} \leq 
	\left(1-e^{-2
	\rsN
	}\right)+ 
    \frac{e^{-2
    \rsN 
    }}{1+2 \bseverity
    }. \nonumber
$
\end{proposition}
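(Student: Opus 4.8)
The plan is to derive both inequalities directly from the conditional blockage probability of Lemma~\ref{lemma:JBPGivenrud} together with the unconditioning already set up before \eqref{eq:joint_block}, namely
\[
\jp{}=\int_0^{r_\mathrm{s}}\jpV{x}\,f_{\distHorzBU}(x)\,\mathrm{d}x+\int_{r_\mathrm{s}}^{\infty}\jpV{x}\,f_{\distHorzBU}(x)\,\mathrm{d}x ,
\]
where $f_{\distHorzBU}(x)=2\lambda_\mathrm{b}\expU{-2\lambda_\mathrm{b}x}$ from \eqref{eq:dist_lane} and, writing $\mathrm{T_1}=\lambda_\mathrm{v}h_\mathrm{v}/h_\mathrm{b}$ and $\mathrm{T_2}=\lambda_\mathrm{v}h_\mathrm{v}/h_\mathrm{s}$, Lemma~\ref{lemma:JBPGivenrud} gives $\jpV{x}=(1-\expU{-\mathrm{T_1}x})(1-\expU{-\mathrm{T_2}(r_\mathrm{s}-x)})$ on $[0,r_\mathrm{s}]$ and $\jpV{x}=1-\expU{-\mathrm{T_2}(x-r_\mathrm{s})}$ on the tail (the latter because, as used in that proof, the user--RIS blockage event is contained in the user--BS blockage event once $\distHorzBU>r_\mathrm{s}$). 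Both bounds will come from replacing $\jpV{x}$ by elementary pointwise estimates on the two pieces; the only analytic inputs are $\int_0^{r_\mathrm{s}}2\lambda_\mathrm{b}\expU{-2\lambda_\mathrm{b}x}\,\mathrm{d}x=1-e^{-2\rsN}$ and, for any $c>0$, $\int_{r_\mathrm{s}}^{\infty}(1-\expU{-c(x-r_\mathrm{s})})\,2\lambda_\mathrm{b}\expU{-2\lambda_\mathrm{b}x}\,\mathrm{d}x=e^{-2\rsN}\,c/(c+2\lambda_\mathrm{b})$, with $\rsN=\lambda_\mathrm{b}r_\mathrm{s}$.

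For the upper bound I would bound $\jpV{x}\le 1$ on $[0,r_\mathrm{s}]$, contributing exactly $1-e^{-2\rsN}$. On the tail I would invoke the standing height assumption $h_\mathrm{s}>h_\mathrm{b}$, i.e.\ $\mathrm{T_2}\le\mathrm{T_1}$, to relax $1-\expU{-\mathrm{T_2}(x-r_\mathrm{s})}\le 1-\expU{-\mathrm{T_1}(x-r_\mathrm{s})}$ and integrate using the helper identity with $c=\mathrm{T_1}$, giving $e^{-2\rsN}\mathrm{T_1}/(\mathrm{T_1}+2\lambda_\mathrm{b})=e^{-2\rsN}/(1+2\bseverity)$ after substituting $\bseverity=\lambda_\mathrm{b}/\mathrm{T_1}=\lambda_\mathrm{b}h_\mathrm{b}/(\lambda_\mathrm{v}h_\mathrm{v})$. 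Adding the two contributions yields the claimed upper bound.

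For the lower bound I would discard the nonnegative integral over $[0,r_\mathrm{s}]$ and retain the tail unchanged: since $\jpV{x}=1-\expU{-\mathrm{T_2}(x-r_\mathrm{s})}$ there, the helper identity with $c=\mathrm{T_2}$ gives $\jp{}\ge e^{-2\rsN}\mathrm{T_2}/(\mathrm{T_2}+2\lambda_\mathrm{b})=e^{-2\rsN}/(1+2\rseverity)$, with $\rseverity=\lambda_\mathrm{b}/\mathrm{T_2}=\lambda_\mathrm{b}h_\mathrm{s}/(\lambda_\mathrm{v}h_\mathrm{v})$.

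The step I expect to be delicate is exactly this denominator. The inequality $\mathrm{T_2}\le\mathrm{T_1}$ that makes the upper bound go through points the wrong way for a lower bound, so the tail cannot be relaxed to $\mathrm{T_1}$ while preserving ``$\ge$'': the severity parameter produced by the retained integral is the RIS-side $\rseverity$, not the BS-side $\bseverity$. A consistency check at $r_\mathrm{s}\to0$, where \eqref{eq:joint_block} collapses to $\jp{}=1/(1+2\rseverity)$, pins the lower-bound denominator to $1+2\rseverity$. I would therefore carry the lower bound through with $\rseverity$ and treat the matching against the $\bseverity$ written in the statement as the point to reconcile, while the upper bound stands as stated.
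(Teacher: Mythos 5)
Your decomposition and both estimates are exactly the argument the paper intends: the paper gives no formal proof of Proposition~\ref{pro:block_bound}, but its discussion of Fig.~\ref{fig:Blockage1} states that the lower bound ``is obtained by considering only the case when $\distHorzBU \geq r_\mathrm{s}$,'' which is precisely your tail-only computation, and the upper bound follows from $\jpV{x}\le 1$ on $[0,r_\mathrm{s}]$ together with the same tail integral. Your upper bound stands as written: since $h_\mathrm{s}>h_\mathrm{b}$ implies $\rseverity\ge\bseverity$, the exact tail contribution $e^{-2\rsN}/(1+2\rseverity)$ may be relaxed to $e^{-2\rsN}/(1+2\bseverity)$ without harming the direction of the inequality.

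The delicate point you flag is genuine, and your resolution is the correct one. The tail integral evaluates exactly to $e^{-2\rsN}/(1+2\rseverity)$, so the lower bound obtained by discarding the $[0,r_\mathrm{s}]$ piece carries $\rseverity$, not $\bseverity$; the relaxation $\rseverity\to\bseverity$ points the wrong way for a lower bound. Indeed, the bound as printed is false in general: as $r_\mathrm{s}\to 0$, \eqref{eq:joint_block} collapses to $\jp{}=1/(1+2\rseverity)$, which is strictly smaller than the claimed lower bound $1/(1+2\bseverity)$ whenever $h_\mathrm{s}>h_\mathrm{b}$, and the violation persists for moderate $r_\mathrm{s}$ (e.g., with the paper's default parameters and $r_\mathrm{s}=10$~m one finds $\jp{}\approx 0.067$ against a claimed lower bound of $\approx 0.085$). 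The subscript in the lower-bound denominator is evidently a typo; with $\bseverity$ replaced by $\rseverity$ there, your proof establishes the proposition completely and is consistent with the paper's own remark that this bound becomes tight as $r_\mathrm{s}\to 0$.
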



\begin{remark}
The bounds in Proposition 2 become strict when $r_{\rm s}$ is small, i.e., $r_{\rm s} \ll 1/\lambda_\bs$.
\end{remark}

\begin{lemma}\label{lemma:optrsblock}
The optimal RIS distance \textcolor{black}{for minimizing the joint blockage probability} is given by
\begin{align*}
    \drisopt
   = {(\lambda_\bs \bseverityinv)}^{-1} 
   \ln{{x_\mathrm{o}^{-1}}},
\end{align*}
where $x_\mathrm{o}$ is the solution of the Lambert's equation
\begin{align}
    x^a=a x +c,\ x\in[0,\ 1]\label{eq:xopt}, && \text{ with }
%
a=\frac{(\rseverityinv-2)}{\bseverityinv},
c=4\frac{(1-a)}{(\rseverityinv+2)},
\end{align}
with $a<1$ and $0<c$. 
The approximate solution is given as
\begin{align*}
    r_\ris=\frac1{\lambda_\bs}\sqrt{\frac2{
    (\bseverityinv) (\rseverityinv+2{})}}=\sqrt{\frac2{ \frac{\lambda_\vehicles h_\vehicles}{h_\bs}(\frac{\lambda_\vehicles h_\vehicles}{h_\ris}+2{\lambda_\bs})}}.
\end{align*}

\end{lemma}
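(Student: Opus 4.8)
The plan is to treat the connection failure probability in \eqref{eq:joint_block} as a function of the normalized RIS distance $\rsN=\lambda_\bs r_\ris$ and to locate its minimizer through a first-order condition. Writing $\jp{}=1-2\,g(\rsN)$ where $g$ is the bracketed expression, minimizing $\jp{}$ is the same as maximizing $g$, so I would compute $g'(\rsN)$ and set it to zero. It is convenient to abbreviate the three exponential blocks $A=\exp(-\rsN(\bseverityinv+2))$, $B=\exp(-2\rsN)$, $C=\exp(-\rsN\rseverityinv)$, and the three constant denominators $D=2+\bseverityinv-\rseverityinv$, $P=\rseverityinv+2$, $Q=\rseverityinv-2$, so that each of the four fractions in $g$ is a combination of $A,B,C$ over $D,P,Q$.

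First I would differentiate term by term and collect the coefficients of $A$, $B$, $C$. The useful simplifications are that the coefficient of $A$ collapses to $-\rseverityinv/D$ (the $+A$ coming from the third fraction cancels most of the contribution from the first), and that the partial-fraction sums telescope through the identities $D+Q=\bseverityinv$ and $P+Q=2\rseverityinv$. After clearing denominators, the stationarity equation $g'(\rsN)=0$ reduces to the compact form $\bseverityinv P\,C-P Q\,A-4D\,B=0$.

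The crucial step is the substitution $x=\exp(-\bseverityinv\rsN)$, which is precisely the relation inverted in the statement since $\bseverityinv\lambda_\bs=\lambda_\vehicles h_\vehicles/h_\bs$. Under it the three blocks become pure powers of $x$, namely $A=x^{1+2/\bseverityinv}$, $B=x^{2/\bseverityinv}$, and $C=x^{\rseverityinv/\bseverityinv}$. Dividing the compact condition by $B$ unifies the exponents: $A/B=x$, $B/B=1$, and $C/B=x^{(\rseverityinv-2)/\bseverityinv}=x^{a}$. This yields $\bseverityinv P\,x^{a}=PQ\,x+4D$, and dividing by $\bseverityinv P$ gives exactly $x^{a}=ax+c$ with $a=(\rseverityinv-2)/\bseverityinv$ and $c=4D/(\bseverityinv P)=4(1-a)/(\rseverityinv+2)$, where I used $1-a=D/\bseverityinv$. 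Inverting $x_\mathrm{o}=\exp(-\bseverityinv\rsN)$ then returns $\drisopt=(\lambda_\bs\bseverityinv)^{-1}\ln(x_\mathrm{o}^{-1})$, and the inequalities $a<1$, $c>0$ follow from $D>0$, which holds because $h_\ris>h_\bs$ forces $\rseverityinv<\bseverityinv$.

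For the closed-form approximation I would expand $x^{a}=ax+c$ about $x=1$, setting $x=e^{-s}$ with $s=\bseverityinv\rsN$ small and keeping terms to second order in $s$. The first-order terms in $s$ cancel on both sides, so the leading balance is quadratic; after substituting $c=4(1-a)/(\rseverityinv+2)$ and $\rseverityinv-2=a\bseverityinv$, the common factors $a$ and $(1-a)$ drop out and one is left with $s^{2}/2=\bseverityinv/(\rseverityinv+2)$, i.e. $s=\sqrt{2\bseverityinv/(\rseverityinv+2)}$, which gives the stated $r_\ris=\frac{1}{\lambda_\bs}\sqrt{\frac{2}{\bseverityinv(\rseverityinv+2)}}$. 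The main obstacle is the derivative bookkeeping in the second step: the expression only collapses after noticing the cancellation in the $A$-coefficient and the two telescoping identities, and the approximation is delicate precisely because the first-order terms vanish, forcing the expansion to second order to produce a nontrivial equation. A final check that the critical point is a maximizer of $g$ (via the sign of $g'$ or concavity near it) and that $x_\mathrm{o}\in[0,1]$ completes the argument.
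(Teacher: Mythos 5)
Your proposal is correct and follows essentially the same route as the paper's proof: set $\dd\jp{}/\dd r_\ris=0$, substitute $x=e^{-\bseverityinv\lambda_\bs r_\ris}$ to obtain the Lambert-type equation $x^a=ax+c$, and recover the closed-form approximation by a second-order expansion of the exponentials. Your version simply spells out the coefficient bookkeeping (the collapse of the $A$-coefficient to $-\rseverityinv/D$ and the identities $D+Q=\bseverityinv$, $P+Q=2\rseverityinv$) that the paper leaves implicit, and both reductions agree with the stationarity equation $\bseverityinv P\,x^a=PQ\,x+4D$ stated in the appendix.
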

\begin{proof}
See Appendix \ref{app:optrsblock}.
\end{proof}
It can be seen that with {increasing} blockage severity (\ie, with an increase in $\lambda_\vehicles$ or $h_\vehicles$), {the} optimal distance of $r_\ris$ decreases.

\begin{corollary}
[Asymptotic case]
As $\lambda_\vehicles\rightarrow \infty$, $\bseverityinv,\rseverityinv\rightarrow\infty$, and accordingly, $a=\rseverityinv/\bseverityinv$ is constant and $c=0$. Therefore, from \eqref{eq:xopt}, {we have} 
\begin{align*}
    x^{a}=ax \implies & (a-1)\ln(x)=\ln(a).& \text{Hence, } 
    & \drisopt = \frac1{\lambda_\bs \bseverityinv}\frac{\ln(a)}{1-a}
    \rightarrow 0.
\end{align*}
\end{corollary}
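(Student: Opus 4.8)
The plan is to treat this Corollary as a limiting evaluation of the optimality condition already established in Lemma~\ref{lemma:optrsblock}, so I can take both the root-finding equation \eqref{eq:xopt} and the closed form $\drisopt=(\lambda_\bs\bseverityinv)^{-1}\ln(x_\mathrm{o}^{-1})$ as given and simply track how they behave as $\lambda_\vehicles\to\infty$. First I would substitute the explicit expressions $\bseverityinv=\lambda_\vehicles h_\vehicles/(\lambda_\bs h_\bs)$ and $\rseverityinv=\lambda_\vehicles h_\vehicles/(\lambda_\bs h_\ris)$ into the coefficients $a=(\rseverityinv-2)/\bseverityinv$ and $c=4(1-a)/(\rseverityinv+2)$ and compute their limits. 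Since both $\bseverityinv$ and $\rseverityinv$ scale linearly in $\lambda_\vehicles$, the ratio $a=(\rseverityinv-2)/\bseverityinv\to\rseverityinv/\bseverityinv=h_\bs/h_\ris$ converges to a constant that is independent of $\lambda_\vehicles$, while $c=4(1-a)/(\rseverityinv+2)\to 0$ because its numerator stays bounded and its denominator diverges.

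With $c=0$, equation \eqref{eq:xopt} degenerates to $x^a=ax$. I would solve this explicitly by dividing through by $x$ and taking logarithms, obtaining $x^{a-1}=a$, i.e.\ $(a-1)\ln x_\mathrm{o}=\ln a$, and hence $\ln(x_\mathrm{o}^{-1})=\ln a/(1-a)$. Substituting this into the closed form of Lemma~\ref{lemma:optrsblock} yields $\drisopt=(\lambda_\bs\bseverityinv)^{-1}\,\ln a/(1-a)$, which is exactly the displayed expression.

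Finally I would pass to the limit $\lambda_\vehicles\to\infty$ in this expression. The factor $\ln a/(1-a)$ is a fixed finite constant, determined only by the height ratio $h_\bs/h_\ris$, whereas the prefactor simplifies to $(\lambda_\bs\bseverityinv)^{-1}=h_\bs/(\lambda_\vehicles h_\vehicles)\to 0$. The product therefore tends to zero, giving $\drisopt\to 0$.

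The main obstacle I anticipate is justifying the interchange of the limit with the root of \eqref{eq:xopt}: strictly, $x_\mathrm{o}$ is the solution for finite $\lambda_\vehicles$, where $c>0$ and $a$ sits slightly below its limiting value, and I am evaluating it on the degenerate equation. The clean way to close this is to invoke continuity of the relevant root of $x^a=ax+c$ in the parameters $(a,c)$ near $(h_\bs/h_\ris,0)$, so that $\ln(x_\mathrm{o}^{-1})$ stays bounded and converges to $\ln a/(1-a)$. Once that boundedness is in hand, multiplication by the vanishing prefactor $(\lambda_\bs\bseverityinv)^{-1}$ delivers the limit irrespective of the precise value of the constant; thus it is this boundedness of $\ln(x_\mathrm{o}^{-1})$, rather than the elementary algebra of solving $x^a=ax$, that is the real crux of the argument.
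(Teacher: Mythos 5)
Your proposal is correct and follows essentially the same route as the paper, which derives the corollary inline by setting $c=0$, solving $x^{a}=ax$ via $(a-1)\ln x=\ln a$, and letting the prefactor $(\lambda_\bs\bseverityinv)^{-1}$ vanish. Your added remark on the continuity of the root of \eqref{eq:xopt} in $(a,c)$ is a reasonable extra precaution but does not change the argument, which the paper treats as an immediate consequence of Lemma~\ref{lemma:optrsblock}.
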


\subsection{Association Probability}
Let $\A_\bs$ denote the probability that the user is served by direct link, \ie, the serving BS is LOS. Similarly, let $\A_\ris$ denote the probability that the user is served via RIS, \ie, the serving BS is NLOS and the associated RIS is LOS. Hence, we get
\begin{align}
\prob{\A_\bs}=\prob{\bar{\E_\bs}}=1-\expect{\jpD{\distHorzBU}}, \prob{\A_\ris}=\prob{{\E_\bs} \cap \bar{ \E_\ris}}=\prob{{\E_\bs}}-\prob{\E_\bs \cap { \E_\ris}}=
\expect{\jpD{\distHorzBU}}-\jp{}.\nonumber
\end{align}


\subsection{SNR Coverage Probability}
\label{sec:SNR}
We will now compute the coverage probability of the typical user for threshold $\gamma$ 
which is given in the following theorem. 
\begin{theorem}\label{thm1}
The SNR coverage probability of the typical 
 user is (See Appendix \ref{app:thm1}.)
\begin{align}
    \pcu= \mathbb{E}_{\distHorzBU}\left[
    (1- \jpD{\distHorzBU})
    F_{n_0}\left(
    \frac
    {\gamma \noisen 
    }
    { \ellBU(\distHorzBU)u_m v_m
    }
    \right) \right. 
    +
    \left.
    (\jpD{\distHorzBU}-\jpV{\distHorzBU})
    F_{n_0}\left(
    \frac{\gamma \noisen
    }{\ell_\mathrm{r}(\distHorzBU, r_{\rm s})u_m v_m
    }\right)   \right] \nonumber .
\end{align}
\end{theorem}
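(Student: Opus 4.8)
The plan is to condition on the user–BS horizontal distance $\distHorzBU$ and decompose the event $\{\mathtt{SNR}>\gamma\}$ according to which link serves the user. By the association rule described in Section~\ref{sec:SM}, given $\distHorzBU$ exactly one of three mutually exclusive events occurs: the direct link serves the user (BS is LOS, event $\bar{\E_\bs}$), the RIS serves the user (BS is NLOS but RIS is LOS, event $\E_\bs\cap\bar{\E_\ris}$), or the user is in outage (both blocked, event $\E_\bs\cap\E_\ris$). Invoking Lemma~\ref{lemma:BSBPGivenrud} and Lemma~\ref{lemma:JBPGivenrud}, these three events have conditional probabilities $1-\jpD{\distHorzBU}$, $\jpD{\distHorzBU}-\jpV{\distHorzBU}$, and $\jpV{\distHorzBU}$, respectively, matching the association probabilities $\prob{\A_\bs}$ and $\prob{\A_\ris}$ derived earlier.

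Next I would evaluate the conditional coverage probability in each serving mode. Conditioned on the direct link serving the user, $\snrD=\gBU u_m v_m \ellBU(\distHorzBU)/\noisen$, so $\{\snrD>\gamma\}$ is equivalent to $\{\gBU>\gamma\noisen/(\ellBU(\distHorzBU)u_m v_m)\}$. Since the Nakagami fading power $\gBU$ is Gamma-distributed with parameter $n_0$ and is independent of the blockage point process, its tail probability is exactly the CCDF $F_{n_0}(\cdot)$ introduced in the notation, giving $F_{n_0}(\gamma\noisen/(\ellBU(\distHorzBU)u_m v_m))$. The same computation in the via-RIS mode, now with received power governed by the composite path loss $\ell_\mathrm{r}(\distHorzBU,r_{\rm s})$ of \eqref{eq:lr} and fading $\gUR\sim\mathrm{Gamma}(n_0)$, yields $F_{n_0}(\gamma\noisen/(\ell_\mathrm{r}(\distHorzBU,r_{\rm s})u_m v_m))$, while the outage mode contributes zero because the NLOS received power is negligible.

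Finally, I would assemble these pieces by the law of total probability conditioned on $\distHorzBU$, weighting each mode's conditional coverage by its occurrence probability, and then take the expectation over $\distHorzBU$, whose density is \eqref{eq:dist_lane}. This reproduces precisely the claimed expression for $\pcu$, with the two surviving terms corresponding to the direct and via-RIS serving modes.

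The step that most needs care is the independence argument: both the clean decomposition into serving modes and the reduction of each conditional coverage to a Gamma tail rely on the fading gains $\gBU,\gUR$ being drawn independently of the blockage states $\E_\bs,\E_\ris$ and of $\distHorzBU$. This independence is what allows the $F_{n_0}$ factor to separate from the blockage-probability weights; otherwise the conditional coverage would not factor into the stated product form. A secondary point to verify is that, given $\distHorzBU$ and the fixed offset $r_{\rm s}$, the RIS geometry — hence $\distHorzBR$, $\distHorzUR$, and thus $\ell_\mathrm{r}(\distHorzBU,r_{\rm s})$ — is deterministic, so no additional averaging over the RIS position is needed inside the conditional expectation.
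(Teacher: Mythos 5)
Your proposal is correct and follows essentially the same route as the paper's proof in Appendix B: decompose by serving mode using the law of total probability conditioned on $\distHorzBU$, weight the direct-link and via-RIS conditional coverage probabilities by $1-\jpD{\distHorzBU}$ and $\jpD{\distHorzBU}-\jpV{\distHorzBU}$ respectively, and reduce each conditional coverage to the Gamma CCDF $F_{n_0}$ via the independence of the fading gains from the blockage process. Your explicit attention to that independence step and to the determinism of $\ell_\mathrm{r}(\distHorzBU,r_{\rm s})$ given $\distHorzBU$ is a welcome addition that the paper leaves implicit.
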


According to Alzer's lemma~\cite{alzer1997some,AndrewsmmWaveTut2016}, the CCDF of a Gamma \ac{RV} $X$  with parameter $n_0$ can be {upper} bounded as
\begin{align}
    F_{n_0}(\gamma) =\mathbb{P}(X \geq \gamma) \leq
    \sum\limits_{n = 1}^{n_0}(-1)^{n+1} \binom{n_0}{n} 
    e^{-n \gamma \eta_m }
    \label{eq:alzerlemma}
\end{align}
with $\eta_m = n_0(n_0!)^{-\frac{1}{n_0}}$. 
Using \eqref{eq:alzerlemma} in Theorem \ref{thm1}, we get the desired result.
\begin{corollary}
The SNR coverage probability of the typical user can be approximated as
\begin{align}
    \pcu 
    \approx \;\sum\limits_{n = 1}^{n_0}(-1)^{n+1}\binom{n_0}{n}
    \mathbb{E}_{\distHorzBU}&\left[ \exp\left(
    {-\frac{n \gamma'  }{\ellBU(\distHorzBU)} 
    }
    {
    }\right)  
    (1-\jpD{\distHorzBU})
    \nonumber \right. \\
    & \ \ \ +\left.
    \exp\left(-\frac{n \gamma'  
    }
    {
    \ell_\mathrm{r}(\distHorzBU, r_{\rm s})}\right) 
    (\jpD{\distHorzBU}-\jpV{\distHorzBU})
    \right],
\end{align}
where
$\gamma' =\eta_m \snrthreshold \noisen  /(u_mv_m) 
$. 
\end{corollary}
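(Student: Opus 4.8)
The final statement converts the two Gamma CCDFs of Theorem~\ref{thm1} into finite sums of exponentials, so the plan is to invoke Theorem~\ref{thm1} (which I may assume), substitute Alzer's bound \eqref{eq:alzerlemma} into each $F_{n_0}$ term, and then swap the resulting finite sum with the expectation over $\distHorzBU$. For context I first recall how Theorem~\ref{thm1} is built, since the Corollary inherits its two-term structure. Conditioning on $\distHorzBU$, I decompose the coverage event over the three mutually exclusive link states: (i) the direct link is LOS ($\bar{\E_\bs}$), where the user sees $\snrD$; (ii) the direct link is blocked but the RIS link is LOS ($\E_\bs\cap\bar{\E_\ris}$), where the user sees the via-RIS $\snrV$; and (iii) both links are blocked ($\E_\bs\cap\E_\ris$), an outage contributing nothing. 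Since the Nakagami powers $\gBU$ and $\gUR$ are independent of the blockage geometry, each conditional coverage probability factors into a blockage weight times a fading factor. By Lemma~\ref{lemma:BSBPGivenrud} and Lemma~\ref{lemma:JBPGivenrud} the weights are $1-\jpD{\distHorzBU}$ and $\jpD{\distHorzBU}-\jpV{\distHorzBU}$; the fading factors are $F_{n_0}$ evaluated at the threshold-crossing points, e.g. $\snrD>\gamma$ is equivalent to $\gBU>\gamma\noisen/(\ellBU(\distHorzBU)u_m v_m)$, with probability $F_{n_0}(\gamma\noisen/(\ellBU(\distHorzBU)u_m v_m))$. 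Taking $\mathbb{E}_{\distHorzBU}[\cdot]$ reproduces Theorem~\ref{thm1}.

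Next I apply Alzer's bound. The arguments of both CCDFs are deterministic given $\distHorzBU$, so substituting \eqref{eq:alzerlemma} replaces each $F_{n_0}(\gamma\noisen/(\ell\,u_m v_m))$ by $\sum_{n=1}^{n_0}(-1)^{n+1}\binom{n_0}{n}\exp(-n\eta_m\gamma\noisen/(\ell\,u_m v_m))$, once with $\ell=\ellBU(\distHorzBU)$ and once with $\ell=\ell_\mathrm{r}(\distHorzBU,r_{\rm s})$. Defining $\gamma'=\eta_m\gamma\noisen/(u_m v_m)$ collapses each exponent to $-n\gamma'/\ell$, matching the constant stated in the Corollary. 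Because the alternating-exponential sum is finite, linearity of expectation lets me pull $\sum_{n=1}^{n_0}(-1)^{n+1}\binom{n_0}{n}$ outside $\mathbb{E}_{\distHorzBU}$; collecting the LOS-direct and via-RIS contributions under this single sum, with weights $1-\jpD{\distHorzBU}$ and $\jpD{\distHorzBU}-\jpV{\distHorzBU}$ unchanged, yields the claimed expression. The symbol $\approx$ is used because \eqref{eq:alzerlemma} is an inequality employed as a tight approximation rather than an identity.

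The Corollary itself presents no analytic obstacle: it is a termwise substitution followed by a finite interchange of sum and expectation. The points that require care sit upstream, in establishing Theorem~\ref{thm1}: one must assign the product path loss $\ell_\mathrm{r}$ (not $\ellBU$) to the via-RIS event, and use the disjointness of the three link states together with Lemmas~\ref{lemma:BSBPGivenrud}--\ref{lemma:JBPGivenrud} so that the weights are consistent with the association probabilities $\prob{\A_\bs}$ and $\prob{\A_\ris}$. The only care needed in the substitution step is reading off $\gamma'$ correctly. The residual expectation over $\distHorzBU$ is kept because $\ellBU$, $\ell_\mathrm{r}$, $\jpD{\cdot}$ and $\jpV{\cdot}$ do not combine into an elementary antiderivative against $f_{\distHorzBU}$, so the final result is naturally left as a single integral over \eqref{eq:dist_lane}.
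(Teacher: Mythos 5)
Your proposal is correct and follows exactly the paper's route: the paper obtains this corollary by substituting Alzer's bound \eqref{eq:alzerlemma} for each Gamma CCDF $F_{n_0}(\cdot)$ in Theorem~\ref{thm1}, identifying $\gamma'=\eta_m\gamma\noisen/(u_m v_m)$, and exchanging the finite alternating sum with $\mathbb{E}_{\distHorzBU}$. Your additional recap of how Theorem~\ref{thm1} is assembled matches the paper's Appendix~\ref{app:thm1} and is merely context, not a deviation.
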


\begin{corollary}
 {For the special case of standard path-loss model with $\alpha = 2$}, the SNR coverage probability can be simplified as
\begin{align}
    \pcu \approx &\sum\limits_{n = 1}^{n_0}(-1)^{n+1} \binom{n_0}{n} \left[\exp\left(\!\frac1{n\gamma'} \lambda_\bs^2 {\left(1\!+\!
    \frac12\bseverityinv
    \right)^2}\!\!\! -\! n\gamma' h_\mathrm{b}^2 \!\right)  
    \nonumber  
    \!
    \textnormal{erfc}\left(
    \frac{\lambda_\bs(1+\bseverityinv/2)}{\sqrt{n\gamma'}}\right)\right.\!
    \frac{\lambda_\mathrm{b} \sqrt{\pi}}{\sqrt{n \gamma'}} \nonumber 
    \nonumber \\
    &+ \left.\mathbb{E}_{\distHorzBU}\left[\exp\left(\!
    -\frac{
    n \gamma' 
    }{
    \ell_\mathrm{r}(\distHorzBU, r_{\rm s})}
    \right)
     (\jpD{\distHorzBU}-\jpV{\distHorzBU})
    \vphantom{\frac1\xi}
    \right]\right].\nonumber
\end{align}
\end{corollary}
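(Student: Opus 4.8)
The plan is to start from the preceding corollary, which already expresses $\pcu$ as an Alzer-based sum over $n$ of terms $(-1)^{n+1}\binom{n_0}{n}$ times an expectation over $\distHorzBU$, split into a direct-link contribution weighted by $1-\jpD{\distHorzBU}$ and a via-RIS contribution weighted by $\jpD{\distHorzBU}-\jpV{\distHorzBU}$. Specializing to $\alpha=2$ changes only the explicit form of $\ellBU$, so the whole task reduces to evaluating the direct-link expectation in closed form. The via-RIS term has kernel $\ell_\mathrm{r}(\distHorzBU,r_\ris)$, a product of two shifted path-losses involving $|\distHorzBU-\distHorzBR|$, which has no elementary antiderivative; I would therefore carry it over unchanged as the residual expectation that appears in the statement.

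For the standard model with $\alpha=2$ we have $\ellBU(r)=K(r^2+h_\bs^2)^{-1}$, so $n\gamma'/\ellBU(\distHorzBU)$ is affine in $x^2$ (after absorbing the constant $K$ into $\gamma'$). Substituting $1-\jpD{\distHorzBU}=\exp(-\lambda_\vehicles \distHorzBU h_\vehicles/h_\bs)$ from Lemma~\ref{lemma:BSBPGivenrud} and the distance density $f_{\distHorzBU}(x)=2\lambda_\bs e^{-2\lambda_\bs x}$ from \eqref{eq:dist_lane}, the direct-link expectation becomes
\begin{align*}
2\lambda_\bs\, e^{-n\gamma' h_\bs^2}\int_0^\infty \exp\!\left(-n\gamma' x^2 - b x\right)\dd x,\qquad b=\frac{\lambda_\vehicles h_\vehicles}{h_\bs}+2\lambda_\bs .
\end{align*}

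The key step is to recognize this as a one-sided Gaussian integral. Completing the square and changing variables gives $\int_0^\infty e^{-ax^2-bx}\dd x=\tfrac12\sqrt{\pi/a}\;e^{b^2/(4a)}\,\mathrm{erfc}\!\big(b/(2\sqrt a)\big)$ with $a=n\gamma'$. Rewriting the linear coefficient as $b=2\lambda_\bs(1+\tfrac12\bseverityinv)$ using $\bseverityinv=\lambda_\vehicles h_\vehicles/(\lambda_\bs h_\bs)$, the exponent $b^2/(4a)$ collapses to $\lambda_\bs^2(1+\tfrac12\bseverityinv)^2/(n\gamma')$ and the $\mathrm{erfc}$ argument to $\lambda_\bs(1+\bseverityinv/2)/\sqrt{n\gamma'}$; multiplying back the prefactor $2\lambda_\bs e^{-n\gamma' h_\bs^2}$ together with $\tfrac12\sqrt{\pi/a}$ produces the $\lambda_\bs\sqrt{\pi}/\sqrt{n\gamma'}$ factor and exactly the claimed closed form for the first term. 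Inserting this for the direct-link contribution, retaining the via-RIS term as the remaining expectation, and summing over $n$ completes the proof.

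The only real obstacle is careful bookkeeping of the constant $K$ (whether it is normalized to unity or folded into $\gamma'$, which is what makes the $h_\bs^2$ term appear without an explicit $K$) and the precise matching of the constants inside the $\mathrm{erfc}$; the Gaussian integral itself is routine, and the via-RIS term requires no manipulation.
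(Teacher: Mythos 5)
Your proposal is correct and follows the route the paper implicitly intends: the corollary is stated without proof as a specialization of Corollary 1, and the only substantive step is evaluating the direct-link expectation via the one-sided Gaussian integral $\int_0^\infty e^{-ax^2-bx}\,\dd x=\tfrac12\sqrt{\pi/a}\,e^{b^2/(4a)}\,\mathrm{erfc}\bigl(b/(2\sqrt a)\bigr)$, which you carry out with the correct identifications $a=n\gamma'$ and $b=2\lambda_\bs(1+\tfrac12\bseverityinv)$, recovering the stated exponent, $\mathrm{erfc}$ argument, and $\lambda_\bs\sqrt{\pi}/\sqrt{n\gamma'}$ prefactor. Your observation about absorbing the constant $K$ into $\gamma'$ is also the right reading of why $h_\bs^2$ appears without an explicit $K$ in the final expression.
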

In the absence of blockages, the user to BS link is never blocked, and we get the following result.
\begin{corollary}
For the special case of $\alpha = 2$ and with the absence of blockages, the coverage probability {can be given as
\begin{equation*}
    \pcu=\lambda_\mathrm{b} \sqrt{\pi} \sum\limits_{n = 1}^{n_0} \binom{n_0}{n} 
    \frac{(-1)^{n+1}}{\sqrt{n \gamma'}}
    {\exp\left(\frac{\lambda_\mathrm{b}^2}{n \gamma'} - n \gamma' h_\mathrm{b}^2 \right)\mathrm{erfc}\left(\frac{\lambda_\mathrm{b}}{\sqrt{n \gamma'}}\right)}.
\end{equation*}}
\end{corollary}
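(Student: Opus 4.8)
The plan is to specialize the general Alzer-based SNR-coverage approximation derived just above (the Corollary obtained from Theorem~\ref{thm1} using the bound in \eqref{eq:alzerlemma}) to the two simplifications in force: absence of blockages and $\alpha=2$. First I would impose the no-blockage assumption ($\lambda_\vehicles=0$), under which the link-blocking probability of \eqref{eq:baseblock} vanishes, $\jpD{\distHorzBU}=0$, and by Lemma~\ref{lemma:JBPGivenrud} the joint blockage probability vanishes as well, $\jpV{\distHorzBU}=0$ (each case there carries a factor $1-e^{-\lambda_\vehicles(\cdots)}$). Hence $1-\jpD{\distHorzBU}=1$ and $\jpD{\distHorzBU}-\jpV{\distHorzBU}=0$, so the entire via-RIS contribution drops out and the approximation collapses to
\begin{align*}
\pcu \approx \sum_{n=1}^{n_0}(-1)^{n+1}\binom{n_0}{n}\,\mathbb{E}_{\distHorzBU}\!\left[\exp\!\left(-\frac{n\gamma'}{\ellBU(\distHorzBU)}\right)\right].
\end{align*}

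Next I would substitute the standard path loss with $\alpha=2$, so that $\ellBU(\distHorzBU)=\ell\big(\sqrt{\distHorzBU^2+h_\bs^2}\big)$ gives (absorbing the constant $K$ into $\gamma'$, exactly as in the preceding $\alpha=2$ corollary) an exponent $-n\gamma'(\distHorzBU^2+h_\bs^2)$. Writing out the expectation with the association distance PDF $f_{\distHorzBU}(x)=2\lambda_\bs e^{-2\lambda_\bs x}$ from \eqref{eq:dist_lane}, each summand becomes
\begin{align*}
2\lambda_\bs\,e^{-n\gamma' h_\bs^2}\int_0^\infty \exp\!\left(-n\gamma'\,x^2-2\lambda_\bs x\right)\mathrm{d}x,
\end{align*}
i.e. a Gaussian integral restricted to the half-line $[0,\infty)$.

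The one genuine computation — and the step I expect to be the crux — is evaluating this half-line Gaussian integral. I would complete the square, $-n\gamma' x^2-2\lambda_\bs x=-n\gamma'\big(x+\tfrac{\lambda_\bs}{n\gamma'}\big)^2+\tfrac{\lambda_\bs^2}{n\gamma'}$, and substitute $u=\sqrt{n\gamma'}\,(x+\lambda_\bs/(n\gamma'))$ to map the integral onto $\int_{\lambda_\bs/\sqrt{n\gamma'}}^{\infty}e^{-u^2}\,\mathrm{d}u=\tfrac{\sqrt\pi}{2}\,\mathrm{erfc}\!\big(\lambda_\bs/\sqrt{n\gamma'}\big)$, yielding $\tfrac{\sqrt\pi}{2\sqrt{n\gamma'}}\,e^{\lambda_\bs^2/(n\gamma')}\,\mathrm{erfc}\!\big(\lambda_\bs/\sqrt{n\gamma'}\big)$. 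The delicate points are that the nonzero lower limit is precisely what produces an $\mathrm{erfc}$ rather than a full Gaussian, and keeping the prefactor $2\lambda_\bs$ and the $e^{-n\gamma' h_\bs^2}$ straight. Reassembling, the factor $2\lambda_\bs$ cancels the $2$ in the denominator, leaving $\lambda_\bs\sqrt\pi/\sqrt{n\gamma'}$ times $\exp\!\big(\lambda_\bs^2/(n\gamma')-n\gamma' h_\bs^2\big)\,\mathrm{erfc}\!\big(\lambda_\bs/\sqrt{n\gamma'}\big)$; pulling the common $\lambda_\bs\sqrt\pi$ out of the sum recovers the claimed closed form term for term.

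As an independent check I would set the relative blockage severity $\bseverityinv=0$ in the preceding $\alpha=2$ corollary: its first (direct-link) term then has $(1+\tfrac12\bseverityinv)^2\to 1$, reproducing the same expression, while its expectation term carries the weight $\jpD{\distHorzBU}-\jpV{\distHorzBU}\to 0$ and vanishes. This agreement confirms both the algebra of the erfc evaluation and the correct handling of the blockage weights.
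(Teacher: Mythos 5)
Your proposal is correct and follows essentially the same route the paper takes (implicitly): specialize the Alzer-based approximation of Corollary~1 by setting the blockage probabilities $\jpD{\distHorzBU}$ and $\jpV{\distHorzBU}$ to zero, substitute the $\alpha=2$ path loss with $K$ absorbed into $\gamma'$, and evaluate the resulting half-line Gaussian integral by completing the square to obtain the $\mathrm{erfc}$ form; your cross-check against the $\rho_\bs\to 0$ limit of Corollary~2 confirms the same algebra the paper relies on. The only cosmetic discrepancy is that the result is stated with an equality while the derivation uses the Alzer upper bound (exact only for $n_0=1$), but that is a feature of the paper's statement, not a gap in your argument.
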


\subsection{Interference Characterization}
We now characterize the interference at the user via its Laplace transform. 
Now, if a BS at a given distance is blocked, then all other BSs at a larger distance on the same side will also be blocked, which results in a correlation of blocking states of the BSs on the same side~\cite{GupMalGupAnd2022}. To handle this correlation, we analyze interference conditioned on the distances $\dL$ and $\dR$ of the nearest blockages. 
Since interference is $\intfD$ or $\intfV$ depending on the serving link, we handle these {cases} separately. 

\subsubsection{Direct serving link}
When $\dR> \distHorzBU \fracS{ h_{\vehicles}}{h_{\bs}}$, the serving BS of the user is \ac{LOS}, \ie, the event 
\begin{align}
\A_\bs= \{\dR> \distHorzBU \fracS{ h_{\vehicles}}{h_{\bs}}\},\label{eq:Ab}
\end{align}
{occurs.} In this case, the user is served via a direct link by the serving BS. The user receives interference from all \ac{LOS} BSs located at both sides. 
Further,  for a BS located at distance $r$ on the right side of the user to be \ac{LOS}, we have $\dR > \fracS{r h_{v}}{h_{b}} $. This means that BSs located at distances {only up to $\frac{\dR h_\bs}{h_\vehicles}$} will interfere. Similarly, for BSs at left side, $\dL > \fracS{r h_{v}}{h_{b}} $.
%
%
The  interference is given as
\begin{align}
I(\Phi)&=I(\PhiR)+I(\PhiL)\\
I(\PhiR)&=\sum\nolimits_{X_{i} \in \PhiR \backslash\left\{X_{0}\right\}} g_{i} \ell(\|X_i\|)U_i v_m \cdot \mathbbm{1} \left(\|X_i\| < \fracS{\dR h_\bs}{h_{\vehicles}} \right)\label{eq:IRFD}\\
I(\PhiL)&=\sum\nolimits_{X_{i} \in \PhiL } g_{i} \ell(\|X_i\|)U_i v_s \cdot \mathbbm{1} \left(\|X_i\| < \fracS{\dL h_\bs}{h_{\vehicles}} \right).
\end{align}

\begin{lemma}\label{lem:intfD}
The Laplace transform of the interference from BSs when the user is served by a direct link is given by (See Appendix \ref{app:lap})
\begin{align}
\mathcal{L}_{I}(s|\distHorzBU,\dR,\dL )= &
\expS{ -{\lambda_\bs}
 \int_{\distHorzBU}^{\max \left(\distHorzBU, \fracS{\dR h_\bs}{h_{\vehicles}}\right)}
 \left( \textstyle
 1- \frac{1}{2\left(1+ s \ellBU(x)u_m v_m\right)^{n_0}} -\frac{1}{2\left(1+ s \ellBU(x)u_s v_m\right)^{n_0}} \right)\dd x \nonumber  }\\
 &\times \expS{ -{\lambda_\bs}
\int_{\distHorzBU}^{\max \left(\distHorzBU, \fracS{\dL h_\bs}{h_{\vehicles}}\right)} \left( \textstyle
 1- \frac{1}{2\left(1+ s \ellBU(x)u_m v_s\right)^{n_0}} -\frac{1}{2\left(1+ s \ellBU(x)u_s v_s\right)^{n_0}}\right) \dd x   }.\nonumber
\end{align}
\end{lemma}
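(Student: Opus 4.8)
The plan is to compute $\laplace{I}(s \mid \distHorzBU, \dR, \dL) = \expect{e^{-sI} \mid \distHorzBU, \dR, \dL}$ by exploiting the independence of the right- and left-side BS processes $\PhiR$ and $\PhiL$, together with the independence of the fading and antenna gains across BSs. Since $I = I(\PhiR) + I(\PhiL)$, with the two sums driven by disjoint, independent point processes, the conditional Laplace transform factorizes as $\laplace{I(\PhiR)}(s\mid\cdot)\,\laplace{I(\PhiL)}(s\mid\cdot)$, which already explains the product structure of the claim. I would then treat each factor with the probability generating functional (PGFL) of a one-dimensional PPP.

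The crucial step -- and the one that motivates conditioning on $\dR$ and $\dL$ in the first place -- is the decoupling of the blockage correlation. Because the blockages on the right form a PPP whose nearest point is at $\dR$, a BS at horizontal distance $x$ on the right is in LOS \emph{deterministically} iff no blockage lies within $x\,h_\vehicles/h_\bs$ of the user, i.e. iff $x < \dR h_\bs/h_\vehicles$; an identical statement holds on the left with $\dL$. Thus, conditioned on $\dR$ and $\dL$, the otherwise-correlated LOS indicators collapse to a single hard cutoff on each side, and no independent thinning by $p_\mathrm{NLOS}$ is needed. Next I would fix the interfering point process on each side: since the serving BS is the nearest BS and sits on the right at $\distHorzBU$, conditioning on association removes all BSs closer than $\distHorzBU$ on \emph{both} sides (otherwise one of them would be the serving BS), so by the memorylessness of the PPP the interferers on each side form a PPP of intensity $\lambda_\bs$ on $(\distHorzBU, \infty)$. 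This is exactly why both integrals in the claim start at $\distHorzBU$.

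With the point process and the LOS region pinned down, I would apply the PGFL to each factor, $\expect{\prod_i f(X_i)} = \expS{-\lambda_\bs \int (1 - f(x))\,\dd x}$, where for the right side $f(x) = \expect{\expU{-s g \ellBU(x) U v_m}}$ for $x < \dR h_\bs/h_\vehicles$ and $f(x)=1$ otherwise (so the integrand vanishes beyond the cutoff, turning the upper limit into $\max(\distHorzBU, \dR h_\bs/h_\vehicles)$). The inner expectation is evaluated in two stages: averaging over the Gamma-$n_0$ fading gain $g$ via its MGF produces a factor $(1+s\ellBU(x)U v_m)^{-n_0}$, and averaging over the interferer's antenna gain $U$, which equals $u_m$ or $u_s$ each with probability $1/2$ (the BS may point either way along the street), yields $f(x) = \tfrac12(1+s\ellBU(x)u_m v_m)^{-n_0} + \tfrac12(1+s\ellBU(x)u_s v_m)^{-n_0}$. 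Substituting $1-f(x)$ reproduces the right-side integrand exactly; repeating with receive gain $v_s$ and cutoff $\dL h_\bs/h_\vehicles$ gives the left-side factor, completing the argument.

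The main obstacle I anticipate is not the PGFL algebra but justifying the two conditioning reductions rigorously: first, that conditioning on the nearest-blockage distances is the right way to linearize the correlated LOS field so that the interference Laplace transform remains tractable; and second, the reduced-Palm argument that fixes the interfering BSs to an intensity-$\lambda_\bs$ PPP supported on $(\distHorzBU,\infty)$ on each side after conditioning on association. Care is also needed to keep the transmit/receive beam assignments ($u_m/u_s$ on transmit, $v_m$ on the serving side versus $v_s$ on the opposite side) consistent with the antenna model, since a mismatch there would corrupt the integrands.
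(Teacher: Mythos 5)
Your proposal is correct and follows essentially the same route as the paper's proof in Appendix C: conditioning on the nearest-blockage distances to turn the correlated LOS states into a deterministic cutoff, restricting the interferers to a PPP on $(\distHorzBU,\infty)$ on each side, and then applying the PGFL with the Gamma (Nakagami) moment generating function and the equiprobable $\{u_m,u_s\}$ transmit-gain average, with $v_m$ replaced by $v_s$ for the left-side factor. Your explicit justification of why both integrals start at $\distHorzBU$ and of the independence-based factorization across the two sides is slightly more careful than the paper's write-up but is not a different argument.
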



Special case: For Rayleigh fading $(n_0=1)$,
\begin{align}
&\mathcal{L}_{I}(s| \distHorzBU,\dR,\dL) \\
	&= \exp \left( -\frac{\lambda_\bs}{2}
	 \left(\int_{r}^{\max \left(\distHorzBU, \frac{d_{1} h_{b}}{h_{v}}\right)}
	  \frac{s \ellBU(x)u_m v_m}{1+ s \ellBU(x)u_m v_m} \dd x +
	   \int_{\distHorzBU}^{\max \left(\distHorzBU, \frac{d_{1} h_{b}}{h_{v}}\right)} 
		\frac{s \ellBU(x)u_s v_m}{1+ s \ellBU(x)u_s v_m} \dd x \nonumber \right.\right. \\
	&\qquad
	\left. \left. +  \int_{\distHorzBU}^{\max \left(r, \frac{d_{2} h_{b}}{h_{v}}\right)} 
		\frac{s \ellBU(x)u_m v_s}{1+ s \ellBU(x)u_m v_s} \dd x + 
	\int_{\distHorzBU}^{\max \left(r, \frac{d_{2} h_{b}}{h_{v}}\right)} \frac{s \ellBU(x)u_s v_s}
	{1+ s \ellBU(x)u_s v_s} \dd x \right) \right).\nonumber
\end{align}

\subsubsection{Via-RIS link}
{Note that} when $d_1 < \fracS{\distHorzBU h_\vehicles}{h_\bs}$, the serving BS is NLOS.
If the RIS is in LOS,  the user is  served via a RIS. 
The blocking state of the RIS link depends on $\distHorzBU$ as explained below.

\textbf{Case-I:} when $\distHorzBU< r_\ris$, the associated RIS is on the left side of the user. The blocking state of the UE-RIS link  depends on the blockage distance $\dL$. In particular, this link is LOS when {$\dL > \frac{(r_\ris-\distHorzBU)h_{\vehicles}}{h_{\ris}}$.}


\textbf{Case-II:} when $\distHorzBU> r_\ris$,  the associated RIS is on the right side of the user. Hence, the blocking of the UE-RIS link depends  on $\dR$. This link is LOS when {$\dR > \frac{(\distHorzBU -r_\ris)h_{\vehicles}}{h_{\ris}}$.}


Combining the two, we get that the event that the user is served via RIS is equivalent to
\begin{align}
\A_\ris=\E_\bs\cap\bar{\E_\ris}=\begin{cases}
d_1 < \fracS{\distHorzBU h_\vehicles}{h_\bs}, \ (r_\ris-\distHorzBU ) \frac{h_{\vehicles}}{h_{\ris}}< \dL & \text{ if } \distHorzBU< r_\ris\\
d_1 < \fracS{\distHorzBU h_\vehicles}{h_\bs}, \ (\distHorzBU -r_\ris)\frac{h_{\vehicles}}{h_{\ris}}< \dR & \text{ if } \distHorzBU> r_\ris.
\end{cases}\label{eq:As}
\end{align}

In this case, since the BS is blocked, all other BSs from the same side are also NLOS and hence, the user receives interference from BSs located on only one side (\ie, the left side). We denote it by $\intfV_1$. 
Additionally, a certain amount of interference $\intfV_2$ can come from the same side BSs via RIS. We study both in the following. 

\subIIIsection{BS interference $\intfV_1$:} The interference from BSs is given as
\begin{align*}
    \intfV_1 &= \sum_{X_{i} \in \PhiL \backslash\left\{B_{0}\right\}} 
    g_{i}U_i V_i   \ellBU(\|X_i\|)
	 \cdot \indside{\|X_i\| < 
			\frac{\dL h_\bs}
			{h_\vehicles}}.
\end{align*}
Here, the receive antenna gain $V_i$ will again depend on $\distHorzBU$ as explained below

\textbf{Case-I:} when $\distHorzBU< r_\ris$, the associated RIS is on the left side of the user, and hence, the user receiver antenna will point towards the left. This means that the interfering BSs are in the  direction of the main lobe of the receiving antenna.  Hence, $V_i=v_m$ for all BSs.

\textbf{Case-II:} when $\distHorzBU> r_\ris$,  the associated RIS is on the right side of the user. Hence, the interfering BSs are in the back lobe direction. Hence, $V_i=v_s$ for all BSs.

Now, the Laplace transform can be written as,
\begin{align}
&{\mathcal{L}_{\intfV_1}(s | \distHorzBU, \dL)} = \mathbb{E}_{\Phi, g_{i}, U}\left[\exp \left(-s I_1 \right)\right] =\\
& \expS{\! -\frac{\lambda_{b}}{2} \left(\int_{\distHorzBU}^{\infty} \!
\left[2 - \frac{1}{\left(1+ s \ellBU(x)u_m v(\distHorzBU)\right)^{n_0}}- \frac{1}{\left(1+ s \ellBU(x)u_s v(\distHorzBU)
\right)^{n_0}} \right]
\indside{x < \frac{\dL h_{\bs}}{h_{\vehicles}}\!}
\dd x \right)}\label{eq:intfV1}
\end{align}
with $v(r)=v_m \cdot \mathbbm{1} \left(r < r_s\right) + v_s \cdot \mathbbm{1} \left(r > r_s\right) $.

\subIIIsection{RIS Interference $\intfV_2$:}
Since RISs can receive signals from BSs other than the tagged BSs, they can unintentionally reflect these waves to the user which can cause interference degrading the performance. Moreover, RISs can reflect very narrow beams to the selected user and hence, it is very unlikely that there will be interference at a user from other RISs. 
Let us  assume that the RIS reflective beamwidth $\theta_\ris$ is very small. Let $z_\ris$ be the {distance from RIS upto which a BS can interfere}, {\em i.e.},
   $ z_\ris=\sqrt{\left(h_\ris - h_\bs\right)^{2}+r_{\ris}^{2}} \cdot \theta_{\ris}$.
We can see from Fig. \ref{fig:SINR_BS} that a BS will cause interference at the user (via RIS) only if it lies within a particular region 
it lies within distance $z$ from the tagged BS (on the right). 
The interference power from each BS via RIS is given as
\begin{align}
    P_r'  = 
    h_i  { U_i  \ellBR(x-(\distHorzBU-r_\ris)) \pi (N - 1)^2 } 
    { \ellUR(\distHorzUR) v_m}=h_i \ell_\mathrm{rI}(x,\distHorzUR,r_\ris)U_i v_m,
\end{align}
where $\ell_\mathrm{rI}(x,\distHorzUR,r_\ris)=\ellBR(x-\distHorzUR) \pi (N - 1)^2  
    { \ellUR(\distHorzUR)}$.
Hence, the sum interference is given as
\begin{align*}
    \intfV_2 
     &=\sum_{\substack{X_{i} \in \PhiR \backslash\left\{X_{0}\right\} }} h_{i}U_i v_m \ell_\mathrm{rI}(X_i,\distHorzUR,r_\ris)\indside{X_i<\distHorzBU+z_\ris}.
\end{align*}
\begin{figure}[ht!]
    \centering
    \includegraphics[width = 0.8\textwidth,trim=0 20 0 35,clip]{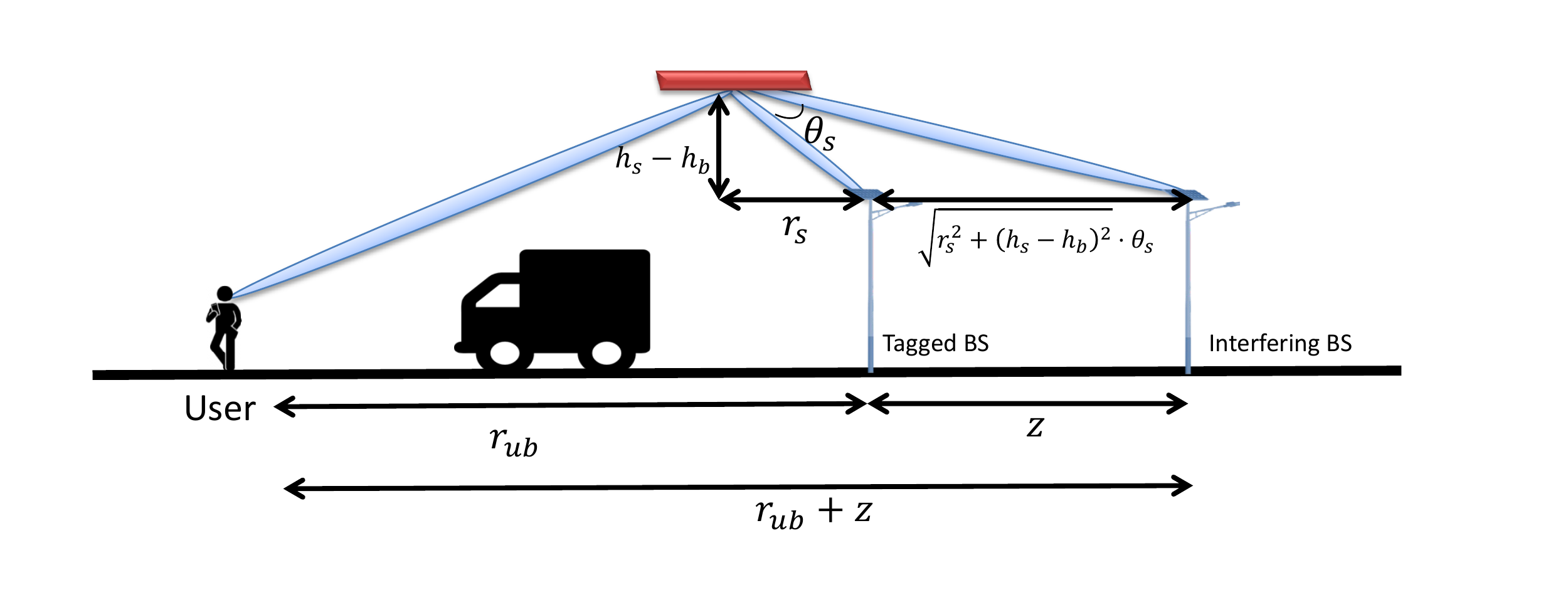}
    \caption{Illustration showing the  BSs interfering via RIS. BSs are located in a small region on the right of the serving BS.}
    \label{fig:SINR_BS}
\figspaceadjust
\end{figure}

The Laplace transform of the interference is given as
\begin{align}
{\laplace{\intfV_2}(s | \distHorzBU)} = \expU{ -\frac{\lambda_{b}}{2} \left(\int_{\distHorzBU}^{\distHorzBU+z_\ris} \left(2 - \frac{1}{\left(1 + s \ell_{\mathrm{rI}}(x)u_m v_m \right)^{n_0}}  - \frac{1}{\left(1 + s \ell_{\mathrm{rI}}(x)u_s v_m \right)^{n_0}}\right) \dd x \right)}.\label{eq:intfV2}
\end{align}
Since the total interference $\intfV$ is given as $\intfV=\intfV_1+\intfV_2$, we get the following result.

\begin{lemma}\label{lem:intfV}
The  Laplace transform of the  interference  when the user is served via a RIS is 
\begin{align*}
{\laplace{\intfV}(s \mid \distHorzBU, d_2) = \laplace{\intfV_1}(s \mid \distHorzBU, d_2)\laplace{\intfV_2}(s \mid \distHorzBU),}
\end{align*}
where the two terms are given in \eqref{eq:intfV1} and \eqref{eq:intfV2}.
\end{lemma}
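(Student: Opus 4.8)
The plan is to establish the claimed factorization by showing that, conditioned on $(\distHorzBU, d_2)$, the two contributions $\intfV_1$ and $\intfV_2$ are independent random variables, after which the Laplace transform of their sum splits into a product. Since $\intfV = \intfV_1 + \intfV_2$ by construction, once conditional independence is available the result is immediate:
\[
\laplace{\intfV}(s \mid \distHorzBU, d_2) = \EX\!\left[e^{-s(\intfV_1 + \intfV_2)} \,\middle|\, \distHorzBU, d_2\right] = \EX\!\left[e^{-s\intfV_1} \,\middle|\, \distHorzBU, d_2\right]\EX\!\left[e^{-s\intfV_2} \,\middle|\, \distHorzBU, d_2\right],
\]
and each factor is exactly the expression already recorded in \eqref{eq:intfV1} and \eqref{eq:intfV2}.

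First I would identify the point processes driving each term. The BS interference $\intfV_1$ is a sum over the LOS interferers drawn from $\PhiL$, the portion of the typical-street BS process lying on the left of the user, together with their i.i.d.\ fading marks $g_i$ and antenna marks $U_i$. The RIS interference $\intfV_2$ is a sum over the interferers drawn from $\PhiR \setminus \{X_0\}$ confined to the narrow window $(\distHorzBU, \distHorzBU + z_\ris)$, with their own i.i.d.\ marks $h_i$ and $U_i$. The structural observation is that these two sums are measurable functions of disjoint parts of the same Poisson process on the line, namely the half-line to the left of the user and the region to the right of the tagged BS. Conditioning on $\distHorzBU$ fixes the location of the tagged BS $X_0$; by Slivnyak's theorem together with the independence of a Poisson process on disjoint regions, the remaining BSs on the two sides of $X_0$ form independent point processes, and their fading and antenna marks are independent across points. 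Hence $\intfV_1$ (a functional of the left process) and $\intfV_2$ (a functional of the right process) are independent given $\distHorzBU$.

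It then remains to account for the additional conditioning on $d_2$. The blockage distance $d_2$ lies on the left side and is generated by the blockage PPP independently of the BS process, so it is independent of $\PhiR$ and therefore of $\intfV_2$; it enters only through the LOS cutoff of $\intfV_1$. Consequently, conditioning additionally on $d_2$ neither couples the two terms nor alters the law of $\intfV_2$, which justifies writing $\laplace{\intfV_2}(s \mid \distHorzBU, d_2) = \laplace{\intfV_2}(s \mid \distHorzBU)$ as in the statement. The main obstacle is thus the dependence bookkeeping rather than any hard calculation: one must verify that (i) no interfering BS is shared between the two sums, since the left LOS interferers and the right via-RIS interferers occupy disjoint regions; (ii) the conditioning variable $d_2$ feeds only into $\intfV_1$ and is independent of everything entering $\intfV_2$; and (iii) the marks $g_i$ and $h_i$ act on disjoint index sets, so no correlation is introduced through the fading. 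Once these three points are checked, independence follows and the product form is simply the Laplace transform of a sum of independent random variables, with the two factors supplied by \eqref{eq:intfV1} and \eqref{eq:intfV2}.
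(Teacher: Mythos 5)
Your proposal is correct and matches the paper's reasoning: the paper justifies the lemma in one line by noting $\intfV=\intfV_1+\intfV_2$, implicitly relying on exactly the independence you spell out, namely that $\intfV_1$ and $\intfV_2$ are functionals of the restrictions of the BS Poisson process to the disjoint regions $\PhiL$ and $\PhiR\setminus\{X_0\}$ with independent marks, and that $\dL$ enters only $\intfV_1$. Your writeup simply makes explicit the dependence bookkeeping that the paper leaves implicit.
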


\subsection{SINR Coverage probability}

We will now present the SINR coverage of the typical user in the following theorem that includes the effect of various interference.

\begin{theorem}\label{thm:PcIFixed}
The SINR coverage probability of the typical user is given as
\begin{align*}
&\pcuI
=\sum_{n=1}^{n_0}
			(-1)^{n+1} \binom{n_0}{n} \bigg[\\\
			&{\int_0^\infty\int_0^\infty\int_{\distHorzBU h_\vehicles/h_\bs}^\infty}
			e^{- \frac{ n \gamma \noisen \eta_m}{\ellBU(\distHorzBU)u_m v_m}  }
		\laplaces{I|\distHorzBU,\dR,\dL}{\textstyle \frac{n\gamma\eta_m  }{\ellBU(\distHorzBU)u_m v_m} }
 f_{\dR}(\dR)  f_{\dL}(\dL)f_{\distHorzBU}(\distHorzBU) \dd\dR\dd\dL\dd \distHorzBU\\
&+\int_{0}^{r_\ris}{ \jpD{\distHorzBU} \cdot \int_{(r_\ris-\distHorzBU) h_\vehicles/h_\ris}^\infty 
e^{- \frac{ n \gamma \noisen \eta_m}{\ell_\mathrm{r}(\distHorzBU,r_\ris)u_m v_m}  }
		\laplaces{\intfV|\distHorzBU,\dL}{\textstyle \frac{n\gamma\eta_m  }{\ell_\mathrm{r}(\distHorzBU,r_\ris)u_m v_m} }
  f_{\dL}(\dL)f_{\distHorzBU}(\distHorzBU)\dd\dL\dd \distHorzBU}\\
&+\int_{r_\ris}^\infty{ \jpD{\distHorzBU} \cdot \int_{(\distHorzBU-r_\ris) h_\vehicles/h_\ris}^{\distHorzBU h_\vehicles/h_\bs}
e^{- \frac{ n \gamma \noisen \eta_m}{\ell_\mathrm{r}(\distHorzBU,r_\ris)u_m v_m}  }
		\laplaces{\intfV|\distHorzBU,\dL}{\textstyle \frac{n\gamma\eta_m  }{\ell_\mathrm{r}(\distHorzBU,r_\ris)u_m v_m} }
 f_{\dL}(\dL)f_{\distHorzBU}(\distHorzBU) \dd\dL\dd \distHorzBU
\bigg],}\end{align*}
where $\laplace{\intfD}(s)$ and $\laplaces{\intfV}{s}$ are given in Lemma \ref{lem:intfD} and \ref{lem:intfV}.
\end{theorem}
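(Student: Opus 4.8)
The plan is to decompose the coverage event according to which link serves the user, then to handle each piece by conditioning on the nearest-blockage distances $\dR$ and $\dL$ so that every blocking event becomes a deterministic indicator, and finally to invoke Alzer's lemma~\eqref{eq:alzerlemma} together with the interference Laplace transforms of Lemmas~\ref{lem:intfD} and~\ref{lem:intfV}. Since the serving events $\A_\bs$ (BS in LOS) and $\A_\ris$ (BS in NLOS but RIS in LOS) are disjoint and exhaust the non-outage cases, and carry the two distinct SINR expressions $\sinrD$ and $\sinrV$ from~\eqref{eq:SI} and~\eqref{eq:SIp}, I would start from
\[
\pcuI=\prob{\sinrD>\gamma,\ \A_\bs}+\prob{\sinrV>\gamma,\ \A_\ris}
\]
and treat the two terms separately.

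For the direct-link term I condition on $(\distHorzBU,\dR,\dL)$. Given these, $\A_\bs$ reduces to the indicator $\dR>\distHorzBU h_\vehicles/h_\bs$ of~\eqref{eq:Ab}, and the set of LOS interfering BSs (hence $\intfD$) is fixed. The coverage condition becomes $g_{\rm bu}>\gamma(\noisen+\intfD)/(\ellBU(\distHorzBU)u_m v_m)$; applying Alzer's lemma to the Gamma CCDF $F_{n_0}$ turns this into $\sum_{n}(-1)^{n+1}\binom{n_0}{n}$ times an exponential in the threshold. The noise contribution factors out as $e^{-n\gamma\noisen\eta_m/(\ellBU(\distHorzBU)u_m v_m)}$, while the expectation over the interference is exactly $\laplace{\intfD}$ evaluated at $s=n\gamma\eta_m/(\ellBU(\distHorzBU)u_m v_m)$, supplied by Lemma~\ref{lem:intfD}. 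Unconditioning with $f_{\dR}$, $f_{\dL}$ and $f_{\distHorzBU}$, while restricting $\dR$ to $(\distHorzBU h_\vehicles/h_\bs,\infty)$ to enforce $\A_\bs$, produces Term~1.

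The RIS-served term is split at $\distHorzBU=r_\ris$, since this decides on which side the associated RIS lies and therefore which blockage distance governs the RIS-LOS event of~\eqref{eq:As}. In Case-I ($\distHorzBU<r_\ris$) the BS-NLOS event depends on the right-side distance and contributes the factor $\jpD{\distHorzBU}$, whereas the RIS-LOS event $\dL>(r_\ris-\distHorzBU)h_\vehicles/h_\ris$ depends on $\dL$, the very same distance on which the one-sided interference $\intfV_1$ depends. Applying Alzer's lemma to $g_{\rm su}$ and using $\laplace{\intfV}=\laplace{\intfV_1}\laplace{\intfV_2}$ from Lemma~\ref{lem:intfV}, then integrating $\dL$ over $((r_\ris-\distHorzBU)h_\vehicles/h_\ris,\infty)$ and $\distHorzBU$ over $(0,r_\ris)$, yields Term~2. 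Case-II ($\distHorzBU>r_\ris$) proceeds identically, except that the BS-NLOS and RIS-LOS conditions now jointly constrain the right-side blockage distance to the window $((\distHorzBU-r_\ris)h_\vehicles/h_\ris,\ \distHorzBU h_\vehicles/h_\bs)$, which supplies the finite inner limits of Term~3, while the remaining integration over the left-side distance carries the interference Laplace transform.

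The delicate part is the blockage-correlation bookkeeping rather than the algebra. Because the nearest blockages $\dR,\dL$ simultaneously control the serving-link state, the RIS-link state, and which BSs are LOS to the user, these events are strongly correlated and cannot be multiplied naively; the whole argument hinges on conditioning on $(\dR,\dL)$ first, so that each per-link blocking event becomes a deterministic indicator and the interference becomes a sum over an explicit LOS set. The main thing to get right is the case-by-case assignment of each event to the correct blockage distance---$\dR$ versus $\dL$ in the direct case, and the switch between them across $\distHorzBU\lessgtr r_\ris$ in the RIS case---so that the integration limits and the arguments of $\laplace{\intfD}$ and $\laplace{\intfV}$ align exactly with Lemmas~\ref{lem:intfD} and~\ref{lem:intfV}.
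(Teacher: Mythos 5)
Your proposal is correct and follows essentially the same route as the paper's proof: partitioning on the serving events $\A_\bs$ and $\A_\ris$ (the paper also writes the zero outage term explicitly), conditioning on the nearest-blockage distances $\dR,\dL$ to turn the correlated blocking events into deterministic indicators and integration limits via \eqref{eq:Ab} and \eqref{eq:As}, then applying Alzer's lemma and identifying the interference expectation with the Laplace transforms of Lemmas \ref{lem:intfD} and \ref{lem:intfV}. Your Case-II bookkeeping (the window on the right-side distance with the left-side distance carrying the interference) in fact matches the appendix's derivation more faithfully than the compressed form in the theorem statement.
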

 \begin{IEEEproof}
Similar to previous sections, to account for the correlation of blocking states of the BSs on the same side \cite{GupMalGupAnd2022}, we first analyze coverage probability conditioned on the distances $\dL$ and $\dR$ of the nearest blockages and then average it with respect to their distributions. See Appendix \ref{app:PcIFixed} for detailed proof.
\end{IEEEproof}

The coverage is a function of $r_\ris$ and hence, can be maximized by choosing an optimal value of $r_\ris$, as we do in the numerical section. 

\section{Analysis for Cell-Size Dependent Deployment of RISs}
The fixed distance deployment of RIS has an issue in that {the} RIS location can go beyond the cell of its associated BS for smaller cells. It is more intuitive to specify an RIS location relative to the cell. In this section, we consider a case where RIS is deployed at a distance proportional to the cell radius. 
Since several steps are similar to the previous case, we have skipped the derivations. Note that cell radius is equal to half of the distance $\distHorzNN$ between the associated BS $X_0$ and the nearest neighbor $X_1$, hence,
we first calculate the distribution of $\distHorzNN$. 

\begin{lemma}
\label{lemma:condistr}
Conditioned on the horizontal distance $\distHorzBU$ between the typical  user and the serving BS $X_0$, the conditional probability distribution of the  distance $\distHorzNN$ of  the nearest neighbour BS $X_1$ of the serving BS \ie $f_{\distHorzNN\mid \distHorzBU}(y|x)$ is given by (See Appendix \ref{app:lemma_condistr} for proof)
\begin{align}
 &f_\mathrm{\distHorzNN \mid \distHorzBU}(y|x) = 
 \begin{cases}
    \lambda_\mathrm{b} \exp\left(-\lambda_\mathrm{b} y\right) &  \text{if } 0 \leq y < 2x, \\
    2\lambda_\mathrm{b} \exp\left(-2\lambda_\mathrm{b} (y - x)\right) &  \text{if } y \geq 2x.
    \end{cases}
 \label{eq:condistr}
\end{align}%
\end{lemma}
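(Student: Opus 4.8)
The plan is to reduce the conditional law of $\distHorzNN$ to the minimum of two independent one-sided nearest-point distances, evaluated under the point process conditioned on the user association. First I would fix coordinates by placing the typical user at the origin of the typical street (a 1D line) and, following the convention already adopted in the paper, placing the serving BS $X_0$ on the right at position $+x$ after conditioning on $\distHorzBU = x$. The event that $X_0$ is the nearest BS at distance $x$ means that no other BS lies strictly inside the interval $(-x, x)$, while by the independence of a PPP over disjoint regions (and Slivnyak's theorem for the conditioning on a point at $x$), the restrictions of $\Phi_\mathrm{b}$ to $(x, \infty)$ and to $(-\infty, -x)$ remain independent homogeneous PPPs of intensity $\lambda_\mathrm{b}$, unaffected by the conditioning on the empty middle interval. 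This structural description, justified by the void probabilities and the complete independence property of the PPP, is the only probabilistic input needed.

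Next I would identify the two candidates for the nearest neighbour $X_1$ of $X_0$. The nearest BS to the right of $X_0$ sits at distance $D_{\mathrm R} \sim \mathrm{Exp}(\lambda_\mathrm{b})$ from $X_0$, by the memorylessness of the PPP on $(x, \infty)$. On the left, the closest BS to $X_0$ is the rightmost point of the PPP on $(-\infty, -x)$; it lies at distance $W \sim \mathrm{Exp}(\lambda_\mathrm{b})$ to the left of $-x$, hence at distance $D_{\mathrm L} = 2x + W$ from $X_0$ at $+x$. Because the two half-line processes are independent, $D_{\mathrm R}$ and $W$ are independent, and $\distHorzNN = \min(D_{\mathrm R}, D_{\mathrm L})$ with $D_{\mathrm L} \ge 2x$ almost surely.

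The distribution then follows by a short case analysis on the threshold $y = 2x$. For $0 \le y < 2x$ the left candidate cannot attain a value below $2x$, so $\{\distHorzNN \le y\} = \{D_{\mathrm R} \le y\}$ and the density is simply $\lambda_\mathrm{b}\exp(-\lambda_\mathrm{b} y)$. For $y \ge 2x$ I would compute the survival function as a product, $\mathbb{P}(\distHorzNN > y) = \mathbb{P}(D_{\mathrm R} > y)\,\mathbb{P}(D_{\mathrm L} > y) = \exp(-\lambda_\mathrm{b} y)\exp(-\lambda_\mathrm{b}(y - 2x)) = \exp(-2\lambda_\mathrm{b}(y - x))$, and differentiating in $y$ gives $2\lambda_\mathrm{b}\exp(-2\lambda_\mathrm{b}(y - x))$, which is exactly the claimed piecewise density. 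A quick check that the two pieces integrate to one over $[0,\infty)$ confirms the normalization.

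The main obstacle, and the only place where care is genuinely required, is the asymmetry that the user-association conditioning imposes on the two neighbours of $X_0$: the right neighbour is untouched and exponential with rate $\lambda_\mathrm{b}$, whereas the left neighbour is \emph{pushed out} to be at least $2x$ because the emptiness of $(-x, x)$ forces the nearest left BS beyond $-x$. Getting this shift correct is what produces both the change of effective rate from $\lambda_\mathrm{b}$ to $2\lambda_\mathrm{b}$ and the jump of the density at $y = 2x$; the remainder is routine differentiation.
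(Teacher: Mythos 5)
Your proposal is correct and follows essentially the same route as the paper's Appendix proof: both condition on the interval $[-x,x]$ being empty so that the remaining BSs form a PPP on its complement, and both obtain $\mathbb{P}(\distHorzNN>y\mid\distHorzBU=x)$ as the void probability of $[x-y,x+y]$ intersected with that complement, split at $y=2x$. Your packaging as $\distHorzNN=\min(D_{\mathrm R},D_{\mathrm L})$ with $D_{\mathrm L}\ge 2x$ is just a more explicit bookkeeping of the same void-probability computation, and your case analysis, differentiation, and normalization check all match the stated density.
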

We deploy the RIS on both sides of BS each at a distance $\frac{f \distHorzNN}{2}$ from the BS, where $f \in (0, 1)$ is the design parameter. 

 \subsection{Characterization of Serving Link Blockages and Connection Failure Probability}
\label{sec:LBRIS}


Now, we  derive the joint blockage probability similar to the previous case.
\begin{lemma}
Conditioned on the horizontal distance $\distHorzBU$ between the typical  user and the BS, and the nearest neighbor distance $\distHorzNN$, the joint probability of the events $E_\mathrm{b}$ and $E_\mathrm{s}$ is 
\begin{align*}
\jpV{ \distHorzBU,\distHorzNN}= &\mathbb{P}(\E_\mathrm{b}\cap \E_\mathrm{s} \mid \distHorzBU, \distHorzNN) \\
= &\begin{cases}
    \left(1 - \exp\left(-\lambda_\mathrm{v}\frac{h_\mathrm{v}}{h_\mathrm{b}}\distHorzBU\right)\right)  
     \left(1 - \exp\left(- \lambda_\mathrm{v} (\frac{f \distHorzNN}{2} - \distHorzBU)\frac{h_\mathrm{v}}{h_\mathrm{s}}\right)\right)&\text{ if }  0 \leq \distHorzBU \leq \frac{f \distHorzNN}{2}, \\
    1 - \exp\left(- \lambda_\mathrm{v} (\distHorzBU - \frac{f \distHorzNN}{2})\frac{h_\mathrm{v}}{h_\mathrm{s}}\right)& \text{ if } \distHorzBU > \frac{f \distHorzNN}{2}.
    \end{cases}
\end{align*}
\end{lemma}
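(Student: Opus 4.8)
The plan is to mirror the argument for the fixed-distance case in Lemma~\ref{lemma:JBPGivenrud}, exploiting the fact that conditioning on $\distHorzNN$ merely pins the serving RIS to the deterministic horizontal offset $\tfrac{f\distHorzNN}{2}$ from its BS. First I would note that the blockage process $\Phi_\mathrm{v}$ is independent of the BS process $\Phi_\mathrm{b}$, so conditioning on the nearest-neighbour distance $\distHorzNN$ (a functional of $\Phi_\mathrm{b}$) fixes the RIS geometry without perturbing the blockage statistics. Consequently the entire computation reduces to the one already carried out in Lemma~\ref{lemma:JBPGivenrud} under the substitution $r_\mathrm{s}\mapsto \tfrac{f\distHorzNN}{2}$, and the two branches correspond precisely to whether the user lies inside or outside the BS--RIS segment.

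For the regime $0\le \distHorzBU\le \tfrac{f\distHorzNN}{2}$, I would argue that the user sits between its BS (to the right, at horizontal distance $\distHorzBU$) and the serving RIS (to the left, at horizontal distance $\tfrac{f\distHorzNN}{2}-\distHorzBU$). The event $\E_\bs$ is then determined solely by the nearest right-side blockage $d_1$, while $\E_\ris$ is determined solely by the nearest left-side blockage $d_2$; since these live on disjoint half-lines of the independent PPP $\Phi_\mathrm{v}$, the events $\E_\bs$ and $\E_\ris$ are independent and $\mathbb{P}(\E_\bs\cap\E_\ris\mid\distHorzBU,\distHorzNN)$ factorises. Lemma~\ref{lemma:BSBPGivenrud} supplies the first factor $1-\exp(-\lambda_\mathrm{v}\tfrac{h_\mathrm{v}}{h_\mathrm{b}}\distHorzBU)$, and the same void-probability computation applied to the RIS link at height $h_\mathrm{s}$ and horizontal separation $\tfrac{f\distHorzNN}{2}-\distHorzBU$ supplies the second factor, yielding the top branch.

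The regime $\distHorzBU>\tfrac{f\distHorzNN}{2}$ is where the correlation must be handled with care, and I expect this to be the one subtle point. Here both the BS and the RIS lie to the right of the user, so both blockage events are governed by the \emph{same} nearest right-side blockage $d_1$, namely $\E_\bs=\{d_1<\distHorzBU\tfrac{h_\mathrm{v}}{h_\mathrm{b}}\}$ and $\E_\ris=\{d_1<(\distHorzBU-\tfrac{f\distHorzNN}{2})\tfrac{h_\mathrm{v}}{h_\mathrm{s}}\}$. The key inequality is that the RIS threshold is the smaller of the two: because $h_\mathrm{s}>h_\mathrm{b}$ and $\distHorzBU-\tfrac{f\distHorzNN}{2}<\distHorzBU$, one has $(\distHorzBU-\tfrac{f\distHorzNN}{2})\tfrac{h_\mathrm{v}}{h_\mathrm{s}}<\distHorzBU\tfrac{h_\mathrm{v}}{h_\mathrm{b}}$, so $\E_\ris\subseteq\E_\bs$ and the intersection collapses to $\E_\ris$. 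The joint probability is then just $\mathbb{P}(\E_\ris\mid\distHorzBU,\distHorzNN)=1-\exp(-\lambda_\mathrm{v}(\distHorzBU-\tfrac{f\distHorzNN}{2})\tfrac{h_\mathrm{v}}{h_\mathrm{s}})$ by the void probability, giving the bottom branch. Assembling the two cases completes the proof; the only thing to guard against is justifying the nesting $\E_\ris\subseteq\E_\bs$ through the height ordering rather than naively multiplying probabilities as in the first regime.
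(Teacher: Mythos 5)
Your proposal is correct and follows essentially the same route as the paper, which proves this lemma by direct analogy with the fixed-distance case (Lemma~\ref{lemma:JBPGivenrud}): independence of the two blockage events on disjoint half-lines when the user lies between the BS and the RIS, and collapse of the intersection to the RIS-blocking event when both lie on the same side. Your explicit justification of the nesting $\E_\ris\subseteq\E_\bs$ via $h_\mathrm{s}>h_\mathrm{b}$ and $\distHorzBU-\tfrac{f\distHorzNN}{2}<\distHorzBU$ is a welcome sharpening of what the paper leaves implicit, but it is the same argument.
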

By unconditioning on $\distHorzBU$ and $\distHorzNN$, using \eqref{eq:dist_lane} and \eqref{eq:condistr}, and after simplifying, we get the joint blockage probability aka connection failure probability as
\begin{align}
    \jp{}=\mathbb{P}(\E_\mathrm{b}\cap \E_\mathrm{s}) =
    & \frac{(f - 1)(f - 2)}{2(4\rseverity + 1 - f)} + \frac{f^3}{2(4\bseverity + f)(4\rseverity + f)} + \frac{2\rseverity}{(4\rseverity + 1 - f)(2\rseverity + 1)}.
    \label{eq:joint_block2}
\end{align}
Note that for a fixed $f$, the joint blockage probability decreases as we increase $\rseverity$ as the ratio $\rseverity$ has a degree $-1$ in all the terms of the above expression. This means that for a fixed density of BSs and blockages, the joint blockage probability can be decreased if we increase the height of RISs or decrease the height of blockages. It can also be observed that the height of RIS is more influential in affecting the joint blockage probability than the height of the BS.

\subsection{SNR Coverage Probability}


\begin{lemma}
The coverage probability of the typical  user, when RISs are deployed at $f$ fraction of cell radius is
\begin{align}
    \pcu  &
    \approx \sum\limits_{n = 1}^{n_0}(-1)^{n+1}\binom{n_0}{n}\left[
    \expects{\distHorzBU}{
    \expS{
    {-\frac{n \gamma'  }{\ellBU(\distHorzBU)} 
    }
    }  
    (1-\jpD{\distHorzBU})}
    \nonumber \right. \\
    & \ \ \ + \left.\expects{\distHorzBU,\distHorzNN}{
    \expS{-\frac{n \gamma'  
    }
    {
    \ell_\mathrm{r}(\distHorzBU, f\distHorzNN/2)}} (\jpD{\distHorzBU}-\jpV{\distHorzBU,\distHorzNN})}\right]
\end{align}
where
$\gamma' =\eta_m \snrthreshold \noisen  /(u_mv_m)$.
%
Here, the expectation in the second term is taken with respect to the distributions of the distance between BS and user $\distHorzBU$, and the nearest neighbour distance of the BS $\distHorzNN$ given in  \eqref{eq:dist_lane} and \eqref{eq:condistr} respectively.
\end{lemma}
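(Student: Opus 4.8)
The plan is to mirror the proof of Theorem~\ref{thm1} (see Appendix~\ref{app:thm1}), replacing the deterministic RIS offset $r_\ris$ by the random quantity $f\distHorzNN/2$ and carrying one additional expectation over the nearest-neighbour distance $\distHorzNN$. First I would write the coverage probability by the law of total probability over the two mutually exclusive serving events introduced in Section~\ref{sec:LBRIS}: the direct-link event $\A_\bs$ (serving BS in LOS) and the via-RIS event $\A_\ris=\E_\bs\cap\bar{\E_\ris}$ (serving BS in NLOS but associated RIS in LOS), giving
\begin{align*}
\pcu=\prob{\snrD>\gamma,\ \A_\bs}+\prob{\snrV>\gamma,\ \A_\ris}.
\end{align*}

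Next I would condition on the geometry. Conditioned on $\distHorzBU$, the direct link is LOS with probability $1-\jpD{\distHorzBU}$, and both this weight and the direct-link SNR expression depend on $\distHorzBU$ alone. Since the fading power $\gBU$ is Gamma with parameter $n_0$ and $F_{n_0}$ is its CCDF, the conditional direct-link coverage is $F_{n_0}\!\left(\gamma\noisen/(\ellBU(\distHorzBU)u_mv_m)\right)$. For the via-RIS branch I would additionally condition on $\distHorzNN$: the probability that the BS is NLOS while the RIS is LOS is exactly $\jpD{\distHorzBU}-\jpV{\distHorzBU,\distHorzNN}$ from the cell-size-dependent joint blockage lemma, and the conditional via-RIS coverage is $F_{n_0}\!\left(\gamma\noisen/(\ell_\mathrm{r}(\distHorzBU,f\distHorzNN/2)u_mv_m)\right)$, since the RIS now sits at offset $f\distHorzNN/2$. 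Assembling these yields the exact conditional expression, to be averaged over the joint law of $(\distHorzBU,\distHorzNN)$.

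I would then apply Alzer's lemma \eqref{eq:alzerlemma} to each Gamma CCDF, turning $F_{n_0}(t)$ into $\sum_{n=1}^{n_0}(-1)^{n+1}\binom{n_0}{n}e^{-n\eta_m t}$; with $\gamma'=\eta_m\snrthreshold\noisen/(u_mv_m)$ the exponents collapse to $n\gamma'/\ellBU(\distHorzBU)$ and $n\gamma'/\ell_\mathrm{r}(\distHorzBU,f\distHorzNN/2)$, which is the source of the $\approx$ in the statement. Finally I would note that the direct-link weight $1-\jpD{\distHorzBU}$ and its path loss depend on $\distHorzBU$ alone, so marginalising out $\distHorzNN$ there leaves a single expectation over $\distHorzBU$, whereas the via-RIS term retains the full double expectation over $(\distHorzBU,\distHorzNN)$, evaluated with $f_{\distHorzBU}$ from \eqref{eq:dist_lane} and the conditional density $f_{\distHorzNN\mid\distHorzBU}$ from \eqref{eq:condistr}.

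The main obstacle is the correct bookkeeping of the conditioning introduced by the random RIS location. In the fixed-distance case $r_\ris$ is a constant, so for the noise-limited SNR only $\distHorzBU$ matters; here both the via-RIS SNR and the RIS-side blockage weight depend on $\distHorzNN$, so one must verify that, conditioned on $(\distHorzBU,\distHorzNN)$, the fading, the blockage events, and the neighbour geometry factorise cleanly, so that each association weight multiplies the appropriate conditional coverage. One must also confirm that the two serving events are complementary with respect to the LOS/NLOS states, so that no double counting occurs. Once this factorisation is justified exactly as in Appendix~\ref{app:thm1}, the remaining work is the routine substitution $r_\ris\to f\distHorzNN/2$ and the extra integration over $\distHorzNN$.
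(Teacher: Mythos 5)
Your proposal is correct and follows essentially the same route as the paper, which omits an explicit proof for this lemma precisely because it is the Theorem~\ref{thm1}/Corollary~1 argument (conditioning on the serving events, Gamma CCDF for the fading, then Alzer's lemma) with $r_\ris$ replaced by $f\distHorzNN/2$ and an extra expectation over $\distHorzNN$ in the via-RIS term. Your bookkeeping of which term retains the $\distHorzNN$-dependence matches the stated result, so nothing further is needed.
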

This expression can be further simplified as done in Corollary 1.

\subsection{SINR Coverage Probability}

We will now incorporate the effect of interference in our analysis to observe  the respective trends for the SINR coverage probability. When the associated BS is LOS (\ie event $\A_\bs$), the RISs are not involved, hence the analysis is exactly the same. 
Hence, we focus on the event $\A_\ris$ 
which is the case when the tagged BS is blocked while at the same time, the corresponding RIS is not blocked.


In this case, for a given $\frac{\distHorzNN}{2}$ and  $\distHorzBU$, the interfering BSs lie in the range $[\max{(r, \distHorzNN - \distHorzBU)}, \frac{d_2 h_{b}}{h_{v}}]$ on the left side. This range represents all unblocked BSs on the left side of the user. The lower limit is $\max{(r, \distHorzNN - r)}$, 
The lower limit is due to two restrictions, first  there can be no other BS at a distance less than $\distHorzBU$ and second as no other BS can be less than  $\distHorzNN$ distance from the associated BS. 
%
%
Similar to the fixed distance deployment case, given $\dR,\dL,\distHorzBU,\distHorzNN$, the interference is given as
\begin{align*}
    \intfV=& \sum\nolimits_{X_{i} \in \PhiL \backslash\left\{B_{0}\right\}} g_{i} \ellBU (\|X_i\|)U_i v_{ms}^{f} \cdot \mathbbm{1} \left(\|X_i\| < \fracS{\dL h_{\bs}}{h_{\vehicles}} \right) \cdot \mathbbm{1} \left(\|X_i\| > \max{(\distHorzBU, \distHorzNN - \distHorzBU)} \right) \\
     &+ \sum\nolimits_{X_{i} \in \PhiR} h_{i} \ell_{\mathrm{rI}}(X_i,\distHorzUR,f\distHorzNN/2)U_i v_m \cdot \mathbbm{1} \left(\distHorzBU+\distHorzNN<X_i< \distHorzBU+ z_\ris \right)
\end{align*}
where $v_{ms}^f = \left[v_m \cdot \mathbbm{1} \left(r < \frac{f \distHorzNN}{2}\right) + v_s \cdot \mathbbm{1} \left(r > \frac{f \distHorzNN}{2}\right) \right]$ and $z_\ris=\sqrt{\left(h_\ris - h_\bs\right)^{2}+{\left(\frac{f \distHorzNN}{2}\right)}^{2}} \cdot \theta_\beam$.
Now, the Laplace transform of the interference is given as
\begin{align}
&\mathcal{L}_{\intfV}(s \mid r, \distHorzNN, d_1, d_2) = \exp \left( -\lambda_{b} \left(\int_{\max(r, \distHorzNN - r)}^{\fracS{\dL h_{b}}{h_{v}}} 1 - \mathbb{E}_U\left[\frac{1}{1+ s \ellBU(x)U v_{ms}^f}\right]^{n_0} \dd x  \nonumber \right. \right. \\
&\left. \left. +  \indside{z_\ris>\distHorzNN}\int_{\distHorzBU+ \distHorzNN}^{\distHorzBU + z_\ris} 1 - \mathbb{E}_U\left[\frac{1}{1+ s \ell_{\mathrm{rI}}(x,\distHorzBU,f\distHorzNN/2)U v_m}\right]^{n_0} \dd x
 \right) \right)\nonumber\\
&\times \mathbb{E}_U\left[\frac{1}{1+ \indside{z_\ris\ge \distHorzNN} s \ell_{\mathrm{rI}}(\distHorzBU+ \distHorzNN,\distHorzBU,f\distHorzNN/2)U v_m}\right]^{n_0} 
\end{align}
where the last term is due to interference from the nearest BS.
Now the coverage probability can be computed as
\begin{small}
\begin{align*}
&\pcuI
=\sum_{n=1}^{n_0}
			(-1)^{n+1} \binom{n_0}{n} \bigg[\\\
			&
			\int_0^\infty\int_{\distHorzBU h_\vehicles/h_\bs}^\infty\int_0^\infty 
			e^{- \frac{ n \gamma \noisen \eta_m}{\ellBU(\distHorzBU)u_m v_m}  }
		\laplaces{I|\distHorzBU,\dR,\dL}{\textstyle \frac{n\gamma\eta_m  }{\ellBU(\distHorzBU)u_m v_m} }
 f_{\dR}(\dR)  f_{\dL}(\dL)f_{\distHorzBU}(\distHorzBU) \dd\dR\dd\dL\dd \distHorzBU\\
&+
\int_0^\infty 
\int_{2\distHorzBU/f}^\infty
\int_0^{\distHorzBU h_\vehicles/h_\bs}
\int_{(f\distHorzNN/2-\distHorzBU) h_\vehicles/h_\ris}^\infty 
e^{- \frac{ n \gamma \noisen \eta_m}{\ell_\mathrm{r}(\distHorzBU,f\distHorzNN/2)u_m v_m}  }
		\laplaces{\intfV|\distHorzBU,\dR,\dL}{\textstyle \frac{n\gamma\eta_m  }{\ell_\mathrm{r}(\distHorzBU,f\distHorzNN/2)u_m v_m} }
		\\&
\qquad\qquad  f_{\dR}(\dR)  f_{\dL}(\dL)f_{\distHorzNN}(\distHorzNN)f_{\distHorzBU}(\distHorzBU)\dd\dR\dd\dL\dd  \distHorzNN \dd \distHorzBU
\end{align*}\begin{align*}
&+
\int_0^\infty 
\int_{0}^{2\distHorzBU/f}
\int_{(\distHorzBU-f\distHorzNN/2) h_\vehicles/h_\ris}^{\distHorzBU h_\vehicles/h_\bs}\int_{0}^\infty 
e^{- \frac{ n \gamma \noisen \eta_m}{\ell_\mathrm{r}(\distHorzBU,f\distHorzNN/2)u_m v_m}  }
		\laplaces{\intfV|\distHorzBU,\dR,\dL}{\textstyle \frac{n\gamma\eta_m  }{\ell_\mathrm{r}(\distHorzBU,f\distHorzNN/2)u_m v_m} }\\
		&
		\qquad \qquad
f_{\dR}(\dR)  f_{\dL}(\dL)f_{\distHorzNN}(\distHorzNN)f_{\distHorzBU}(\distHorzBU) \dd\dR\dd\dL \dd  \distHorzNN\dd \distHorzBU
\bigg].%
\end{align*}%
\end{small}%
%
%

Till now, we have considered a typical user at a street and hence, have limited our focus to a single street. We note that in a scenario with multiple streets, it may be useful to deploy additional RISs on these intersections as it enables RIS to serve two streets. In the next section, we study this deployment. 

\section{Impact of Intersection Mounting of RISs}
\vspace{-0.15cm}
In an urban setting, intersections of streets are crucial. We now consider an additional set of RISs that are mounted at each street intersection and are associated with a nearby BS  on each crossing street.
The density of the intersection-mounted RISs in the network is governed by $\lambda_\mathrm{r}$.

\newcommand{\general}{}

If the direct link and the associated RIS link are blocked to the typical \general user, it attempts to connect to the nearest BS $Y_0$ from the neighboring street via the nearest intersection-mounted RIS on the opposite side as compared to the serving BS. Let the distance to the nearest intersection from the typical general user on the opposite side as the serving BS be given by $r_{\ue\xsec}$. The PDF of $r_{\ue\xsec}$ is~\cite{ghatak2019small}\vspace{-1mm}
\begin{align}
    f_{r_{\ue\xsec}}(r) = 2\lambda_\mathrm{r} \exp(-2 \lambda_\mathrm{r} r).
    \label{eq:dist_intersection}
\end{align}
Thus, the distance to the nearest intersection-mounted RIS is $d_{\ue\xsec} = \left(r_{\ue\xsec}^2 + h_\ris^2\right)^{1/2}$.
The distance between the nearest intersection-mounted RIS and the nearest BS $Y_0$ on the adjoining street is $d_{\xsec\bs} = \left((h_\mathrm{b} - h_\mathrm{s})^2 + r_{\xsec\bs}^2\right)^{1/2}$, where $r_{\xsec\bs}$ is the horizontal distance of the adjoining-street nearest BS from the intersection. Note that $r_{\xsec\bs}$ is independent and identically distributed as $\distHorzBU$ and its PDF is given by \eqref{eq:dist_lane}.
%
Let $\A_\xsec$ denote the event the user is able to establish an \ac{LOS} connection to the nearest intersection-mounted RIS and the link to the nearest BS as well as its associated RIS is blocked. 
Therefore, 
$$\A_\xsec=\E_\bs\cap \E_\ris\cap \overline{\E_\xsec},$$
where 
$\overline{\E_\xsec}$ denotes the event that there exists an LOS path from the user to the nearest intersection-mounted RIS.
\begin{lemma}
Given $r_{\ue\xsec}$ and $\distHorzBU$, the association probability to the 
nearest intersection-mounted RIS for the typical user is 
    \vspace{-0.1cm}
\begin{align}
\prob{\A_\xsec|\distHorzBU, r_{\ue\xsec}}
= \begin{cases}
    \exp\left(-\lambda_\mathrm{v} \frac{h_\mathrm{v}}{h_\mathrm{s}} r_{\ue\xsec}\right) \left(1 - \exp\left(-\lambda_\mathrm{v}\frac{h_\mathrm{v}}{h_\mathrm{b}}
    \distHorzBU
    \right)\right)  \\ 
    \times \left(1 - \exp\left(- \lambda_\mathrm{v} (r_\mathrm{s}-\distHorzBU)\frac{h_\mathrm{v}}{h_\mathrm{s}}\right)\right) \ \ \ & \text{ if } \   0 \leq \distHorzBU {<} r_\mathrm{s} \\
    \exp\left(-\lambda_\mathrm{v} \frac{h_\mathrm{v}}{h_\mathrm{s}} r_{\ue\xsec}\right) \left(1 - \exp\left(- \lambda_\mathrm{v} (\distHorzBU-r_\mathrm{s})\frac{h_\mathrm{v}}{h_\mathrm{s}}\right)\right) 
& \text{ if } \ \distHorzBU  {\geq} r_\mathrm{s},
    \end{cases}.
\end{align}
\end{lemma}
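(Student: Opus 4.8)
The plan is to peel the three-way event $\A_\xsec = \E_\bs \cap \E_\ris \cap \overline{\E_\xsec}$ into the part already handled by Lemma~\ref{lemma:JBPGivenrud} and a single new factor coming from the intersection link. Concretely, I would write
\[
\prob{\A_\xsec \mid \distHorzBU, r_{\ue\xsec}} = \prob{\E_\bs \cap \E_\ris \cap \overline{\E_\xsec}\mid \distHorzBU, r_{\ue\xsec}},
\]
and argue that, conditioned on $\distHorzBU$ and $r_{\ue\xsec}$, the joint blockage of the serving BS and its RIS is (approximately) independent of the LOS availability of the opposite-side intersection RIS, so that the probability factors as $\jpV{\distHorzBU}\cdot \prob{\overline{\E_\xsec}\mid r_{\ue\xsec}}$.

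The two factors are then immediate. The first, $\prob{\E_\bs\cap\E_\ris\mid\distHorzBU} = \jpV{\distHorzBU}$, is exactly the conditional joint blockage probability of Lemma~\ref{lemma:JBPGivenrud}, whose two branches ($\distHorzBU < r_\ris$ and $\distHorzBU \ge r_\ris$) reproduce the two branches of the claimed expression (the two-term product in the first case, the single exponential term in the second). For the second factor, the nearest intersection-mounted RIS sits at horizontal distance $r_{\ue\xsec}$ and height $h_\mathrm{s}$, so by the same reasoning as in \eqref{eq:blockprob} this link is LOS iff no blockage falls within $r_{\ue\xsec}\, h_\mathrm{v}/h_\mathrm{s}$ of the user towards the intersection; the void probability of the street PPP gives $\prob{\overline{\E_\xsec}\mid r_{\ue\xsec}} = \exp(-\lambda_\mathrm{v}\,(h_\mathrm{v}/h_\mathrm{s})\, r_{\ue\xsec})$, which is precisely the prefactor common to both branches. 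Multiplying the two pieces yields the stated result.

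The delicate step is the factorization, and I would treat the two regimes separately. When $\distHorzBU \ge r_\ris$ the serving BS and its (right-side) RIS are jointly blocked exactly when the nearest right-side blockage $\dR$ is close enough, whereas the opposite-side intersection RIS is governed solely by the nearest left-side blockage $\dL$; since $\dR$ and $\dL$ are the nearest points of the street blockage PPP on disjoint half-lines, they are independent and the factorization is \emph{exact}. When $\distHorzBU < r_\ris$, however, the serving RIS lies on the same (left) side as the intersection RIS, so $\E_\ris$ and $\overline{\E_\xsec}$ both depend on $\dL$ and are genuinely correlated; here I would invoke the per-link independent-blockage modeling used elsewhere in the analysis (equivalently, neglect this residual correlation) to retain the product form. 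Flagging and justifying this independence --- exact in one regime and a modeling approximation in the other --- is the main obstacle; once it is accepted, the remainder is a direct substitution of the two branches of Lemma~\ref{lemma:JBPGivenrud} together with the intersection-link void probability.
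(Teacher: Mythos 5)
Your proposal follows essentially the same route as the paper: the paper's entire proof is the one-line assertion that $\overline{\E_\xsec}$, $\E_\ris$, and $\E_\bs$ are independent given $\distHorzBU$ and $r_{\ue\xsec}$, after which the stated expression is just the product of the two branches of Lemma~\ref{lemma:JBPGivenrud} with the void-probability factor $\exp(-\lambda_\mathrm{v}(h_\mathrm{v}/h_\mathrm{s})r_{\ue\xsec})$ for the intersection link --- exactly the decomposition you write down. Where you go beyond the paper is in scrutinizing that independence claim, and your scrutiny is correct: for $\distHorzBU \ge r_\mathrm{s}$ the joint blockage of the serving BS and its RIS is determined by right-side blockages while the intersection-mounted RIS (which sits on the opposite side) is governed by the left-side nearest blockage $\dL$, so the factorization is exact; for $\distHorzBU < r_\mathrm{s}$ the serving RIS and the intersection RIS are both on the left, their blocking intervals $[0,(r_\mathrm{s}-\distHorzBU)h_\mathrm{v}/h_\mathrm{s}]$ and $[0,r_{\ue\xsec}h_\mathrm{v}/h_\mathrm{s}]$ overlap, and $\E_\ris$ and $\overline{\E_\xsec}$ are genuinely dependent --- indeed, if $r_{\ue\xsec}\ge r_\mathrm{s}-\distHorzBU$ then $\overline{\E_\xsec}$ implies $\overline{\E_\ris}$ and the exact conditional probability is zero rather than the stated product. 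The paper's proof passes over this silently, so your explicit flagging of the first branch as a modeling approximation (and the second as exact) is not a gap in your argument but a sharper and more honest account of what the lemma actually establishes.
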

\begin{IEEEproof}
The proof follows directly from the independence of the events $\Bar{\E}_\xsec$, $\E_\mathrm{s}$, and $\E_\mathrm{b}$ for a given $r_{\ue\xsec}$ and $\distHorzBU$. 
\end{IEEEproof}
Consequently, the coverage probability of the typical \general user can be written as follows:

\begin{theorem}
The SNR coverage probability of the {typical \general user} when it is aided by an intersection-mounted RIS in case of joint blockage of the serving BS and its associated RIS is
\begin{align}
  \mathsf{p}_{\ue\xsec}(\gamma) \approx &
  \pcu
    +
    \sum\nolimits_{n = 1}^{n_0}(-1)^{n+1} {\binom{n_0}{n}}\mathbb{E}
    \left[ 
    \exp\left(-\frac{{n \gamma' 
    }}
    {\ell_\mathrm{r}(r_{\ue\xsec}, r_{\xsec\bs})
}\right)
    \mathbb{P}(\A_\xsec
    | \distHorzBU, r_{\ue\xsec})
    \right].\nonumber
        \vspace{-0.7cm}
\end{align}
Here, the expectation is taken with respect to the independent RVs $\distHorzBU$, $r_{\ue\xsec}$, and $r_{\xsec\bs}$. 
\end{theorem}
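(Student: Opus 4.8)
The plan is to decompose the coverage event according to which of the three mutually exclusive serving modes is active, recognize that the first two modes reproduce $\pcu$ exactly, and compute the third (intersection-RIS) mode separately. By the association rule the typical general user is served by the direct link on $\A_\bs=\bar{\E_\bs}$, by its associated BS-coupled RIS on $\A_\ris=\E_\bs\cap\bar{\E_\ris}$, and by the nearest intersection-mounted RIS on $\A_\xsec=\E_\bs\cap\E_\ris\cap\overline{\E_\xsec}$. Because the user only falls back to the intersection RIS once both the direct and the associated-RIS links have failed, these three events are disjoint, so $\mathsf{p}_{\ue\xsec}(\gamma)$ splits into a sum of three coverage contributions, one per mode.

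First I would treat modes (i) and (ii). Conditioned on $\distHorzBU$, the probability of $\{\snrD>\gamma\}\cap\A_\bs$ equals $(1-\jpD{\distHorzBU})F_{n_0}(\gamma\noisen/(\ellBU(\distHorzBU)u_mv_m))$, using that the Gamma gain $\gBU$ is independent of the blockage state and that $F_{n_0}$ is its CCDF; similarly the probability of $\{\snrV>\gamma\}\cap\A_\ris$ is $(\jpD{\distHorzBU}-\jpV{\distHorzBU})F_{n_0}(\gamma\noisen/(\ell_\mathrm{r}(\distHorzBU,r_\ris)u_mv_m))$. Summing these after taking the expectation over $\distHorzBU$ is exactly the statement of Theorem~\ref{thm1}, so the first two modes collapse to $\pcu$.

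Next I would compute the intersection-RIS contribution on $\A_\xsec$. Here the signal reaches the user via the nearest intersection RIS from the neighboring-street BS $Y_0$, so the received SNR again has the product-path-loss form $\ell_\mathrm{r}(r_{\ue\xsec},r_{\xsec\bs})\,g\,u_mv_m/\noisen$ with user-RIS and RIS-BS horizontal distances $r_{\ue\xsec}$ and $r_{\xsec\bs}$. Conditioning on the three independent distances $\distHorzBU$, $r_{\ue\xsec}$, $r_{\xsec\bs}$ and noting that the coverage factor depends only on $(r_{\ue\xsec},r_{\xsec\bs})$ while $\mathbb{P}(\A_\xsec\mid\distHorzBU,r_{\ue\xsec})$ from the preceding Lemma depends only on $(\distHorzBU,r_{\ue\xsec})$, the conditional coverage factors as $F_{n_0}(\gamma\noisen/(\ell_\mathrm{r}(r_{\ue\xsec},r_{\xsec\bs})u_mv_m))\cdot\mathbb{P}(\A_\xsec\mid\distHorzBU,r_{\ue\xsec})$. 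I would then apply Alzer's bound \eqref{eq:alzerlemma} to the Gamma CCDF, replacing $F_{n_0}$ by $\sum_{n=1}^{n_0}(-1)^{n+1}\binom{n_0}{n}e^{-n\gamma'/\ell_\mathrm{r}(r_{\ue\xsec},r_{\xsec\bs})}$ with $\gamma'=\eta_m\gamma\noisen/(u_mv_m)$, and take the expectation over the three distances; this is where the overall $\approx$ enters.

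The main obstacle is the independence bookkeeping required to factor the mode-(iii) coverage: I must verify that, given the distances, the fading on the intersection link is independent of the three blockage indicators defining $\A_\xsec$, so that the coverage factor and $\mathbb{P}(\A_\xsec\mid\distHorzBU,r_{\ue\xsec})$ separate and the expectation reduces to a three-fold integral. This rests on the same per-link blockage independence used in the preceding Lemma---the relevant links traverse distinct streets with independent obstacle PPPs, and $\bar{\E_\xsec}$, $\E_\ris$, $\E_\bs$ are independent given the distances---together with the stated mutual independence of $\distHorzBU$, $r_{\ue\xsec}$, $r_{\xsec\bs}$, which I would invoke to push the expectation inside and obtain the claimed form.
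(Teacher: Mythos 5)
Your proposal is correct and follows essentially the same route the paper takes (the paper gives no separate appendix proof for this theorem; it is the Theorem~\ref{thm1}/Corollary~1 argument repeated for the third, disjoint serving mode $\A_\xsec$, with Alzer's bound \eqref{eq:alzerlemma} supplying the $\approx$ and the preceding lemma supplying $\mathbb{P}(\A_\xsec\mid\distHorzBU,r_{\ue\xsec})$). One small imprecision: the user-to-intersection-RIS link lies on the user's \emph{own} street (only on the opposite side from the serving BS), not on a distinct street, so the independence of $\overline{\E_\xsec}$ from $\E_\bs$ and $\E_\ris$ is exactly the assumption already baked into the paper's preceding lemma rather than a consequence of independent obstacle processes on different streets; since your argument only invokes that lemma's conclusion, this does not affect correctness.
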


We now extend our interference analysis in the previous section to the intersection mounted RIS case. As the SINR for $\A_\bs$ and $\A_\ris$ case remains the same, we focus on $A_\xsec$ here. 
$A_\xsec{}$ occurs when the intersection mounted RIS is LOS, while both the user's BS and the corresponding RIS are blocked. This is equivalent to the following conditions 
\begin{align*}
A_\xsec&=A_{\xsec1}\cup A_{\xsec2} \text{ with}\\
A_{\xsec1}&=\left\{\distHorzBU>r_\ris, \dR < \fracS{(\distHorzBU ) h_{\vehicles}}{h_{\bs}},
\dR < \fracS{(\distHorzBU - r_\ris) h_{\vehicles}}{h_{\ris}},\dL > \fracS{(r_{\ue\xsec}) h_{\vehicles}}{h_{\bs}}\right\}\\
A_{\xsec2}&=\left\{\distHorzBU<r_\ris, \dR < \fracS{(\distHorzBU ) h_{\vehicles}}{h_{\bs}},
 \fracS{(r_\ris-\distHorzBU) h_{\vehicles}}{h_{\ris}}>\dL > \fracS{(r_{\ue\xsec}) h_{\vehicles}}{h_{\bs}}\right\}.
\end{align*}

Now, the interference at the user will only be due to the BSs on the left side of the user under $A_{\xsec1}$ and no BS interference under $A_{\xsec2}$. Moreover, there will also be interference due to the BSs on the adjacent street that are interfering via the intersection mounted RIS. 
The via RIS interference power from each BS at $x$ distance from intersection at adjacent street on the same side as the new serving BS is given as
\begin{align}
    P_\rx'(x)  = 
    h_i  { U_i  \ellBR(x) \pi (N - 1)^2 } 
    { \ellUR(r_{\ue\xsec}) v_m}
    =h_i \ell_\mathrm{xrI}(x,r_{\ue\xsec})U_i v_m,
\end{align}
where $\ell_\mathrm{xrI}(x,r_{\ue\xsec})=\ellBR(x) \pi (N - 1)^2  
    { \ellUR(r_{\ue\xsec})}$.
Hence, under $\A_{\xsec1}$, the interference is
%
%
\begin{align*}
    I'' &= \sum_{X_{i} \in \PhiL} g_{i} \ell_\mathrm{b}(\|X_i\|)U_i v_m \cdot \mathbbm{1} \left(\|X_i\| < \fracS{\dL h_{\bs}}{h_{\vehicles}} \right)
     &+ \sum_{X_{i} \in \Phi_\mathsf{y}\setminus\{Y_0\}} h_{i} \ell_{\mathrm{xrI}}(\|X_i\|) U_i v_m \cdot \mathbbm{1} \left(\|X_i\| < r_{\xsec\bs}+ z_\ris \right),
\end{align*}
where $\Phi_\mathsf{y}$ denotes the PPP modeling the locations of BS located in the adjacent street on the same side as new serving BS is located and  $z_\ris=\sqrt{\left(h_\ris - h_\bs\right)^{2}+r_{\xsec\bs}^{2}} \cdot \theta_{\beam}$.
Here, since the tagged BS is blocked, and the user is pointing at the opposite direction of the tagged BS, then its main lobe will align with all the interfering BSs on that particular side of the user. 
Now, the interference's Laplace transform is given as
%
%
\begin{align}
&\mathcal{L}_{I''}(s | \distHorzBU, r_{\ue\xsec}, \dR, \A_{\xsec1}) = \exp \left( -\lambda_{\bs} \left(\int_{\distHorzBU}^{\infty} 1 - \mathbb{E}_U\left[\frac{1}{\left(1+ s U v_m \ell_\mathrm{b}(x) \indside{x < \fracS{\dL h_{\bs}}{h_{\vehicles}}}\right)^{n_0}}\right] \dd x \right)\right)  \nonumber \\
&\times\exp \left( -\lambda_{b} \left(\int_{r_{\xsec\bs}}^{r_{\xsec\bs}+z_\ris} 1 - \mathbb{E}_U\left[\frac{1}{\left(1 + s U v_m \ell_{\mathrm{rxI}} (x)  \right)^{n_0}}\right] \dd x \right)\right).\label{eq:LTIAx1}
\end{align}

Similarly, under $A_{\xsec2}$, the interference is given by just the second term in \eqref{eq:LTIAx1}
Now, given $r_{\ue\xsec}$ and $r_{\xsec\bs}$, the coverage probability when assisted by intersection RIS is given as
%
%
%
\begin{align*}
&\mathrm{p_{ic}}(\gamma \mid r_{\ue\xsec}, r_{\xsec\bs}) = \int_{r_{s}}^{\infty} \bigg[ 
\prob{d_1<\fracS{(\distHorzBU - r_\ris) h_{\vehicles}}{h_{\ris}}} 
\times   \\
& \left . \int_{\frac{r_{\ue\xsec} h_{\vehicles}}{h_{\ris}}}^{\infty}  \exp\left(-\frac{{ \gamma \noisen 
    }}{
    \ell_\mathrm{r}(r_{\ue\xsec}, r_{\xsec\bs})  {u_m v_m}
    }\right) \mathcal{L}_{I''| \distHorzBU, r_{\ue\xsec}, \dR, \A_{\xsec1}} \left(\frac{{ \gamma
    }}{
    \ell_\mathrm{r}(r_{\ue\xsec}, r_{\xsec\bs})  {u_m v_m}
    }
    \right) f_{\dL}(\dL) \cdot \mathrm{d} \dL \right] \cdot f_{\distHorzBU}(\distHorzBU) \dd \distHorzBU\\
  &  +\int_0^{r_{s}} \bigg[ 
\prob{d_1<\fracS{(\distHorzBU ) h_{\vehicles}}{h_{\bs}}} \int_{\fracS{r_{\ue\xsec} h_{\vehicles}}{h_{\ris}}}^{\fracS{({r_\ris - \distHorzBU}) h_{\vehicles}}{h_{\ris}}}  \exp\left(-\frac{{ \gamma \noisen 
    }}{
    \ell_\mathrm{r}(r_{\ue\xsec}, r_{\xsec\bs}  {u_m v_m})
    }\right)
\times   \\
& \left . \hspace{1.5in} \mathcal{L}_{I''|\distHorzBU, r_{\ue\xsec}, \dR, \A_{\xsec2}} \left(\frac{{ \gamma
    }}{
    \ell_\mathrm{r}(r_{\ue\xsec}, r_{\xsec\bs}) {u_m v_m}
    }
    \right) f_{\dL}(\dL) \cdot \mathrm{d} \dL \right] \cdot f_{\distHorzBU}(\distHorzBU) \dd \distHorzBU.
\end{align*}
Here, the term containing $\dR$ is written separately because other terms in the integral are independent of $\dR$. Hence, the total coverage probability is
\begin{align*}
   \mathsf{p}_{\ue\xsec}(\gamma) = \pcuI{}+ {\mathbb{E}_{r_{\ue\xsec}, r_{\xsec\bs}}} \left [ \mathrm{p_{ic}}(\gamma | r_{\ue\xsec}, r_{\xsec\bs}) \right].
\end{align*}



\subsection{Typical Intersection User} 
 We now consider an additional type of user known as  the typical intersection user located at the  street intersections which are \ac{PLT} crossings~\cite{jeyaraj2020cox}. This is in contrast with the typical general user that can be located anywhere on the street.
The typical general user connects to the nearest BS on the same street, whereas, the typical intersection user connects to the nearest BS on either of the streets that comprise the intersection. Thus, for the typical intersection user, the PDF of the horizontal distance to its serving BS is given by\vspace{-1mm}
\begin{align}
    f_{\distHorzBIU}(x) = 4\lambda_\mathrm{b}\exp(- 4\lambda_\mathrm{b} x), x\ge 0. \label{eq:interserve}
\end{align}
Here, \eqref{eq:interserve} follows from the fact that, for an intersection user, there are four possible directions to the serving BS.
Similar to a general user, a BS serves an intersection user by the direct link if the link is in the \ac{LOS} state and from an associated RIS if the direct link is blocked (i.e., it is in the \ac{NLOS} state). 

We consider that the typical intersection user receives services only from the associated BS either from its corresponding RIS or the intersection-mounted RIS. If both the direct link and the associated RIS link are blocked to the typical intersection user, the BS attempts to serve the typical intersection user via the intersection-mounted RIS. Note that the distance between the typical intersection user and the base of the intersection-mounted RIS is zero ($r_{\iue\xsec} = 0$) since this RIS is located directly overhead the user. Thus, for the typical intersection user, the link to the nearest intersection-mounted RIS is always in \ac{LOS}. Thus, for the typical intersection user, we have intersection mounted RIS conditional association probability $\prob{\A_\xsec|\distHorzBIU}=\prob{\,\overline{\E_\xsec}\cap\E_\mathrm{s}\cap\E_\mathrm{b}| \distHorzBIU} = \prob{\E_\mathrm{s}\cap\E_\mathrm{b}| \distHorzBIU}=1-\prob{\A_\bs|\distHorzBIU}-\prob{\A_\ris|\distHorzBIU}$. 

\begin{theorem}
The SNR coverage probability of the {typical intersection user} aided by an intersection-mounted RIS in case of the joint blockage of the serving BS and associated RIS is
\begin{align}
    \mathsf{p}_{\iue\xsec} (\gamma)&\approx 
    \sum\limits_{n = 1}^{n_0}(-1)^{n+1} {\binom{n_0}{n}}
    \mathbb{E}\left[
    \vphantom{\frac{1}1}
    \exp\left(-\frac{n \gamma' 
    }{\ell_\mathrm{b}(\distHorzBIU)}\right)  \nonumber 
    \exp\left(-\lambda_\mathrm{v}\distHorzBIU \frac{h_\mathrm{v}}{h_\mathrm{b}}\right) 
    \nonumber \right. \\
    & +\!\left. \exp\left(-\frac{{n \gamma' 
    }}{ \ell_\mathrm{r}(\distHorzIUR, \distHorzBR)
    }\right) \mathbb{P}(\bar{\E}_\mathrm{s},\E_\mathrm{b}\mid \distHorzBIU) 
    +
    \!\exp\left(-\frac{n \gamma' }
    {\ell_\mathrm{r}(0,r_\mathrm{ib})
}
\right)
     \mathbb{P}(\E_\mathrm{s},\E_\mathrm{b}\!\mid\! \distHorzBIU
     )
     \right],\nonumber
\end{align}
where 
the expectation is taken with respect to $\distHorzBIU$ as given in \eqref{eq:interserve}.
\end{theorem}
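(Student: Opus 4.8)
The plan is to condition on the horizontal serving distance $\distHorzBIU$ and decompose the coverage event $\{\snrD > \gamma\}$ — only the SNR matters here since this is the noise-limited metric — according to which of the three mutually exclusive and exhaustive serving modes is active: the direct link (event $\A_\bs$), the associated RIS (event $\A_\ris = \E_\bs \cap \bar{\E}_\ris$), and the intersection-mounted RIS (event $\A_\xsec$). Because the typical intersection user sits directly beneath the intersection RIS, we have $r_{\iue\xsec}=0$, so that link is always \ac{LOS}; this collapses $\A_\xsec$ to the pure joint-blockage event $\E_\bs \cap \E_\ris$, exactly as noted just before the statement. The three conditional association probabilities are therefore $\prob{\A_\bs \mid \distHorzBIU} = \exp(-\lambda_\mathrm{v} \distHorzBIU h_\mathrm{v}/h_\mathrm{b})$ from Lemma~\ref{lemma:BSBPGivenrud}, $\prob{\bar{\E}_\ris, \E_\bs \mid \distHorzBIU}$, and $\prob{\E_\ris, \E_\bs \mid \distHorzBIU}$, and by construction they sum to one.

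For each mode I would write the conditional coverage as a Gamma CCDF $F_{n_0}$. Because the small-scale fading power is independent of the blockage configuration given $\distHorzBIU$, the coverage event and the serving-mode event factorize. In the direct case the received power is proportional to $\gBU u_m v_m \ellBU(\distHorzBIU)$, so $\prob{\snrD > \gamma \mid \distHorzBIU} = F_{n_0}\!\left(\gamma \noisen / (u_m v_m \ellBU(\distHorzBIU))\right)$. In the associated-RIS case the end-to-end path loss is $\ell_\mathrm{r}(\distHorzIUR, \distHorzBR)$, whose first argument is the user--RIS horizontal distance and whose second is the BS--RIS distance. In the intersection-RIS case the user sits directly below the RIS so the user--RIS horizontal offset vanishes, giving $\ell_\mathrm{r}(0, r_\mathrm{ib})$ with $r_\mathrm{ib}$ the BS-to-intersection-RIS distance. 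Multiplying each Gamma CCDF by its association probability and summing over the three modes produces the exact conditional coverage.

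Next I would invoke Alzer's lemma \eqref{eq:alzerlemma} to replace each $F_{n_0}(\cdot)$ by its finite-sum exponential approximation $\sum_{n=1}^{n_0}(-1)^{n+1}\binom{n_0}{n} e^{-n\eta_m(\cdot)}$ — this is the source of the $\approx$ in the statement. Absorbing the common prefactor into $\gamma' = \eta_m \gamma \noisen/(u_m v_m)$ turns each term into $\exp(-n\gamma'/\ell)$ with the appropriate path loss $\ell$. Finally, taking the expectation over $\distHorzBIU$ with the PDF in \eqref{eq:interserve} and interchanging the finite sum with the expectation yields the stated three-term formula, where the first term's exponential weight $\exp(-\lambda_\mathrm{v}\distHorzBIU h_\mathrm{v}/h_\mathrm{b})$ is precisely $\prob{\A_\bs \mid \distHorzBIU}$.

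The only genuinely delicate step is the bookkeeping of the partition: I must verify that the three serving events are disjoint and exhaust all blockage outcomes, and that each path-loss argument is matched to the correct geometry. The crux is recognizing that the always-\ac{LOS} intersection RIS reduces $\A_\xsec$ to $\E_\bs \cap \E_\ris$ and that its coverage term uses $\ell_\mathrm{r}(0, r_\mathrm{ib})$. Everything else is routine substitution of the conditional blockage probabilities from the preceding lemmas together with the now-standard Alzer approximation, so no analytical machinery beyond the earlier results is required.
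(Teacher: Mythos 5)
Your proposal is correct and follows essentially the same route the paper takes: it extends the decomposition used in the proof of Theorem \ref{thm1} (condition on $\distHorzBIU$, partition over the three mutually exclusive serving modes, multiply each Gamma CCDF by its conditional association probability, apply Alzer's lemma, and average over $\distHorzBIU$ using \eqref{eq:interserve}), with the key observation — that $r_{\iue\xsec}=0$ makes the intersection-RIS link always LOS and collapses $\A_\xsec$ to $\E_\bs\cap\E_\ris$ with path loss $\ell_\mathrm{r}(0,r_\mathrm{ib})$ — being exactly the point the paper itself makes in the paragraph preceding the theorem.
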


The interference and SINR coverage probability is similar to the typical general user and hence skipped for brevity. 


\section{Numerical Results and Discussions}
\label{sec:NRD}

In this section, we present some numerical results to highlight the impact of several system parameters on the system performance. 
For the evaluation purpose, the transmit power of a BS is {$20$ dBm,  the total beam-forming gain $u_mv_m$ is $4$} and side-lobe gain  $u_mv_m$ is $0.77$. We consider the mm-wave band of $f_c = 28$ GHz with  1 GHz bandwidth. The propagation parameters follow the 3GPP specifications as defined in~\cite{38.900}, where the standard path loss is considered with $K = \left(\frac{c}{4\pi f_c}\right)^2$, the path-loss exponent 
$\alpha = 2$, and {$n_0 = 1$}. Here, $c$ is the speed of light in free space. The number of elements in each RIS is $N = 100$. The noise power spectral density is $-174$ dBm/Hz {with noise figure of $10$}. Unless mentioned explicitly, we assume $\lambda_\bs=0.05~/$m, $h_\bs=10$~m, $h_\mathrm{r}=50$~m, $\lambda_\mathrm{v}= 0.1~/$m and $h_\mathrm{v}= 3$~m.


\begin{figure}
    \centering
     {\small \bf (a)}\includegraphics[width = \figwidthSbS]{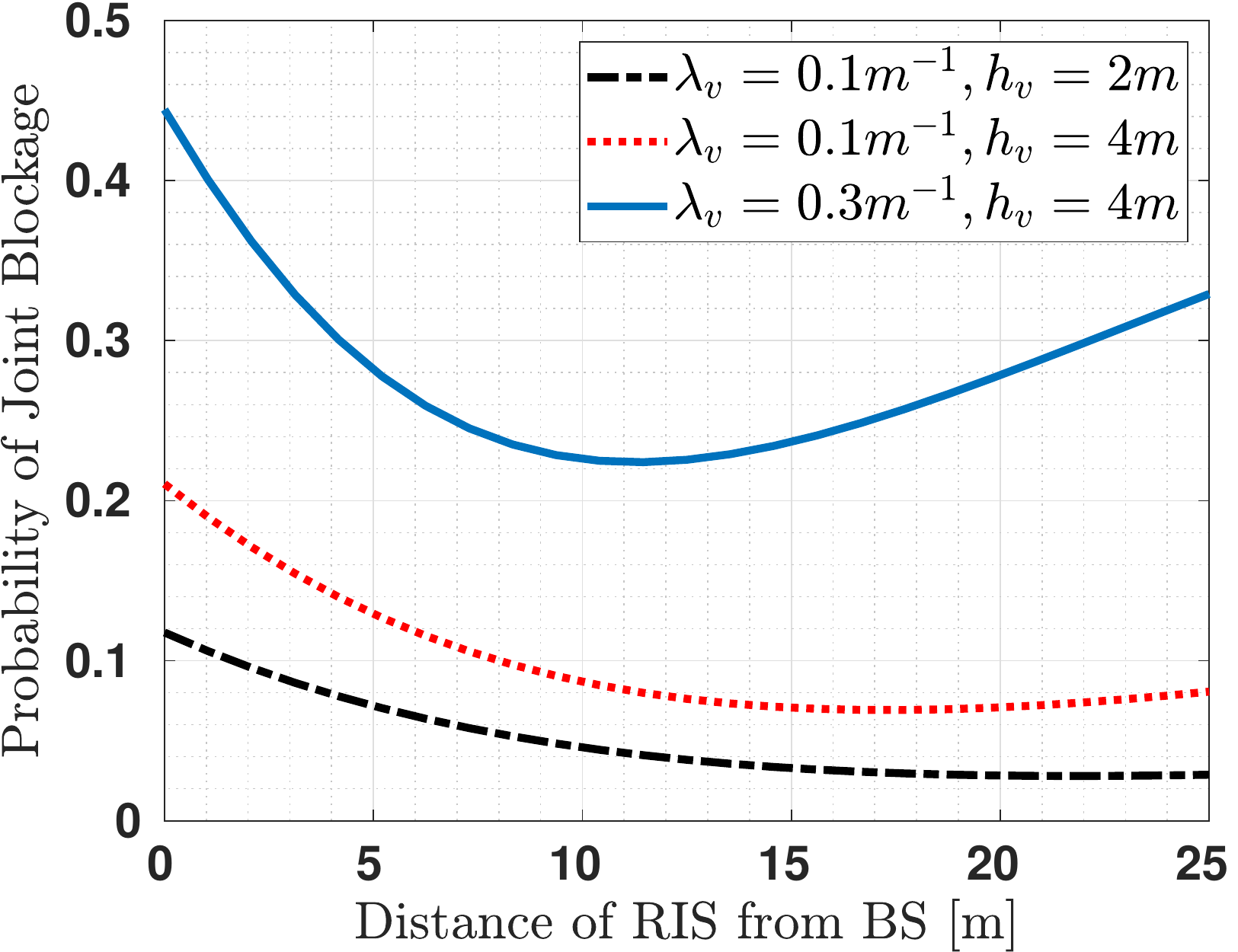}
      {\small \bf (b)}
     \includegraphics[width = \figwidthSbS]{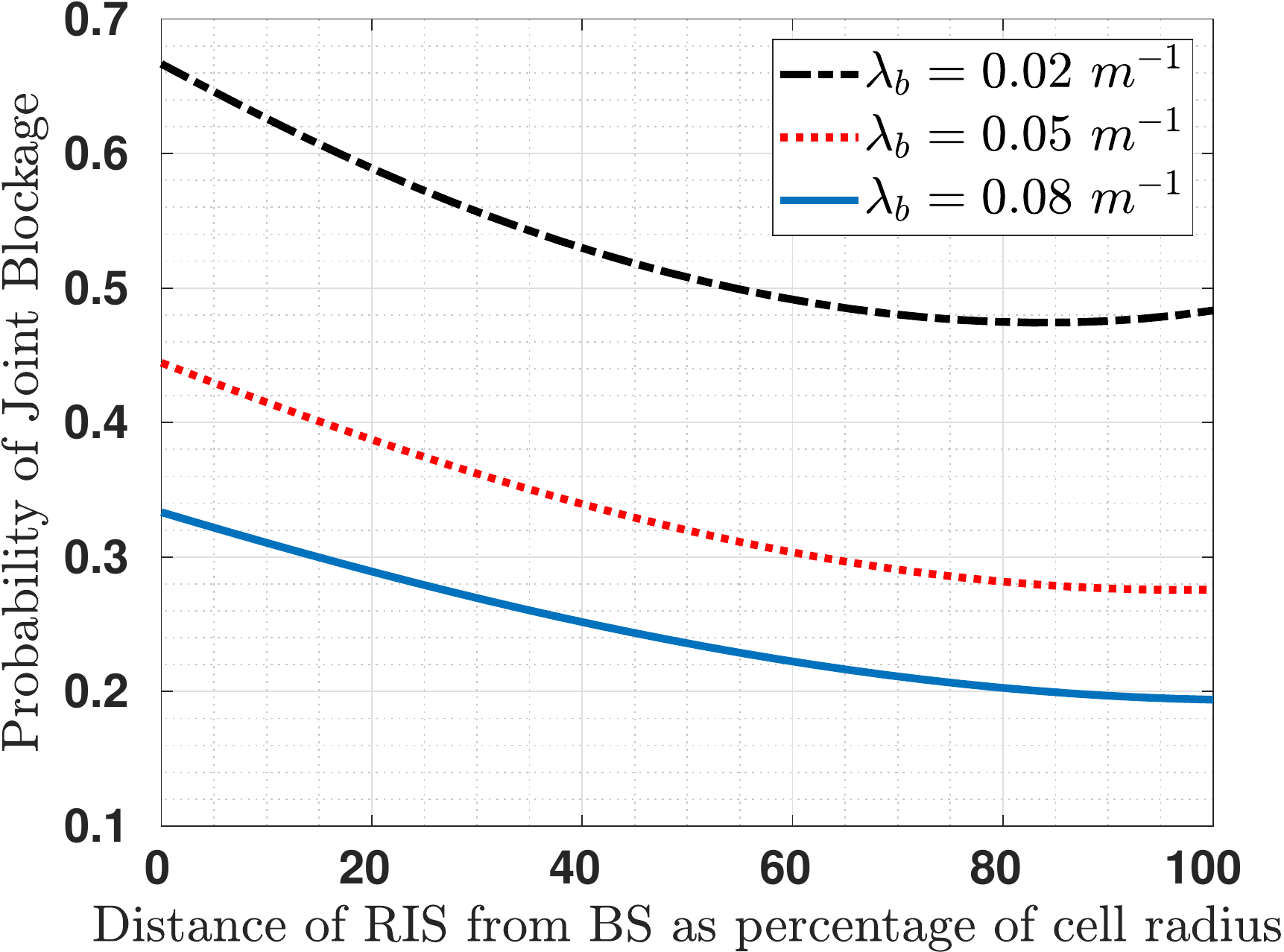}\\
    \caption{Connection failure probability or the joint blockage probability $\jp$ with respect to the deployment distance of the RIS for different values of blockage density and height for (a) fix distance deployment (b) cell radius dependent deployment at $f$ fraction of cell radius. Here $h_\mathrm{b} = 10$ m, {$h_\mathrm{s} = 15$ m, and $\lambda_\mathrm{b} = 0.05$ m$^{-1}$.} The optimum distance decreases with an increase in blockage severity with respect to the BS density.}
    \label{fig:Blockage1}
    \figspaceadjust
\end{figure}


\textbf{Blockage probability.} 
Fig.~\ref{fig:Blockage1}(a) shows that the probability that the links from the typical general user to its serving BS and the associated RISs are jointly blocked, is minimized for a certain value $\drisopt$ of $r_\mathrm{s}$ (i.e., the distance  of the RIS from its coupled BS).  Let $R=1/\lambda_\bs$ denote the order of average cell radius. As seen from Remark 1, if the RISs are deployed very far from most users (i.e., $r_\ris\gg R$) or very close to BSs (i.e. $r_\ris\approx 0$), $\jp{}$ is high.  This optimal value of $r_\mathrm{s}$ is of the order of $R$ and depends on the blockage density $\lambda_\mathrm{v}$ and the blockage height $h_\mathrm{v}$. When blockages are sparse (i.e.,  $\lambda_\mathrm{v}$ and $h_\mathrm{v}$ are small), cell edge users (i.e., $\distHorzBU\approx R$) still have a chance to be blocked. Hence, they determine the optimal value of $r_\ris$ to be equal to their location, i.e., $\drisopt\approx R$. On the other hand, in the presence of severe blockages, even the users  near BS (i.e., $\distHorzBU\approx R$) can be blocked, hence the $\drisopt$ shifts towards $R/2$.
One can also observe that the lower bound (given by Proposition \ref{pro:block_bound}) is tight for small values of $r_\mathrm{s}$ since it is obtained by considering only the case when $\distHorzBU \geq r_\mathrm{s}$.

Fig.~\ref{fig:Blockage1}(b) 
shows that the joint blockage probability when an RIS is deployed at $f$ fraction of the cell radius of the selected BS.  We observe that it is optimum to deploy RIS at 100\% of the cell radius for high BS density. But,  similar to fixed distance deployment,  decreasing $\lambda_\mathrm{b}$ reduces the optimal fraction $f$ to deploy the RIS. 
This is consistent with the discussion of \eqref{eq:joint_block2}. 



\begin{figure}
    \centering
     {\small \bf (a)}\includegraphics[width = \figwidthSbS]{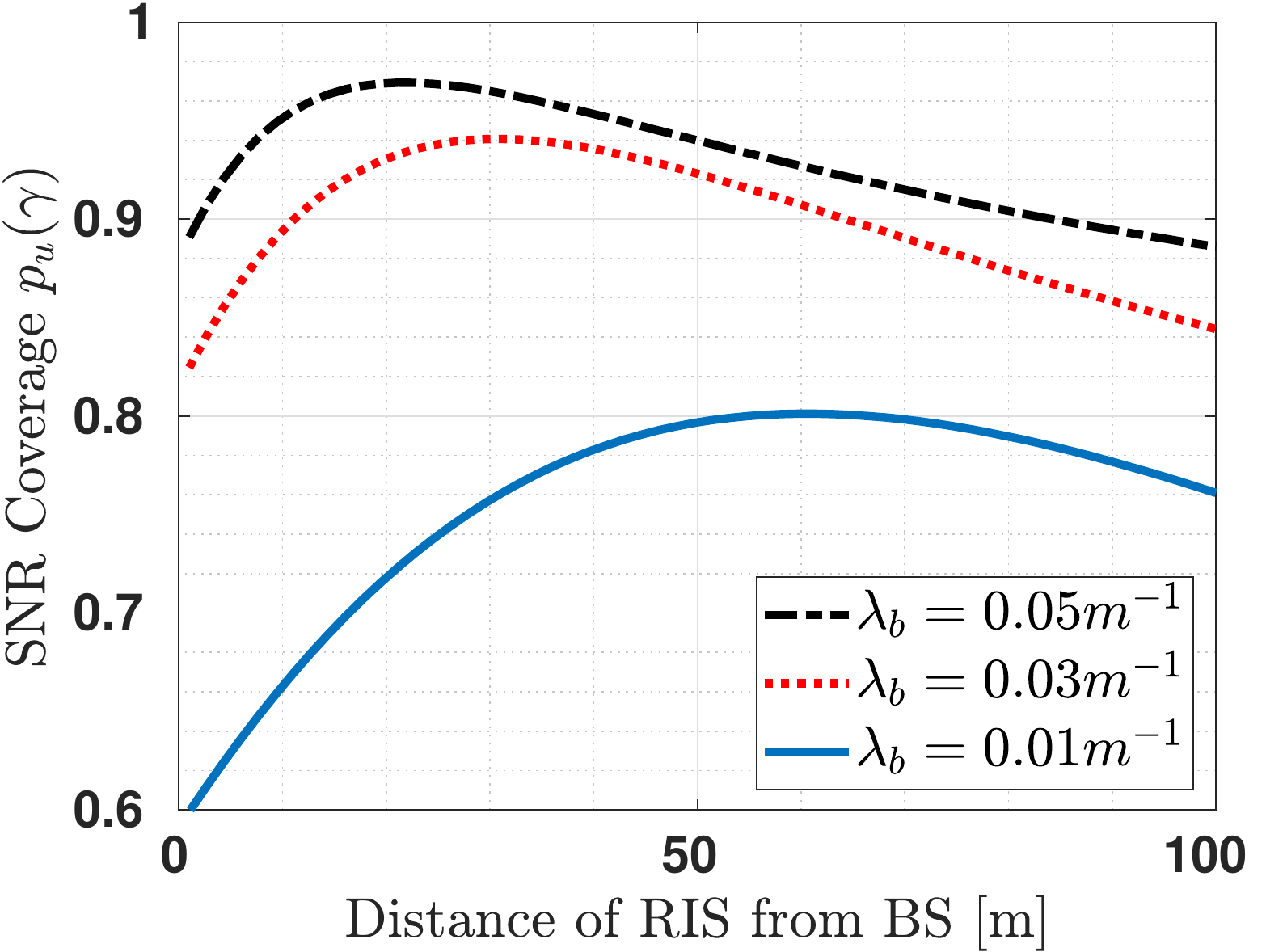}
       {\small \bf (b)}\includegraphics[width = \figwidthSbS]{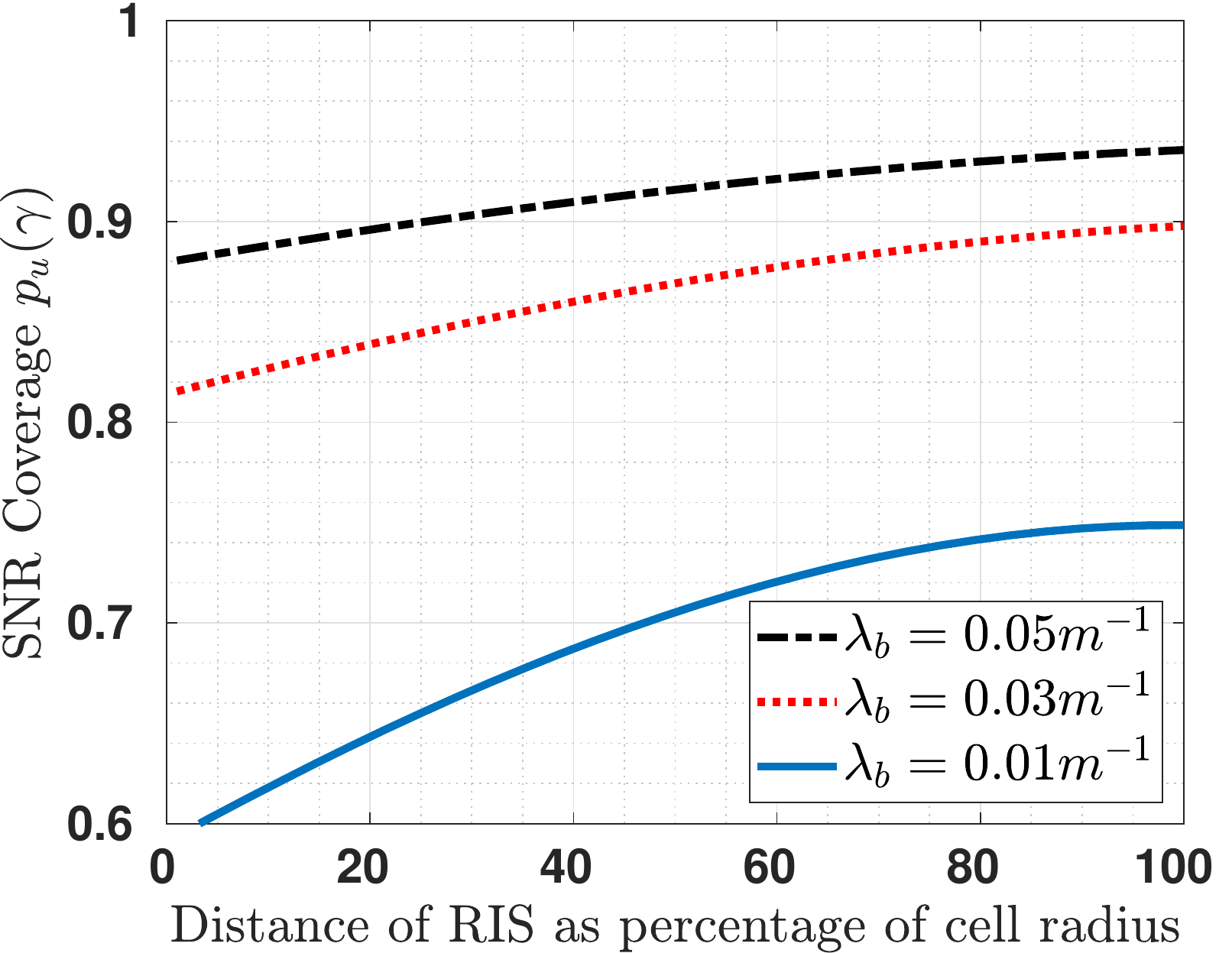}\\
    \caption{Variation of SNR coverage probability $\pcu$ with the deployment distance of the RIS for different values of the BS density for (a) fixed distance and (b) cell size dependent deployment with $f$ fraction of cell radius. Here, $h_\mathrm{b} = 10$ m, $h_\mathrm{s} = 50$ m, and $\lambda_\mathrm{v} = 0.5$ m$^{-1}$.}
    \label{fig:SNR_rs}
\figspaceadjust
\end{figure}


\textbf{SNR coverage probability.}  
Fig.~\ref{fig:SNR_rs}(a) shows that the SNR coverage probability also attains a maximum value for a certain optimal value $\drisoptb$ of $r_\ris$. As the deployment of the BSs gets denser, the optimal distance of the RIS to BS decreases  and remains of the order of $R$. For example, for $\lambda_\mathrm{b} = 0.05$~m$^{-1}$, $R=20$~m and $\drisoptb= 12.5$~m, whereas it decreases to 5~m for $\lambda_\mathrm{b} = 0.02$ m$^{-1}$ ($R=5$~m). Due to the high probability of joint blockage (as shown in Fig.~\ref{fig:Blockage1}), the coverage probability decreases for sufficiently low and high values of $r_\mathrm{s}$. 

 Fig.~
\ref{fig:SNR_rs}(b) shows that 
 when BS density is higher or comparable to the blockage density, the coverage probability 
achieves its maximum when each RIS is deployed at the edge of the corresponding cell (\ie $f$=100\%). 
This is intuitive because, with an increase in $f$, the joint blockage probability decreases and hence, BSs are able to cover more users. However, when the blockage density is  high as compared to the BS density, we observe the optimum to occur when the RIS is deployed at nearly 70\% of the cell radius. Intuitively, it can be seen that in the case when BS density is low, more users will experience signal blockage not only from BS, but also from RIS if RIS is deployed too far. Therefore, it is advisable to deploy RIS towards the middle of the cell than the edge 
 to improve coverage probability to users.

\begin{figure}[t!]
    \centering
    {\small \bf (a)} \includegraphics[width = \figwidthSbS]{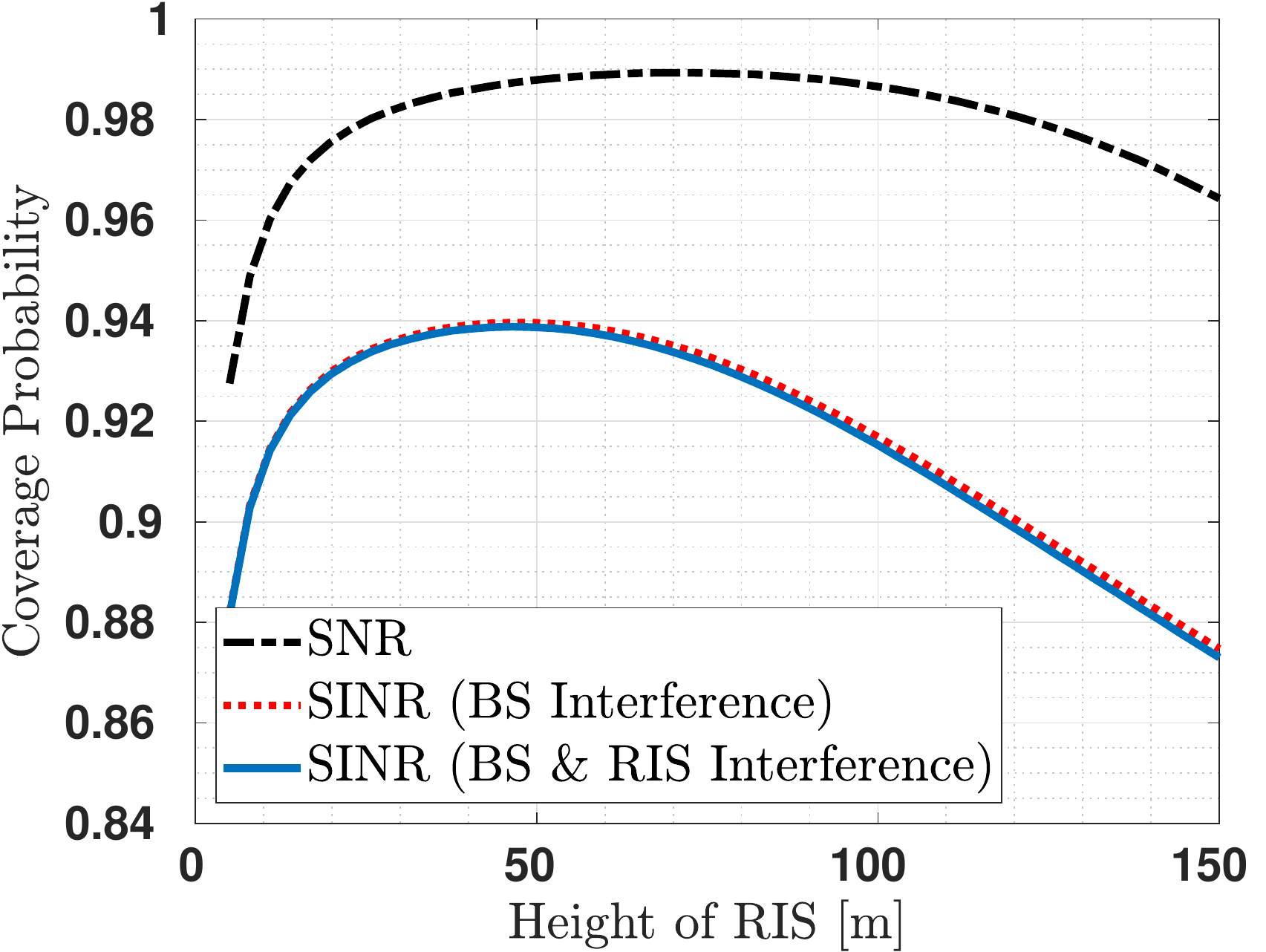}
     {\small \bf (b)}
      \includegraphics[width = \figwidthSbS]{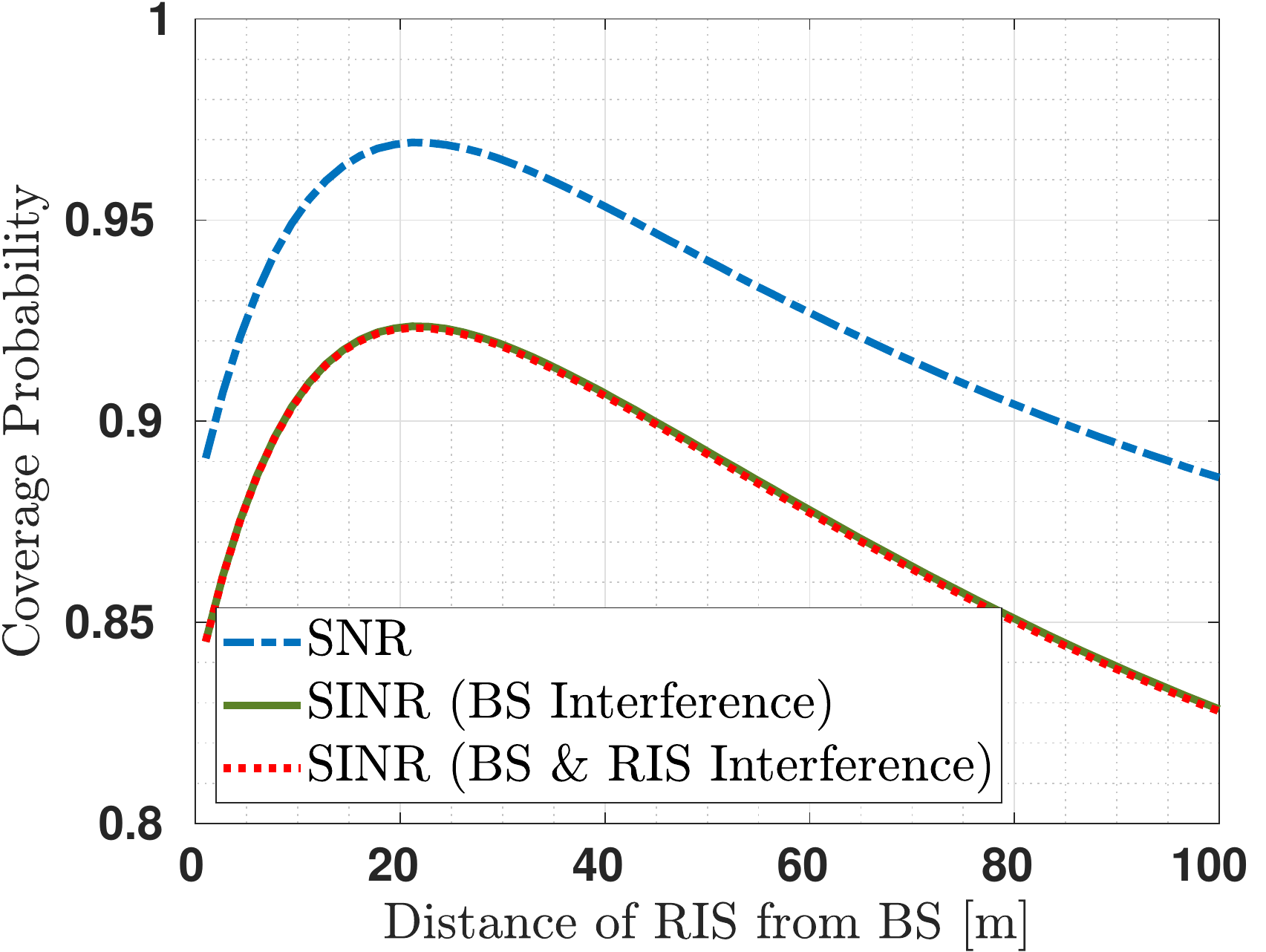}\\

    \caption{Variation of SNR and SINR coverage probabilities with respect to the deployment height $h_\ris$ {and location $r_s$} of the RIS for various values of blockage height and density. Here, $\lambda_{\rm b}$ = 0.05 m$^{-1}$, $h_b = 10 m$ and {$r_s = 21.59$ m.}}
    \label{fig:SNR_height}
\end{figure}

Interestingly, Fig.~\ref{fig:SNR_height}(a) shows that for a BS height of 10 m, the optimal height  $h_{\rm s}$ for deploying the RIS is about 50 m, which is not affected significantly by $h_\vehicles$ and $\lambda_\vehicles$. 
A further increase in $h_{\rm s}$ reduces the coverage probability $\pcu$. For very large values of $h_\mathrm{s}$,  $\pcu$ becomes constant because the coverage is only due to the direct link to the BS owing to the large distance between the RIS and the user.

\textbf{SINR coverage and effect of interference from {via-RIS} links.}
{Fig. \ref{fig:SNR_height}(b)} shows the SINR coverage probability for the fixed distance deployment case. It also shows when 
the interference $\intfV_2$ from via-RIS  links is ignored. 
We can see that the deployment location of the RIS shifts slightly closer to the BS location when via-RIS interference is included. However, other than that, the interference from the selected RIS has practically no impact on the deployment location of the RIS with respect to the BS and hence, can be ignored.
{It can also be observed from Fig. \ref{fig:SNR_height}(b) that the optimal deployment location of RIS is approximately the same whether we calculate it in terms of SINR or SNR. The same is also seen in Fig. \ref{fig:SNR_rsbsinterf} which shows the variation of SINR coverage probability with $r_\ris$ for various values of the BS and blockage density.  It can be seen that SNR and SINR curves are closely aligned with respect to each other when plotted with respect to $r_s$ and that the optimal deployment location is always close to $1/\lambda_b$.}

\begin{figure}[ht!]
\centering
      {\small \bf (a)}\includegraphics[width = \figwidthSbS]{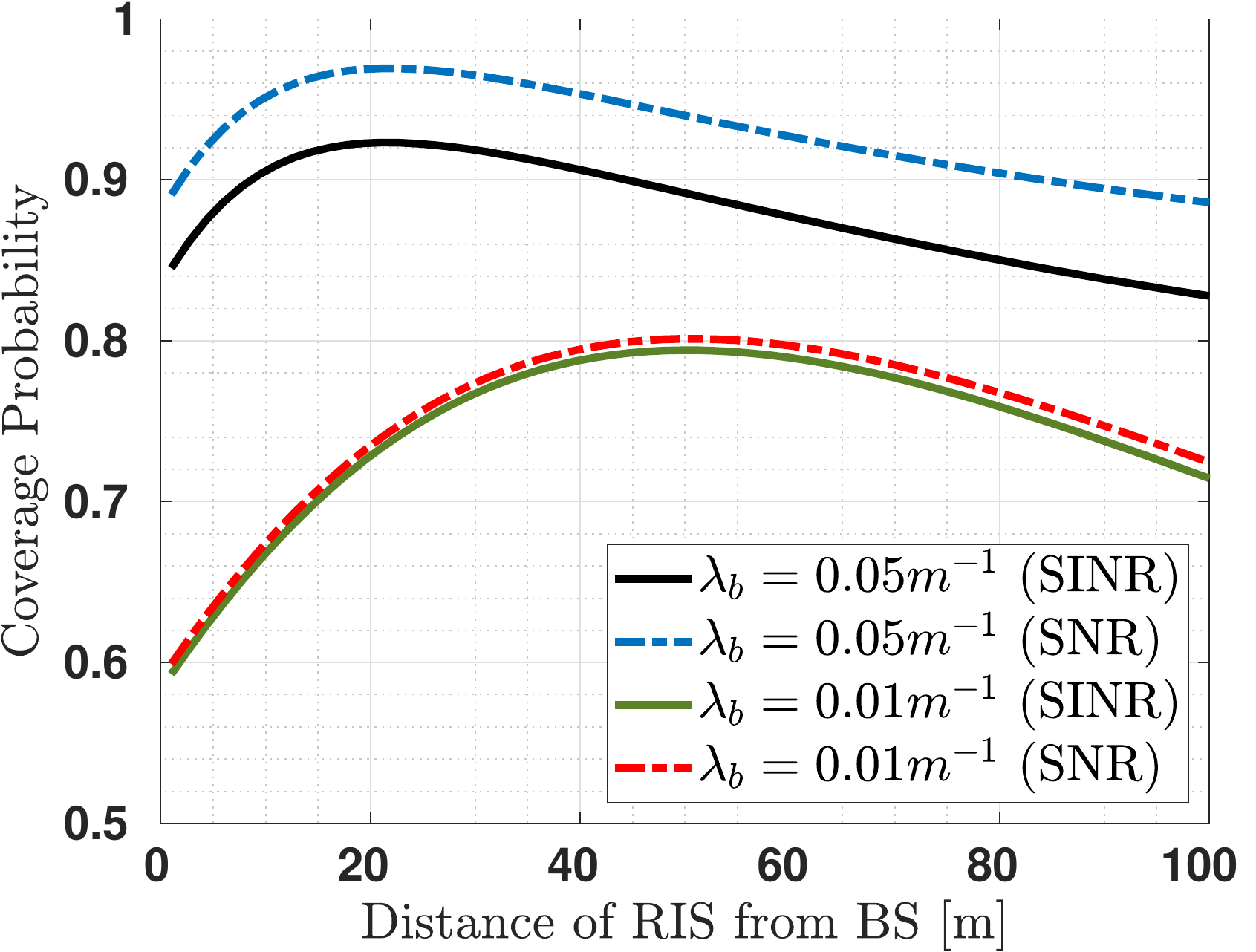}\ 
      {\small \bf (b)}\includegraphics[width = \figwidthSbS]{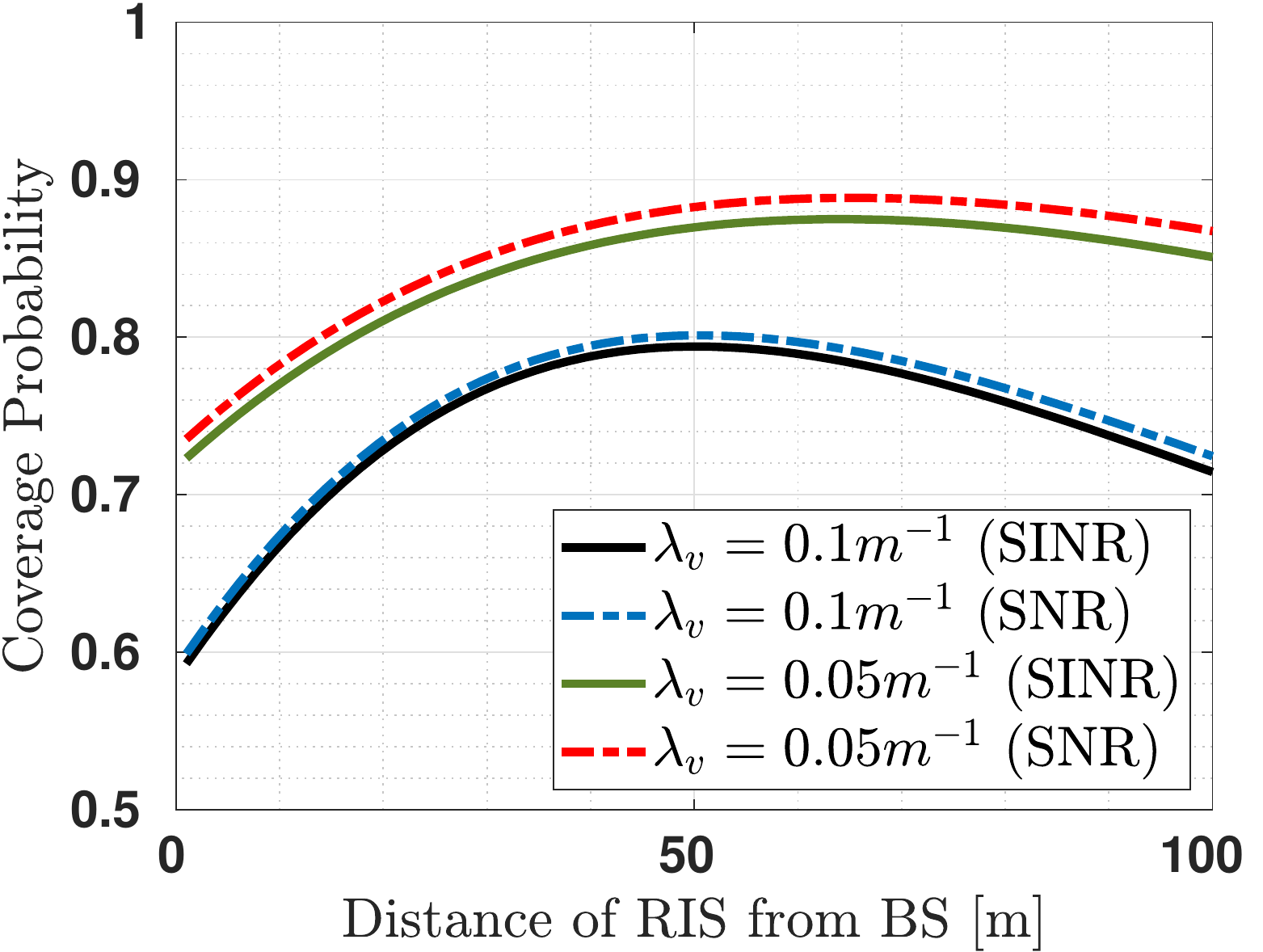}
    \caption{Variation of SINR coverage probability $\pcuI$ with the deployment distance of the RIS 
for the fixed distance deployment for (a) Variation in BS Density (b) Variation in Blockage Density (Here $\lambda_b = 0.05 m^{-1}$). The RISs are deployed at an optimal location with respect to the SINR.
}
    \label{fig:SNR_rsbsinterf}
\end{figure}

\begin{figure}[ht!]
\centering
     {\small \bf (a)} \includegraphics[width = \figwidthSbS]{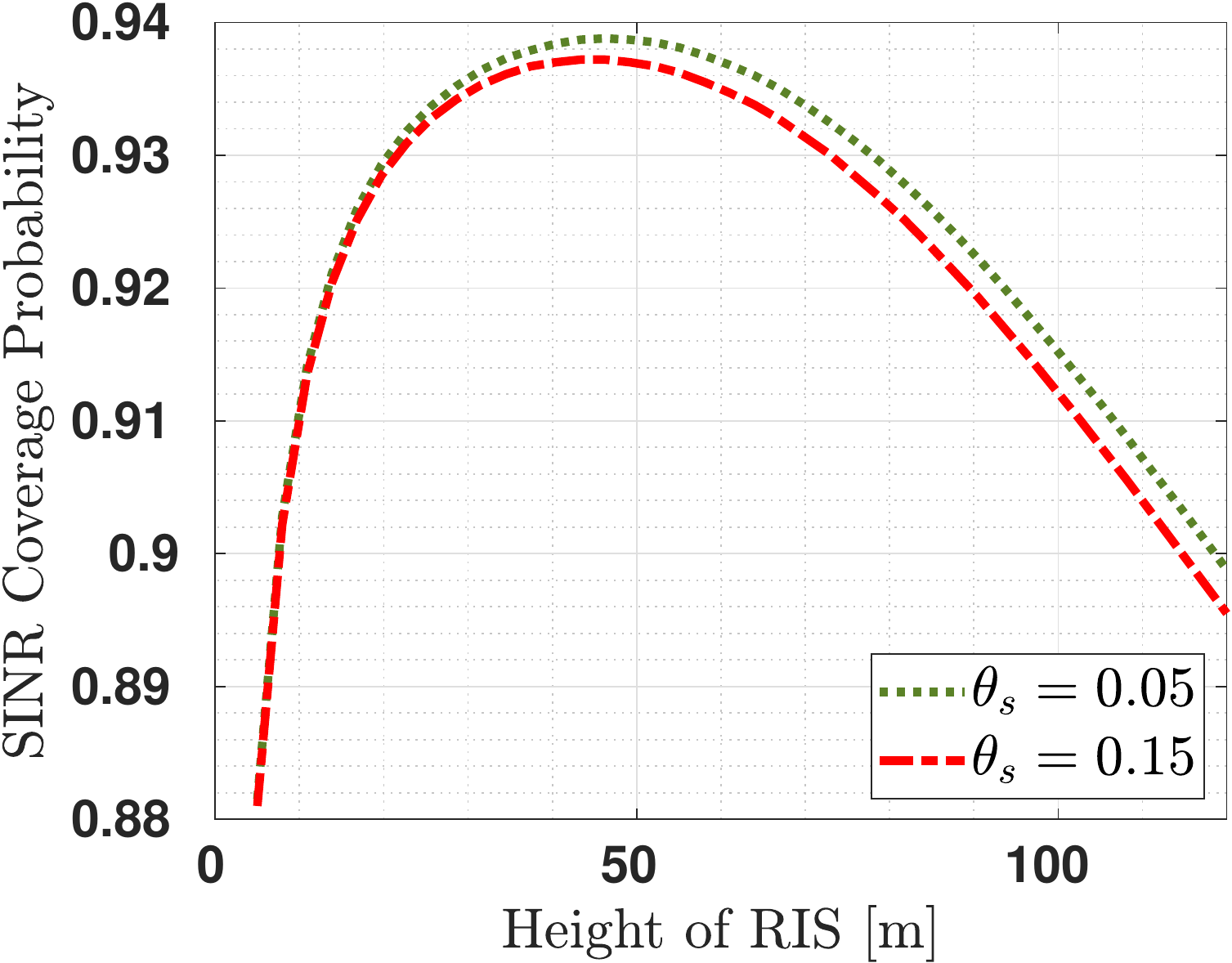}\ 
     {\small \bf (b)} \includegraphics[width = \figwidthSbS]{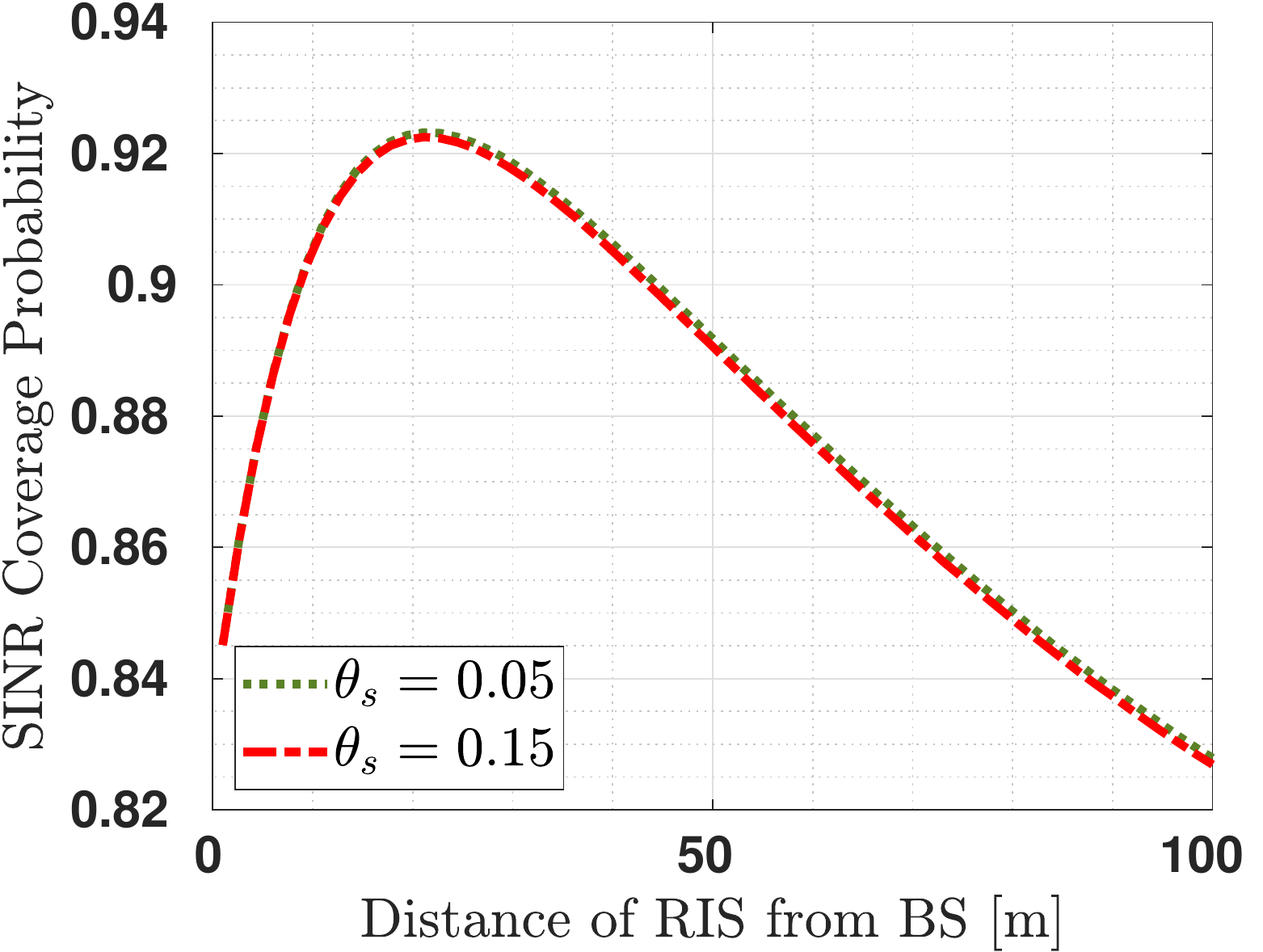}
    \caption{Variation of SINR coverage probability $\pcuI$ for two different values of $\theta_b$ 
for the fixed distance deployment for (a) Variation in Height of RIS (b) Variation in RIS Deployment Location.
}
    \label{fig:SNR_rsbsinterf2}
\end{figure}

{
\textbf{Impact of RIS beamwidth angle.}
Fig. \ref{fig:SNR_rsbsinterf2} shows the effect of the beamwidth $\theta_b$ of RIS on the SINR coverage probability. As expected, it can be observed that $\theta_b$ has very little effect on the coverage probabilities when plotted with respect to RIS height or deployment distance. 
}

\begin{figure}
    \centering
     {\small \bf (a)} \includegraphics[width = .45\linewidth,trim=30 20 70 40,clip]{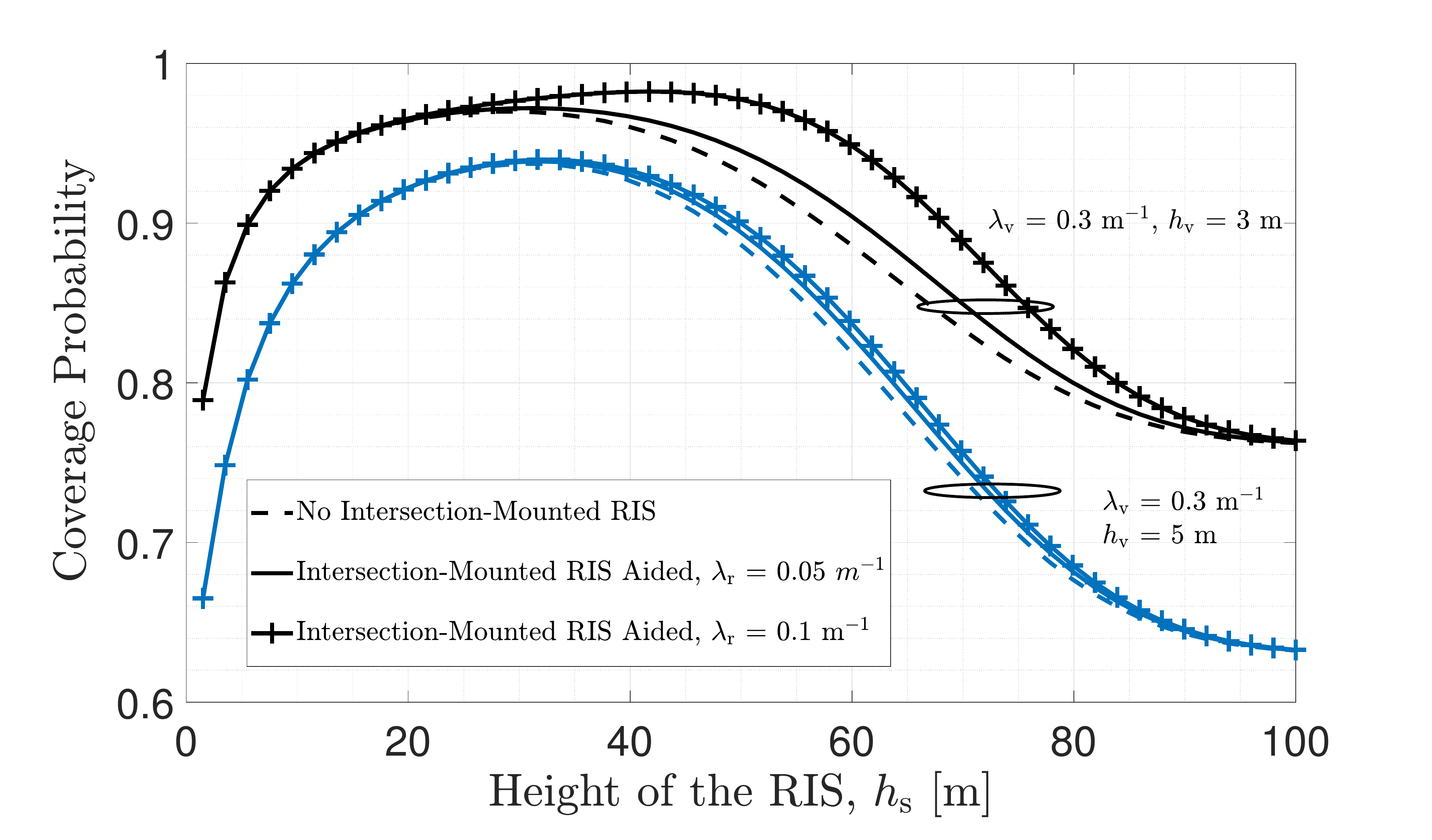}
     {\small \bf (b)} \includegraphics[width = .45\linewidth,trim=30 20 70 40,clip]{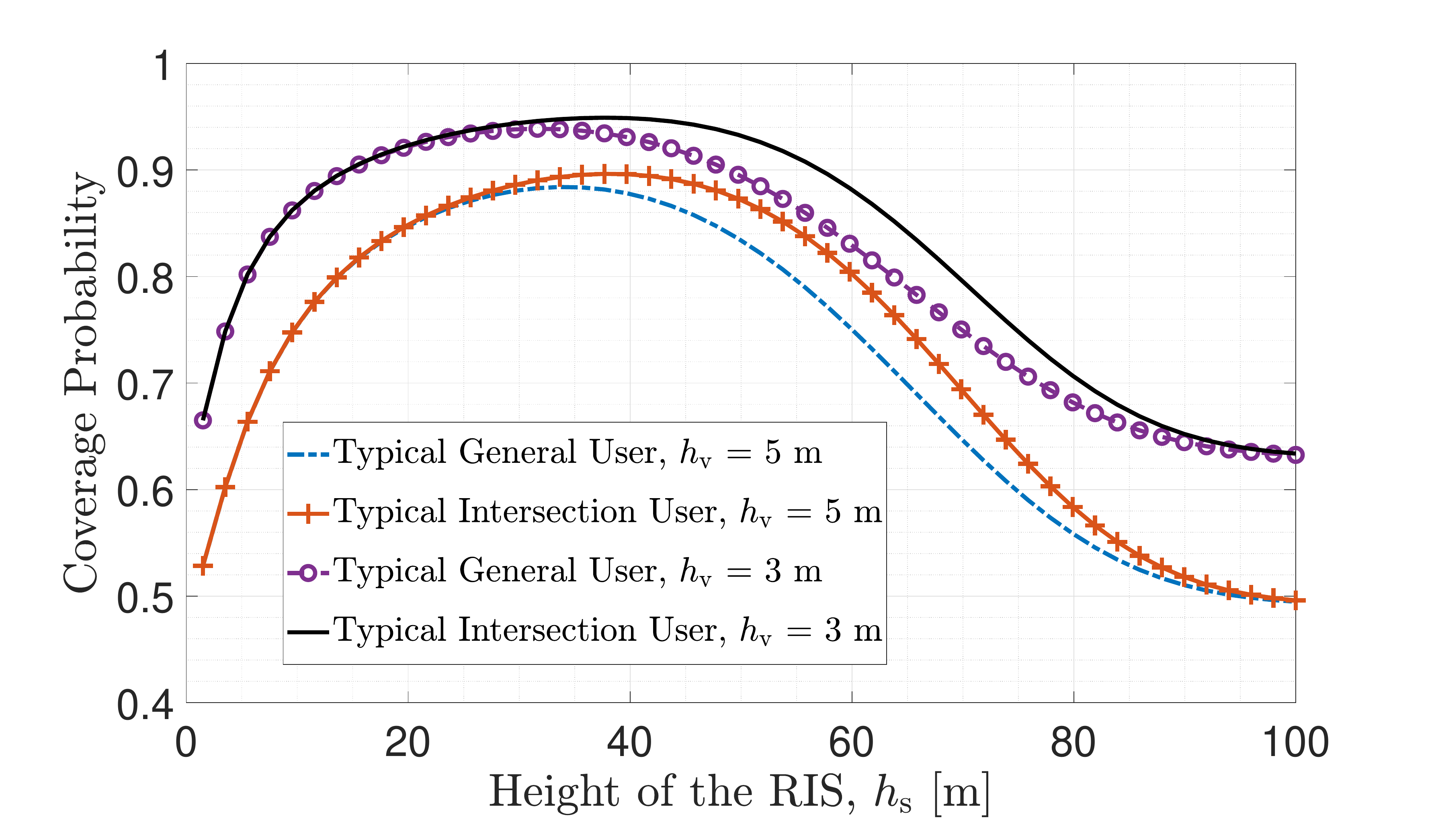}
    \caption{Variation of the SNR coverage probability with respect to the deployment height of the RIS. (a) Impact of the intersection-mounted RIS for typical general user and (b) Comparison between the typical general and intersection user for $\lambda_{\rm r}$ = 0.1 m$^{-1}$. }
    \label{fig:xmounted}
    \figspaceadjust
 \figspaceadjust
\end{figure}

\textbf{Impact of intersection-mounted {RIS}.}
Fig.~\ref{fig:xmounted}(a) shows the variation of SNR coverage probability of  the typical general user with RIS height in presence of intersection-mounted RISs. We can observe that leveraging an intersection-mounted RIS when the serving BS and its associated RIS are jointly blocked  increases the coverage probability. We see that the optimal height of RISs also increases in this case. We can observe that the gain obtained from the use of intersection RISs increases with road density which can be justified as below. For a dense road network (i.e., high $\lambda_{\rm r}$), the intersection-mounted RIS is near to the user, which enables a higher RIS deployment while still maintaining the user coverage. Also, with a higher density $\lambda_{\rm r}$ of the streets, the number of intersection-mounted-RISs increases, and hence the coverage is larger. Fig.~\ref
{fig:xmounted}(b) shows that since the link between the intersection user and an intersection-mounted RIS is always in the \ac{LOS} state, such a user benefits from a higher mounting of the RIS.

\begin{figure}[ht!]
    \centering
    \includegraphics[width = \figwidth,trim=10 10 10 40,clip]{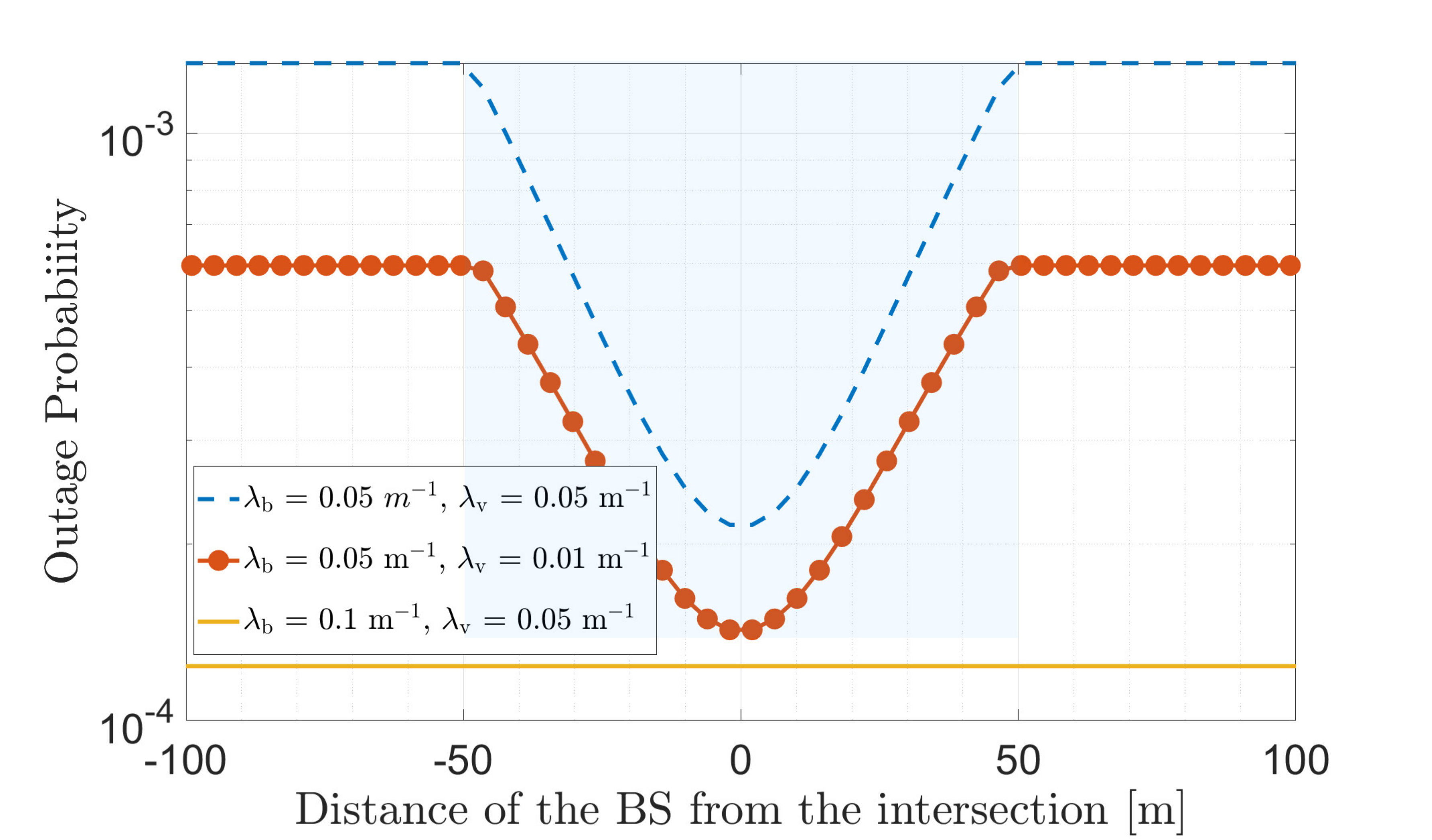}
    \caption{Outage probability of a user that selects either the intersection RIS or the associated RIS based on the higher received power. $\lambda_{\rm r}$ = 0.1 m$^{-1}$.}
    \label{fig:comp}
    \vspace{-0.1cm}
\end{figure}

\textbf{Comparison between association and intersection RISs.} 
For users associated to a BS that is close to an intersection, the intersection-mounted RIS might provide a higher signal power via a BS on the adjoining street as compared to the RIS coupled with the serving BS. In such cases, it might be worth bypassing the strongest BS association.
In Fig.~\ref{fig:comp}, we plot the outage probability of a tagged user uniformly located in the coverage area of a BS parameterized on the distance between the serving BS and the nearest intersection. For sparse deployments (i.e., low $\lambda_{\rm b}$), the distance, $d_{\rm bi}$, between the associated BS and the nearest intersection from the UE is large. Accordingly, the associated BS and the corresponding RIS provide statistically higher power as compared to the intersection RIS. On the contrary, when $d_{\rm bi}$ decreases, the intersection RIS may deliver a higher downlink power thereby increasing the coverage probability. Furthermore, recall that for low values of
$\lambda_{\rm b}$, the user can be far from the serving BS thereby
making the intersecting RIS a better choice for service. This results in a reduced outage of the users, as shown in the blue shaded region of~\ref{fig:comp}. The minimum outage is observed
when the associated BS is exactly located at the nearest intersection,
i.e., $d_{\rm bi} = 0$. However, for dense deployments (i.e., high $\lambda_{\rm b}$), the user is statistically closer to the serving BS and thus the associated BS and RIS association provides a higher downlink power as compared to the intersection mounted RIS. Consequently, the intersection mounted RIS association does not decrease the outage as seen from the figure for the case of $\lambda_{\rm b} = 0.1$ m$^{-1}$.

\section{Conclusions}

\label{sec:Con}
In this paper, we studied the optimal placement of RISs from the perspective of coverage. {Deploying RISs very close or very far from the BS increases the probability that the links from the user to the BS and to the RIS are simultaneously blocked. This results in worse coverage for users.  We showed that there is an optimal location inside the coverage region that improves the coverage for users on average. When RISs are deployed according to the cell radius unless the BS density is very low compared to the blockage density, RISs should be deployed at 100\% of the cell radius, and the coverage probability is maximized and the joint blockage probability is minimized at this value. However, for a low BS density, RISs should be deployed at nearly 70\% to 80\% of the cell radius depending upon the blockage density. We also observed that leveraging intersection-mounted RISs improves coverage. Also, the typical intersection user enjoys a higher coverage probability compared to the typical general user due to the existence of an LOS link from the former to the intersection-mounted RIS. Interestingly, for users associated to BSs that are deployed very close to an intersection, selecting an intersection-mounted RIS provides a higher coverage compared to selecting an RIS associated with the serving BS.}


\appendices

\section{Proof of Lemma \ref{lemma:optrsblock}}\label{app:optrsblock}

%

From the first optimality condition,
\begin{align*}
    \frac{\dd \jp{}}{\dd r_\ris}&=0 &
 \implies   \bseverityinv(\rseverityinv+2)e^{-(\rseverityinv-2) \lambda_\bs r_\ris}&=(\rseverityinv^2-4)e^{-a_1 \lambda_{\vehicles} r_\ris}+4(2+\bseverityinv-\rseverityinv).
\end{align*}
Now let 
$x=e^{-\bseverityinv \lambda_\bs r_\ris}$ to get
\begin{align*}
	x^{(\rseverityinv-2)/\bseverityinv}&=\frac{\rseverityinv-2}{\bseverityinv} x+4\frac{2+\bseverityinv-\rseverityinv}{  \bseverityinv(\rseverityinv+2)}
\end{align*}
which is the desired result.
The approximate solution can be found by approximating $e^{-y} \approx 1-y+y^2/2$ in the above equation.


\section{Proof of Theorem \ref{thm1}}\label{app:thm1}
The probability of the joint event that the user observes an \ac{LOS} link to the RIS (i.e., $\bar{\E}_\mathrm{s}$) and that the link to the BS is blocked (i.e., $\E_\mathrm{b}$), given that the user-BS horizontal distance is $r_\mathrm{bu}$ is given as 
$
     \mathbb{P}\left(\bar{\E}_\mathrm{s}, \E_\mathrm{b}\mid \distHorzBU\right) 
     = \mathbb{P}\left(\E_\mathrm{b}\mid  \distHorzBU \right) - \mathbb{P}\left({\E}_\mathrm{s}, \E_\mathrm{b}\mid  \distHorzBU\right)$. Therefore, 
     the coverage probability is
\begin{align}
    \pcu= \mathbb{E}_{\distHorzBU} \left[\mathbb{P}(S \geq \gamma\mid \distHorzBU )\ \mathbb{P}(\Bar{\E}_B\mid \distHorzBU) + \mathbb{P}(S' \geq \gamma \mid \distHorzBU)\  \mathbb{P}(\bar{\E}_\mathrm{s},\E_\mathrm{b}\mid \distHorzBU) \right] \nonumber.
    \end{align}
Now, substituting the values of SNRs $S$ and $S'$, we get
\begin{align}
 \pcu     = &  \mathbb{E}\left[\mathbb{P}\left(
     \frac{ \ell_\mathrm{b}(\distHorzBU) u_m v_m}{\noisen }  g_{\rm{bu}} 
      \geq \gamma \mid  \distHorzBU \right) \times \exp\left(-\lambda_\mathrm{v}\distHorzBU{h_\mathrm{v}}/{h_\mathrm{b}}\right)   \nonumber \right. \\
     &\left. + \mathbb{P}(\bar{\E}_\mathrm{s},\E_\mathrm{b}\mid \distHorzBU)  \times \mathbb{P}\left(
    \frac{ \ell_\mathrm{r}(\distHorzBU, r_{\rm s})u_m v_m}{\noisen }
    g_{\rm{su}} \geq \gamma \mid  \distHorzBU \right)\right] \nonumber .
\end{align}
Averaging over the fading random variables $g_{\rm{bu}}$ and $g_{\rm{su}}$, we get the result. 

\section{Proof of Lemma \ref{lem:intfD}} 
\label{app:lap}

From \eqref{eq:IRFD},  the Laplace transform can be written as,
\begin{align*}
& \mathcal{L}_{I(\PhiR)}(s) =  \mathbb{E}_{\PhiR}\left[\prod_{X_{i} \in \Phi \backslash \{B_{0}\}} \mathbb{E}_{g_{i}, U}\left[ \exp \left(- g_{i} \ell_\mathrm{b}(\|X_i\|)U_i v_m \cdot \mathbbm{1} \left(\|X_i\| < \frac{\dR h_{b}}{h_{v}} \right) \right) \right]\right],
\end{align*}
as $g_is$ are independent. Since $g_i$'s are Gamma distributed due to Nakagami fading, we know that
\begin{equation}
 \mathbb{E}_{g_{i}} \left[ \exp \left(- t g_i \right) \right] = \left(\frac{1}{1 + t}\right)^{n_0}.
\end{equation}
Hence, using the PGFL of a PPP, and after removing the indicator function by changing the limits of integration, we get 
\begin{align}
\mathcal{L}_{I(\PhiR)}(s) = \exp \left( -\lambda_\bs \left(\int_{r}^{\frac{\dR h_{\bs}}{h_{\vehicles}}} 1 - \expects{U}{\left(\frac{1}{1+ s \ell_\mathrm{b}(x)U v_m}\right)^{n_0}} \dd x \right) \right)
\end{align}
Now,  $U$ is a uniform \ac{RV} which takes values $u_m$ and $u_s$ with equal probability. Hence, we get
\begin{align}
\mathcal{L}_{I(\PhiR)}(s )&= 
\exp \left( -{\lambda_\bs}
 \left(\int_{r}^{\max \left(r, \frac{\dR h_\bs}{h_{\vehicles}}\right)}
 1- \frac12\left(\frac{1}{1+ s \ell(x)u_m v_m}\right)^{n_0} -\frac12\left(\frac{1}{1+ s \ell(x)u_s v_m}\right)^{n_0} \dd x \nonumber \right) \right).
\end{align}
Similarly, we can obtain the expression for $\mathcal{L}_{I(\PhiL)}(s)$ by replacing $v_m$ with $v_s$ since the user antennas are pointing away from these BSs. 

\section{Proof of Theorem \ref{thm:PcIFixed}} \label{app:PcIFixed}
The coverage probability is given as
\begin{align*}
\pcuI&=\expects{\distHorzBU,\dR,\dL}{\prob{\SINR>\gamma|\distHorzBU,\dR,\dL}}\\
&\stackrel{(a)}=\expects{\distHorzBU,\dR,\dL}{\prob{\SINR>\gamma,\A_\bs|\distHorzBU,\dR,\dL}}+\expects{\distHorzBU,\dR,\dL}{\prob{\SINR>\tau,\A_\ris|\distHorzBU,\dR,\dL}}\\&+\expects{\distHorzBU,\dR,\dL}{\prob{\SINR>\gamma,\overline{\A_\bs\cap\A_\ris}|\distHorzBU,\dR,\dL}}\\
&\stackrel{(b)}=\int_0^\infty\int_{\distHorzBU h_\vehicles/h_\bs}^\infty\int_0^\infty \prob{\SINR>\gamma|\distHorzBU,\dR,\dL} f_{\dR}(\dR)  f_{\dL}(\dL)f_{\distHorzBU}(\distHorzBU) \dd\dR\dd\dL\dd \distHorzBU\\
&+\int_{0}^{r_\ris}{\int_{(r_\ris-\distHorzBU) h_\vehicles/h_\ris}^\infty\int_0^{\distHorzBU h_\vehicles/h_\bs}} \prob{\SINR>\gamma|\distHorzBU,\dR,\dL} f_{\dR}(\dR)  f_{\dL}(\dL)f_{\distHorzBU}(\distHorzBU)\dd\dR\dd\dL\dd \distHorzBU\\
&+\int_{r_\ris}^\infty{\int_{0}^\infty\int_{(\distHorzBU-r_\ris) h_\vehicles/h_\ris}^{\distHorzBU h_\vehicles/h_\bs}} \prob{\SINR>\gamma|\distHorzBU,\dR,\dL}f_{\dR}(\dR)  f_{\dL}(\dL)f_{\distHorzBU}(\distHorzBU) \dd\dR\dd\dL\dd \distHorzBU
\end{align*}
where $(a)$ is due to fact that $\A_\bs$, $\A_\ris$ and $\overline{\A_\bs\cap\A_\ris}$ creates the partition of the sample space, $(b)$ is due to \eqref{eq:Ab}, \eqref{eq:As} and  the fact that $\SINR$ is 0 when $\overline{\A_\bs\cap\A_\ris}$. We now solve the three terms separately.

The first term denotes the coverage under $\A_\bs$ and  $\SINR=\sinrD$. Hence, from \eqref{eq:SI},
\begin{align*}
&\prob{\sinrD>\gamma|\distHorzBU,\dR,\dL}= 
\expects{I}{ 
		\prob{ 
				\frac{\ellBU(\distHorzBU) \gBU u_m v_m}{\noisen  + I} > \gamma |\distHorzBU, I, \dR, \dL
		 }} \\
&= \expects{I|\distHorzBU,\dR,\dL}{ \prob{\gBU  > \frac{\gamma \cdot (\noisen  + I) }{\ellBU(\distHorzBU)u_m v_m}|\distHorzBU, I, \dR, \dL} }\stackrel{(a)}= \expects{I|\distHorzBU,\dR,\dL}{ F_{n_0}{\left(\frac{\gamma \cdot (\noisen  + I) }{\ellBU(\distHorzBU)u_m v_m}\right)} }\\
&\stackrel{(b)}\le \expects{I|\distHorzBU,\dR,\dL}{ 
		\sum_{n=1}^{n_0}
			(-1)^{n+1} \binom{n_0}{n} 
			e^{-n \frac{\gamma \cdot (\noisen  + I) }{\ellBU(\distHorzBU)u_m v_m} \eta_m }
		 }\\
		 &=\sum_{n=1}^{n_0}
			(-1)^{n+1} \binom{n_0}{n} 
			e^{- \frac{ n \gamma \noisen \eta_m}{\ellBU(\distHorzBU)u_m v_m}  }
		\laplaces{I|\distHorzBU,\dR,\dL}{\textstyle \frac{n\gamma\eta_m  }{\ellBU(\distHorzBU)u_m v_m} }.
\end{align*}
Here $(a)$ is due to distribution of $\gBU$ and $(b)$ is due to Alzer's lemma. 
%
%

The second and third terms denote the coverage under $\A_\ris$ with $\SINR=\sinrV$ defined in  \eqref{eq:SIp}. Similar to the first term, it is given as 
\begin{align*}
&\prob{\sinrV>\gamma|\distHorzBU,\dR,\dL}= 
\expects{I}{ 
		\prob{ 
				   \frac{ \ell_{\mathrm{r}}(\distHorzBU, r_\ris) \gUR u_m v_m}{\noisen + I} > \gamma |\distHorzBU, I, \dR, \dL
		 }} \\
		 		 &=\sum_{n=1}^{n_0}
			(-1)^{n+1} \binom{n_0}{n} 
			e^{- \frac{ n \gamma \noisen \eta_m}{\ell_\mathrm{r}(\distHorzBU,r_\ris)u_m v_m}  }
		\laplaces{\intfV|\distHorzBU,\dR,\dL}{\textstyle \frac{n\gamma\eta_m  }{\ell_\mathrm{r}(\distHorzBU,r_\ris)u_m v_m} }.
\end{align*}
Substituting these back in the coverage probability expression, we get the desired result.

%
%

\section{Proof of Lemma \ref{lemma:condistr}}
\label{app:lemma_condistr}

\begin{figure}[ht!]
    \centering
    \includegraphics[width = 0.8\textwidth]{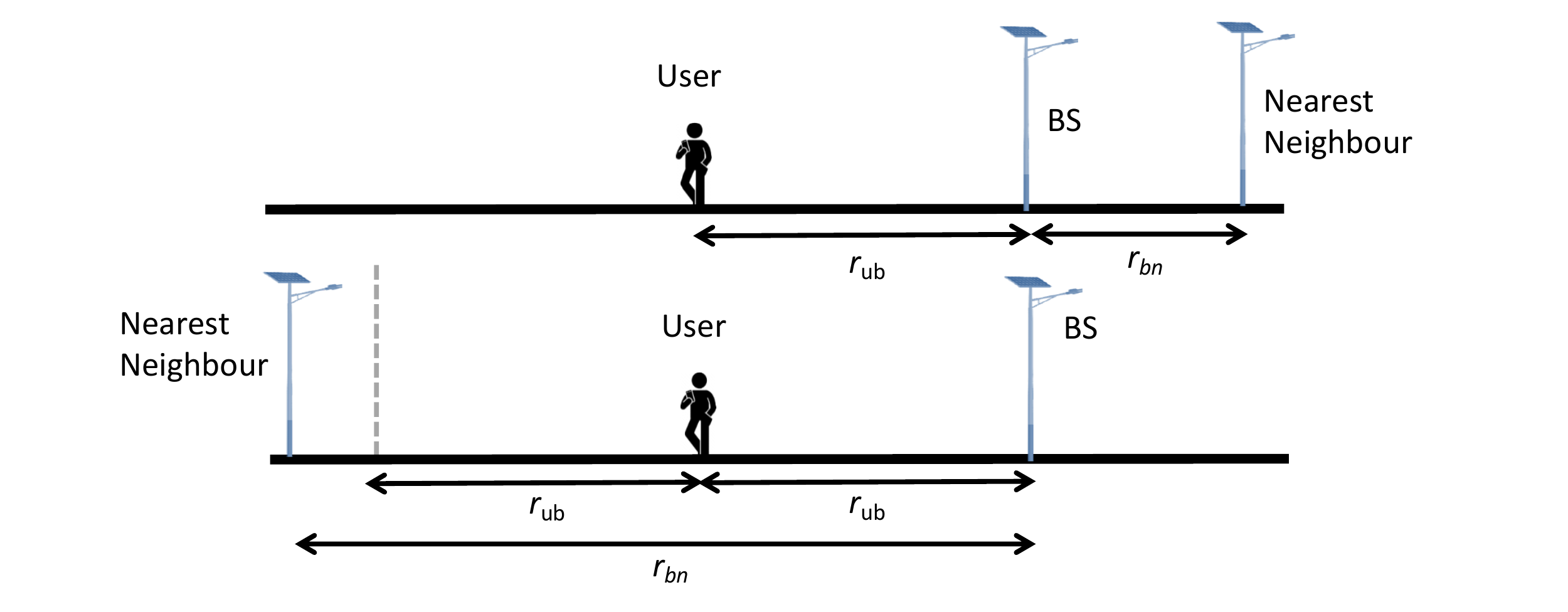}
    \caption{An illustration showing two possible positions of the nearest neighbor of the serving BS.}
    \label{fig:dnn}\figspaceadjust
\end{figure}

We first note that conditioned on the distance $\distHorzBU$ of the closest BS from the user,  no  other BS (including the  BS $X_1$ nearest to the serving BS $X_0$) can be present within the region $[-\distHorzBU,\distHorzBU]$ around the user. Hence, conditioned on $\distHorzBU$, rest of BSs form a PPP with density $\lambda(z)=\lambda_\bs \indside{z \notin [-\distHorzBU,\distHorzBU]  }$. 
Hence,
 \begin{align}
 &\mathbb{P}(\distHorzNN > y | \distHorzBU = x) \nonumber\\
 &= \prob{\text{There is no BS in }[\distHorzBU-y,\distHorzBU+y| \text{the closest BS } X_0 \text{ is at }\distHorzBU}\nonumber\\
 &\stackrel{(a)}=\expS{-\int \lambda(z) \indside{z\in \distHorzBU-y,\distHorzBU+y}  \dd z}\nonumber\\
 &=\expS{-\int \lambda_\bs \indside{z \notin [-\distHorzBU,\distHorzBU]  }\indside{z\in [\distHorzBU-y,\distHorzBU+y]}  \dd z}\nonumber\\
 &=\expS{-\lambda_\bs |[\distHorzBU-y,\distHorzBU+y]\setminus  [-\distHorzBU,\distHorzBU] |}\label{eq:app:Frbn}
\end{align}
where the $(a)$ step is due to void probability of PPP. 

Now if  $0\leq y \le 2\distHorzBU$, then $[\distHorzBU-y,\distHorzBU+y]\setminus  [-\distHorzBU,\distHorzBU]=[\distHorzBU,\distHorzBU+y]$.

If  $\leq y > 2\distHorzBU$, then $[\distHorzBU-y,\distHorzBU+y]\setminus  [-\distHorzBU,\distHorzBU]=[\distHorzBU-y,-\distHorzBU]\cup[\distHorzBU,\distHorzBU+y]$

Substituting these in \eqref{eq:app:Frbn}, we get the desired result.

\begin{acronym}
    \acro{4G}{fourth generation}
    \acro{5G}{fifth generation}
    \acro{AoA}{angle of arrival}
   	\acro{AoD}{angle of departure}
    \acro{BCRLB}{Bayesian CRLB}
    \acro{BS}{base station}
	\acro{CDF}{cumulative density function}
    \acro{CF}{closed-form}
    \acro{CoMP}{coordinated multipoint}
    \acro{CRLB}{Cramer-Rao lower bound}
    \acro{EI}{Exponential Integral}
    \acro{eMBB}{enhanced mobile bstreetband}
    \acro{FIM}{Fisher Information Matrix}
    \acro{GPS}{global positioning system}
    \acro{GNSS}{global navigation satellite system}
    \acro{HetNet}{heterogeneous network}
    \acro{IoT}{internet-of-things}
    \acro{LOS}{line-of-sight}
    \acro{LTE}{long term evolution}
    \acro{MAB}{multi-armed bandits}
    \acro{MBS}{macro base station}
    \acro{MIMO}{Multiple Inputs Multiple Outputs}
    \acro{mm-wave}{millimeter wave}
    \acro{mMTC}{massive machine-type communications}
    \acro{MS}{mobile station}
    \acro{MVUE}{minimum-variance unbiased estimator}
    \acro{NLOS}{non line-of-sight}
    \acro{OFDM}{orthogonal frequency division multiplexing}
    \acro{PDF}{probability density function}
    \acro{PGF}{probability generating functional}
    \acro{PLCP}{Poisson line Cox process}
    \acro{PLT}{Poisson line tessellation}
    \acro{PLP}{Poisson line process}
    \acro{PPP}{Poisson point process}
    \acro{PV}{Poisson-Voronoi}
    \acro{QoS}{Quality of service}
    \acro{RAT}{radio access technique}
    \acro{RIS}{reconfigurable intelligent surface}
    \acro{RSSI}{received signal-strength indicator}
    \acro{RV}{random variable}
    \acro{SBS}{small cell base station}
   	\acro{SINR}{signal-to-interference-plus-noise ratio}
    \acro{SNR}{signal-to-noise ratio}
    \acro{TN}{terminal node}
    \acro{TS}{Thompson sampling}
	\acro{ULA}{uniform linear array}
	\acro{UE}{user equipment}
 	\acro{URLLC}{ultra-reliable low-latency communications}
    \acro{V2V}{vehicle-to-vehicle}  
\end{acronym}


\bibliography{references.bib}
\bibliographystyle{IEEEtran}

\end{document}